\documentclass[10pt]{article}
\usepackage{fullpage}
\usepackage[hidelinks]{hyperref}
\usepackage{booktabs}
\usepackage{amsfonts}
\usepackage{amssymb}
\usepackage{tikz}
\usepackage{amstext}
\usepackage{amsmath}
\usepackage{xspace}
\usepackage{theorem}
\usepackage{graphicx}
\usepackage{mdframed}
\usepackage{comment}
\usepackage{mathdots}
\usepackage{xcolor}
\usepackage{graphics}
\usepackage{colordvi}
\usepackage{color}
\usepackage{hyperref}
\usepackage{paralist}
\usepackage{bm}
\usepackage{mathpazo}
\usepackage{cleveref}
\usepackage{float}
\usepackage{kbordermatrix}
\usepackage[ruled,vlined]{algorithm2e}

\newtheorem{theorem}{Theorem}
\newtheorem{conjecture}{Conjecture}

\newtheorem{lemma}{Lemma}
\newtheorem{observation}{Observation}

\newtheorem{corollary}{Corollary}
\newtheorem{claim}{Claim}
\newtheorem{proposition}[theorem]{Proposition}

\newtheorem{definition}{Definition}

\newtheorem{remark}{Remark}

\newcommand{\lleft}{\mathsf{left}}
\newcommand{\rright}{\mathsf{right}}

\newcommand{\lset}{{\mathcal L}}
\newcommand{\rset}{{\mathcal R}}

\newcommand{\pset}{{\mathcal P}}

\newcommand{\nidia}[1]{\textcolor{blue}{#1}}
\newcommand{\opt}{{\sf OPT}\xspace}

\newenvironment{proof}{\par \noindent{\bf Proof:}}{\hfill\stopproof}
\def\stopproof{\square}
\def\square{\vbox{\hrule height.2pt\hbox{\vrule width.2pt height5pt \kern5pt
\vrule width.2pt} \hrule height.2pt}}

        {\hspace*{\fill}$\Box$\par\vspace{4mm}}

\newcommand{\gex}{{\sf gex}}

\usepackage{authblk}
\title{Improved Pattern-Avoidance Bounds for Greedy BSTs via Matrix Decomposition}

\author[1]{Parinya Chalermsook}
\author[2]{Manoj Gupta}
\author[1]{Wanchote Jiamjitrak}
\author[1]{Nidia Obscura Acosta}
\author[2]{Akash Pareek}
\author[1]{Sorrachai Yingchareonthawornchai}
\affil[1]{Aalto University,  \{parinya.chalermsook, wanchote.jiamjitrak\\nidia.obscuraacosta, sorrachai.yingchareonthwornchai\}@aalto.fi}
\affil[2]{IIT Gandhinagar, \{gmanoj,pareek\_akash\}@iitgn.ac.in}

\date{}

\begin{document}
\maketitle

\begin{abstract}
Greedy BST (or simply Greedy) is an online self-adjusting binary search tree defined in the geometric view ([Lucas, 1988; Munro, 2000; Demaine, Harmon, Iacono, Kane, Patrascu, SODA 2009). Along with Splay trees (Sleator, Tarjan 1985), Greedy is considered the most promising candidate for being dynamically optimal, i.e., starting with any initial tree, their access costs on any sequence is conjectured to be within $O(1)$ factor of the offline optimal. 
However, despite having received a lot of attention in the past four decades, the question has remained elusive even for highly restricted input.  

In this paper, we prove new bounds on the cost of Greedy  in the ``pattern avoidance'' regime. Our new results include:
\begin{itemize}
 
 \item The (preorder) traversal conjecture for Greedy  holds up to a factor of $O(2^{\alpha(n)})$, improving upon the bound of $2^{\alpha(n)^{O(1)}}$ in (Chalermsook et al., FOCS 2015) where $\alpha(n)$ is the inverse Ackermann function of $n$. This is the best known bound obtained by any online BSTs. 
  
 \item We settle the postorder traversal conjecture  for Greedy. Previously this was shown for Splay trees only in certain special cases (Levy and Tarjan, WADS 2019). 
 
 \item The deque conjecture for Greedy  holds up to a factor of $O(\alpha(n))$, improving upon the bound $2^{O(\alpha(n))}$ in (Chalermsook, et al., WADS 2015). This is arguably ``one step away'' from the bound $O(\alpha^*(n))$ for Splay trees (Pettie, SODA 2010).
  
  \item The split conjecture holds for Greedy up to a factor of $O(2^{\alpha(n)})$. Previously the factor of $O(\alpha(n))$ was shown for Splay trees only in a special case (Lucas, 1988).
\end{itemize}
The input sequences in traversal and deque conjectures are perhaps  ``easiest'' in the pattern-avoiding input classes and yet among the most notorious  special cases of the dynamic optimality conjecture. 
Key to all these results is to partition (based on the input structures) the execution log of Greedy into several simpler-to-analyze subsets for which classical forbidden submatrix bounds can be leveraged. We believe that this simple method will find further applications in doing amortized analysis of data structures via extremal combinatorics. 
Finally, we show the applicability of this technique to handle a class of increasingly complex pattern-avoiding input sequences, called {\em $k$-increasing sequences}.  

As a bonus, we  discover a new class of permutation matrices whose extremal bounds are polynomially bounded. This gives a partial progress on an open question by Jacob Fox (2013). 

\end{abstract}

\clearpage 

\setcounter{page}{1} 

\section{Introduction}

The dynamic optimality conjecture  postulates that there exists an online binary search tree (BST) whose cost to serve any input sequence (search, insert, delete) is at most the optimal offline cost of any binary search tree. The two most promising candidates for being dynamically optimal are Splay trees ~\cite{sleator1985self} and Greedy~\cite{demaine2009geometry,munro2000competitiveness,lucas1988canonical}.
Despite continuing efforts for many decades (see, e.g.,the surveys and monographs~\cite{iacono2013pursuit,kozma2016binary,chalermsook2016landscape}), the conjecture remains wide open even for highly restricted corollaries of the conjecture.   We describe some of the most important conjectures that fall under the regime of pattern avoidance: Splay trees and Greedy satisfy preorder traversal, postoder traversal, deque and split properties where these properties are defined below. 

\vspace{0.1in} 

\begin{minipage}{0.9\textwidth} 
\begin{mdframed}
{\bf (Preorder) Traversal Property: } An online BST satisfies \textit{preorder traversal property} if, starting with any initial BST on $n$ keys, for any input $X = (x_1,\ldots, x_n) \in [n]^n$ obtained by \textit{preorder} traversal of (potentially distinct) binary search tree $R$ on $[n]$, it searches $X$ with $O(n)$ cost.\footnote{More formally, $x_t$ is searched at time $t$ for all $t \in [n]$.} 
\end{mdframed}
\end{minipage}

\vspace{0.1in} 

\begin{minipage}{0.9\textwidth} 
\begin{mdframed}
{\bf (Postorder) Traversal Property: } An online BST satisfies \textit{postorder traversal property} if, starting with any initial BST on $n$ keys, for any input $X = (x_1,\ldots, x_n) \in [n]^n$ obtained by \textit{postorder} traversal of (potentially distinct) binary search tree $R$ on $[n]$, it searches  $X$ with $O(n)$ cost.
\end{mdframed}
\end{minipage}

\vspace{0.1in}
\begin{minipage}{0.9\textwidth} 
\begin{mdframed}
{\bf Deque Property:} An online BST satisfies \textit{deque property} if, for $m>n$, starting with any initial BST on $n$ keys, can serve $m$  operations {\sc InsertMin}, {\sc InsertMax}, {\sc DeleteMin}, {\sc DeleteMax} in $O(m)$ time.
\end{mdframed}
\end{minipage}

\vspace{0.1in} 

\begin{minipage}{0.9\textwidth} 
\begin{mdframed}
{\bf Split Property:}  An online BST satisfies \textit{split property} if,  starting with any initial tree with $n$ keys, serves a sequence of $n$ {\sc Split} operations in time $O(n)$ where the operation {\sc Split}($i$) moves $i$ to the root and then deletes it, leaving two independent binary search (split) trees.
\end{mdframed}
\end{minipage}

\vspace{0.1in} 

This paper focuses on Greedy's bounds for such corollaries through the lens of ``pattern avoidance'' (to be made precise later). Each of them is of independent interest and therefore has received a lot of attention in the literature. Resolving these conjectures (especially the preorder conjecture) is considered the ``simplest'' step of dynamic optimality conjecture and yet has so far resisted attempts for past three decades. 

\paragraph{Bounds on Splay Trees.} For deque property,  Splay trees have been shown to cost at most $O(m \alpha(n))$ by Sundar~\cite{sundar1992deque} and later $O(m \alpha^*(n))$ by Pettie~\cite{pettie2007splay}. It has remained open whether Splay's cost is $o(n\log n)$ for preorder and postorder traversals.  Special cases when we start inserting preorder or postorder sequence $X$ from an empty-initial tree were resolved recently by Levy and Tarjan~\cite{levy2019splaying}.  Lucas~\cite{lucas1992competitiveness} showed that the split costs $O(n\alpha(n))$ in Splay trees when the initial tree is a path.  

\paragraph{Bounds on Greedy.} The bounds known for Greedy are generally better than the Splay's counterpart (except for deque). For deque property, Greedy is known to cost $m 2^{O(\alpha(m,m+n))}$~\cite{DBLP:journals/corr/ChalermsookG0MS15a}. For both preorder and postorder traversal sequences, Greedy is known to cost at most $n2^{\alpha(n)^{O(1)}}$
 ~\cite{chalermsook2015pattern}. We are not aware of published results for Split conjecture. Greedy algorithm is formally defined in \Cref{section: short-prelims}.

\paragraph{Remark.} One can ask popular conjectures of BST in two settings: (1) when the initial BST can be pre-processed or (2) when it cannot be pre-processed. There is a gap in our understanding of these two settings. For example, it is not known if Greedy's cost for preorder traversal is same in both the settings. In setting (1), Chalermsook et al. \cite{chalermsook2015pattern} showed that Greedy takes $O(n)$ for the preorder traversal. One can also solve this problem using the ideas in \cite{iacono2016weighted}. If preprocessing is allowed, then Splay trees cost $O(n)$ for the preorder traversal \cite{ChaudhuriH93}. We will consider setting (2) in this paper. We also note that Multi-Splay trees  satisfy deque property \cite{multisplay}.

\paragraph{Broader context: Amortized analysis and forbidden submatrix theory.} 
Resolving these conjectures represents a small part of a much broader algorithmic challenge in amortized analysis of online algorithms/data structures. 
Amortized analysis is typically done via potential function method, which would be easier when the algorithm designer is allowed to tailor an algorithm towards a tentative analytical method they have in mind. However, in the context of analyzing Greedy or Splay, the algorithms are already fixed in advance (e.g. these are the algorithms that tend to work well in practice), so we have no control on the ``design'' part. In such cases, the state-of-the-art understanding of potential function design is much more adhoc and mostly tailored to specific cases. Indeed, there has been no systematic, efficient way known for the task of designing a potential function, when an algorithm is fixed in advance. The fact that the aforementioned conjectures have remained open for decades clearly underlines the lack of understanding on this front.

Extremal combinatorics methods (such as forbidden submatrix theory) have been used successfully in amortized analysis as an alternative to potential function design. 
In the context of binary search trees, such attempts were pioneered by Pettie~\cite{pettie2007splay,pettie2010applications} and more recently extended by~\cite{DBLP:journals/corr/ChalermsookG0MS15a,chalermsook2015pattern}.  Informally speaking, one can encode an execution log of Greedy as a binary matrix. It is a non-trivial fact that, since the input is restricted, the execution log of Greedy is also restricted. When the execution log is restricted, it cannot have too many 1's in the matrix, and thus we can apply the extremal bounds from forbidden submatrix theory as a black box.

More precisely, forbidden submatrix theory is a collection of theorems of the form: Let $\pi$ be a matrix (pattern), and ${\sf ex}(n, \pi)$ denotes the extremal bound which equals the maximum number of $1$s in any $n$-by-$n$ $0/1$ matrix that \textit{avoids} pattern $\pi$ (a matrix $M$ \textit{contains pattern} $\pi$ if it is possible to obtain $\pi$ from $M$ by removing rows, columns, and turning ones into zeroes; otherwise, we say that $M$ avoids $\pi$), see \Cref{fig:pattern} for illustration. Studying behavior of extremal functions for various matrices $\pi$ have been a fruitful area of research in extremal combinatorics. 

\begin{figure}[ht]
  \centering
  \includegraphics[scale=0.27]{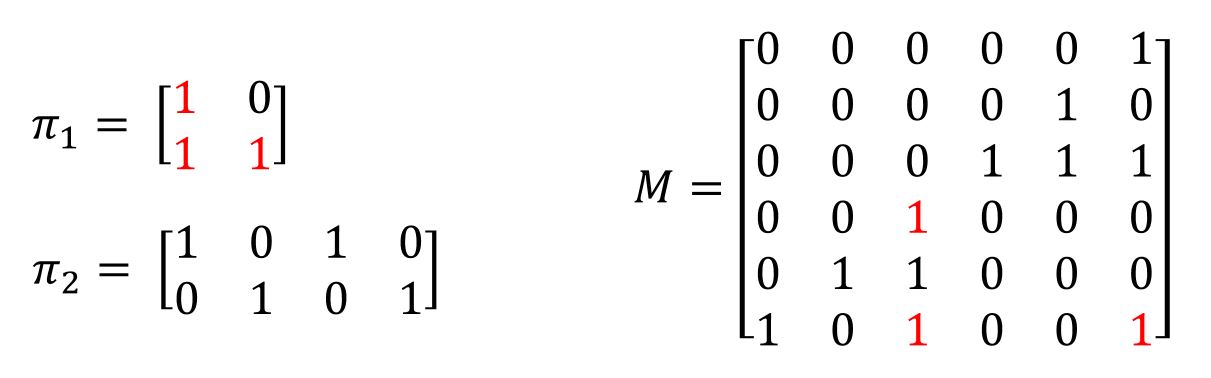}
  \caption{An example that $M$ contains pattern $\pi_1$, but avoids $\pi_2$.\label{fig:pattern}} 
\end{figure}

Let $X \in [n]^n$ be an input sequence. Denote by $G_X$ the matrix that ``encodes'' the execution log of Greedy, that is, $G_X(i,j) =1$ if and only if key $i$ is touched by Greedy at time $j$, implying that the number of $1$s in $G_X$ (denoted by $|G_X|$) is equal to the cost of the algorithm. The connection between BSTs and the theory of forbidden matrices (see, e.g., in~\cite{DBLP:journals/corr/ChalermsookG0MS15a,chalermsook2015pattern}) relies on a ``reduction statement'', which says that if $X$ avoids a pattern $\pi$ of size $k$, then the Greedy matrix $G_X$ avoids a (tensored) pattern $\pi'$ of size $3k$. Therefore, existing extremal bounds can be immediately used to upper bound $|G_X|$.  Indeed, for preorder and postorder traversals (that avoid patterns of size $3$), $G_X$ avoids a pattern of size $9$. The bound of $n 2^{\alpha(n)^{O(1)}}$ follows from this generic reduction. Here, we state the known reductions. 

\begin{lemma} [\cite{DBLP:journals/corr/ChalermsookG0MS15a,chalermsook2015pattern}] \label{lem:overall}
Let $Q_0 = \kbordermatrix{
     &  &  & & &  \\
      & 1 & &1 && 1 \\ 
    &   & 1& & 1 &}$, and let  $Q_1$ and $Q_2$ be the following matrices.
$$ Q_1 = \kbordermatrix{
     &  &  & & &  &  &  & &   \\
     &  & 1 & & &  &  &  & &   \\
     & 1 &  &1 & &  &  &  & &  \\
     &  &  & & &  &  &  &1 &   \\
     &  &  & & &  &  & 1 & & 1  \\
     &  &  & & &  1&  &  & &   \\
     &  &  & &1 &  & 1 &  & &  }, \mbox{ and } 
     Q_2 = \kbordermatrix{
     &  &  & & &  &  &  & &   \\
     &  &  & & &  1&  &  & &   \\
     &  &  & &1 &  & 1 &  & & \\
     &  &  & & &  &  &  &1 &   \\
     &  &  & & &  &  & 1 & & 1  \\
       &  & 1 & & &  &  &  & &   \\
     & 1 &  &1 & &  &  &  & &   
     }. $$
  (we omit zero entries for clarity).
 \begin{itemize} 
 \item If $X$ is delete-only deque sequence, then $G_X$ avoids    $Q_0$. Therefore, $|G_X| \leq {\sf ex}(n,Q_0) = O(n2^{\alpha(n)})$.
 \item If $X$ is preorder traversal, then $G_X$ avoids $Q_1$. Therefore, $|G_X| \leq {\sf ex}(n,Q_1) \leq n2^{\alpha(n)^{O(1)}}$. 
 \item If $X$ is postorder traversal, then $G_X$ avoids $Q_2$. Therefore, $|G_X| \leq {\sf ex}(n,Q_2) \leq n2^{\alpha(n)^{O(1)}}$. 
\end{itemize}
\end{lemma}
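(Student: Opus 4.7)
The plan is to derive each of the three avoidance claims from the geometric view of Greedy (Demaine--Harmon--Iacono--Kane--Patrascu) and then invoke the cited extremal bounds as black boxes. In the geometric view, the input is encoded as a $0/1$ matrix $M_X$ with $M_X(k,t)=1$ iff $x_t=k$, and $G_X$ is the canonical arborally-satisfied superset of $M_X$ produced column-by-column: when processing time $t$, Greedy turns on every cell $(k,t)$ required to kill a previously unresolved axis-aligned rectangle having $(x_t,t)$ as one corner and an earlier input point as the opposite corner. This reduces each avoidance claim to a purely combinatorial analysis of the greedy saturation procedure.

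First I would record the ``small'' forbidden input patterns imposed by the restriction on $X$. For preorder traversal of a BST $R$, $M_X$ is a permutation matrix avoiding the permutation $231$; for postorder traversal it avoids $312$. Both statements are classical and are proved by induction on the recursive decomposition of $R$. For the delete-only deque, each access is to the current extremal key, so once a key is surpassed on both sides it can never reappear --- a still stronger ``no zig-zag'' structure on $M_X$.

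The technical heart is then a reduction statement: if $M_X$ avoids a size-$k$ pattern $\pi$, then $G_X$ avoids a size-$3k$ tensored pattern $\pi'$ obtained by replacing every row of $\pi$ with three rows that represent the original access together with its two possible Greedy-induced ``shadows'' (left and right). The underlying geometric lemma is that every Greedy touch at $(k,t)$ is witnessed either by an input access at $(k,t)$ itself or by a pair of earlier input accesses flanking $k$ from both sides at a common earlier time; hence any occurrence of $\pi'$ in $G_X$ can be back-traced, column by column, to an occurrence of $\pi$ in $M_X$. Substituting in the $231$, $312$, and deque base patterns produces exactly $Q_1$, $Q_2$, and $Q_0$.

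I expect the main obstacle to be precisely this case analysis: for each row of $\pi'$ one must decide which of the three roles (access, left-shadow, right-shadow) it plays and verify that the three row-types disentangle without collapsing different occurrences of the base pattern. Once the reduction is in hand, the numerical estimates follow from the literature: ${\sf ex}(n,Q_0)=O(n\cdot 2^{\alpha(n)})$ is the Hart--Sharir bound for Davenport--Schinzel sequences of order $3$, while ${\sf ex}(n,Q_1),\ {\sf ex}(n,Q_2)\le n\cdot 2^{\alpha(n)^{O(1)}}$ follow from Pettie's generalized Davenport--Schinzel-type extremal theorems for matrices built by tensoring zig-zag blocks.
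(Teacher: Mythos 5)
This lemma is stated in the paper purely as a citation to prior work (\cite{DBLP:journals/corr/ChalermsookG0MS15a,chalermsook2015pattern}); the paper does not give a proof of it, so there is no ``paper's own proof'' to compare against line by line. What I can do is check your reconstruction against the explicit gadgets the paper does state in its preliminaries and against the shapes of $Q_0$, $Q_1$, $Q_2$ themselves.

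Your overall plan --- encode Greedy geometrically, argue that each touched point is ``witnessed'' by input points via a small capture gadget, tensor the input's forbidden pattern with that gadget, and invoke extremal bounds as a black box --- is indeed the route taken in the cited work for the two traversal cases. But two of your concrete claims need repair. First, the capture gadget is two rows by three columns, not three rows: the Generic Capture Gadget in the paper says a touched point together with two flanking touched points on a single common row below it reveals an input inside the spanned rectangle. Tensoring a length-$3$ permutation with this $2\times 3$ gadget yields a $6\times 9$ matrix with $9$ ones, which is exactly the shape of $Q_1$ and $Q_2$. Your description (replace each row of $\pi$ with three rows carrying the access and a left and right ``shadow'' on separate rows) would produce a $9\times 9$ pattern, which is not what the lemma asserts.

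More substantively, $Q_0$ does not fit the tensoring framework at all. It is a flat $2\times 5$ alternation pattern, far smaller than what tensoring any size-$3$ base pattern can produce. If you applied your generic reduction to the size-$3$ patterns avoided by delete-only deque input (namely $(2,3,1)$ and $(2,1,3)$), you would obtain a size-$9$ tensored pattern and the weaker bound $n 2^{\alpha(n)^{O(1)}}$, not $O(n 2^{\alpha(n)})$. The deque avoidance of $Q_0$ in \cite{DBLP:journals/corr/ChalermsookG0MS15a} comes from a separate, sharper structural argument specific to deque sequences; your proposal as written would not establish the first bullet of the lemma. A minor further point: the paper takes postorder traversal sequences to avoid $(1,3,2)$, not $(3,1,2)$; by the flip symmetry $\gex(n,P)=\gex(n,P^{\mathrm{flip}})$ this does not change the asymptotics, but $Q_2$ as displayed is built from $(1,3,2)$.
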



\paragraph{Barrier for Improvements.}
Given that \Cref{lem:overall} provides near optimal deque and traversal sequences for Greedy up to a factor of $2^{\alpha(n)^{O(1)}}$,  it is an intriguing open question to further extend this technique to settle deque and/or traversal conjectures for Greedy. As suggested by \cite{chalermsook2015pattern}, the improvement can potentially be made either finding a better pattern that is avoided by Greedy matrix, or improving the analysis of the extremal bounds for $Q_1$ and $Q_2$. However, there are some inherent limitations to this approach. First, the lower bounds are known\footnote{They contain the pattern $\kbordermatrix{
     &  &  & & \\
      &  & & &1  \\ 
      &  & 1&  &\\ 
    &  1 & & 1}$ whose extremal bound is $\Omega(n\alpha(n))$ \cite{FurediH92}.}  for $Q_1$ and $Q_2$. That is, ${\sf ex}(n,Q_1) = \Omega(n \alpha(n))$ and  ${\sf ex}(n,Q_2) = \Omega(n \alpha(n))$.  Although they are far from the current upper bounds, these results stop us from obtaining linear upper bounds. 
    
    The second key barrier is due to a counterexample  provided in~\cite{chalermsook2015pattern}. A natural way to prove that the Greedy matrix $G(X)$ satisfies $|G(X)| = O(n)$ is to show that $G(X)$ avoids a constant-sized permutation pattern, and applying the upper bound from~\cite{marcus2004excluded}.     However,~\cite{chalermsook2015pattern} shows a family of sequences $X$ such that $G(X)$ contains every constant-sized permutations even when the input is delete-only deque sequence. The counterexample also suggests that it is unlikely that the Greedy matrix will avoid some \textit{linear patterns} (i.e. pattern $\pi$ whose extremal bound ${\sf ex}(n,\pi) = O(n)$).

\paragraph{Our Results.} 
In this paper, in order to bypass the barriers, we propose to decompose the Greedy matrix $G(X)$ into several matrices that are ``easier in different ways''. More formally, we write $G(X) = \sum_{i = 1}^{\ell} M_i$ where matrices $M_i$ are chosen based on the structures of $X$ so that each $M_i$  avoids a much smaller pattern $\pi_i$ (which can be different for distinct $i$). This would give the upper bound $|G(X)| \leq \sum_{i=1}^{\ell} {\sf ex}(n, \pi_i)$. All our results follow this framework. We believe that our matrix decomposition techniques  will inspire further development of amortized analysis using extremal combinatorics beyond BSTs. 

\begin{itemize}
    
    \item For preorder traversal input $X$, we have 
    \begin{eqnarray*}
      |G(X)|& \leq & {\sf ex}\bigg(n,\kbordermatrix{
     &  &   \\
     & 1 &   \\ 
    &   & 1 } \bigg) + {\sf ex}\bigg(n,  \kbordermatrix{
     & & &  \\
     & 1 & &1  \\ 
    &   & 1& } \bigg) + 2\cdot {\sf ex}\bigg(n, \kbordermatrix{
     &  &  & & &  \\
      & 1 & &1 && 1 \\ 
    &   & 1& & 1 &} \bigg) 
    \end{eqnarray*} 
    which impies that $|G(X)| \leq O(n 2^{\alpha(n)})$ (details in \Cref{sec:size3input}). We remark that, without the matrix decomposition technique, the matrix $G(X)$ itself contains pattern $\kbordermatrix{
     &  &  & & &  \\
      & 1 & &1 && 1 \\ 
    &   & 1& & 1 &}$ (see \Cref{sec: counter examples} for a counterexample).  
    
    \item For postorder traversal input $X$, we have 
    \begin{eqnarray*}
      |G(X)|& \leq & {\sf ex}\bigg(n,\kbordermatrix{
     &  &   \\
     & 1 & 1  \\ 
    &   & 1 } \bigg) + {\sf ex}\bigg(n,  \kbordermatrix{
     & & &  \\
     & 1 & &1  \\ 
    &   & 1& } \bigg) +  2 \cdot {\sf ex}\bigg(n,  \kbordermatrix{
     & & &  \\
     &  & 1&  \\ 
    &  1 & & 1 } \bigg)
    \end{eqnarray*}
    which implies that $|G(X)| = O(n)$ (details in \Cref{sec:size3input}).
    
    \item For delete-only deque input $X$, we have 
    \[|G(X)| \leq {\sf ex}\bigg(n,\kbordermatrix{
     &  &   & &  \\
      &  & 1 &   &  1 \\ 
    &  1  & &  1  &  }\bigg) + {\sf ex}\bigg(n, \kbordermatrix{
     &  &   & &  \\
      & 1 &  & 1  &   \\ 
    &   &1 &    & 1 }\bigg) + O(n)\] 
    
    which implies that $|G(X)| \leq O(n \alpha(n))$ (details in \Cref{section: seq}).  We remark that, without the matrix decomposition technique, the matrix $G(X)$ itself  contains pattern $\kbordermatrix{
     &  &  & &   \\
      & 1 & &1 & \\ 
    &   & 1& & 1}$ and$\kbordermatrix{
     &  &  & &   \\
      &  &1 & &1 \\ 
    &   1& & 1& }$ (see \Cref{sec: counter examples} for a counterexample).  . 
\end{itemize}

We summarize our results in the following. 

\begin{theorem} \label{thm:main}
The following bounds hold for Greedy: 
\begin{enumerate}
    \item \label{item:main2}Greedy searches any preorder traversal sequence with cost $O(n 2^{\alpha(n)})$. 
    
    \item \label{item:main3}Greedy searches any postorder traversal sequence with cost $O(n)$. 
    
     \item \label{item:main1} Starting with any initial BST $R$ with $n$ keys, Greedy serves $m$ operations of {\sc InsertMin}, {\sc InsertMax}, {\sc DeleteMin}, {\sc DeleteMax} with cost at most $O(m \alpha(n))$ assuming $m>n$. 
    
\end{enumerate}
\end{theorem}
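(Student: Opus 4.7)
The plan is to prove all three parts of \Cref{thm:main} through the matrix decomposition framework already previewed above. For each input type, I encode Greedy's execution as the matrix $G(X)$ and partition its $1$-entries into a constant number $\ell$ of classes, yielding $G(X) = M_1 + \cdots + M_\ell$. Each class will be chosen so that $M_i$ avoids some small pattern $\pi_i$, and the bound $|G(X)| \le \sum_i \mathsf{ex}(n,\pi_i)$ follows term-by-term from forbidden submatrix theory. The novelty relative to \Cref{lem:overall} is that the individual $\pi_i$ are much smaller than $Q_0, Q_1, Q_2$, even though $G(X)$ itself can contain those larger patterns (as confirmed by the counterexamples in \Cref{sec: counter examples}).

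For the preorder case, I would classify each touch $(k,t)$ of $G(X)$ into four categories according to simple local criteria: whether $k = x_t$ (i.e.\ the touch is the search itself), whether $(k,t)$ is extremal (topmost or bottommost $1$) in its column $t$, and, when neither of the previous holds, whether key $k$ lies to the left or to the right of $x_t$. The first two classes should contribute linearly many entries since they avoid the size-$2$ identity and a close size-$3$ relative respectively. The remaining two classes should each avoid the size-$5$ pattern $Q_0$, giving $O(n2^{\alpha(n)})$ per class by the Füredi--Hajnal-type bound for $Q_0$. To verify the avoidance claims, I assume for contradiction that some $M_i$ contains $\pi_i$; using Greedy's geometric characterization (a point is touched at time $t$ iff it lies on the satisfying staircase of previously touched points relative to $x_t$) together with the preorder structure of $X$ and the definition of the class, I will extract a forbidden configuration in $G(X)$ as a whole, contradicting \Cref{lem:overall}, or derive a direct inconsistency with Greedy's execution rules.

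For the postorder case, the strategy is identical but the analogous four-way decomposition has \emph{all} parts avoiding size-$3$ or size-$4$ patterns with linear extremal bounds, so no $2^{\alpha(n)}$ factor arises and the total cost is $O(n)$. This reflects an extra monotonicity imposed by postorder traversal: once a subtree is ``closed'' in postorder, its keys are never revisited from the same side, which prevents the alternating three-column configuration that contributes the inverse-Ackermann blow-up in $Q_0$. For the deque case, I would first separate the $m$ operations into an insert-only deque part and a delete-only deque part, which are symmetric under time reversal and inversion, so only the latter needs to be analyzed. Then I decompose the delete-only Greedy matrix into two classes, each avoiding a size-$4$ pattern of the form $\kbordermatrix{ & & & & \\ & & 1 & & 1 \\ & 1 & & 1 & }$ (or its reflection) with extremal bound $O(n\alpha(n))$, plus a linear residual of rows/columns associated with the current minimum or maximum. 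The avoidance argument exploits that the touched keys split cleanly into those currently to the left of the deleted extreme and those to the right.

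The main obstacle in every part is the pattern-avoidance claim for each individual $M_i$, not the extremal bound lookup. Concretely, I must show that the combination of ``the input avoids pattern $\sigma$'' (preorder/postorder/deque) and ``the entry $(k,t) \in M_i$ satisfies the local classification'' rules out every embedding of $\pi_i$ inside $M_i$. This requires opening up Greedy's rotate-to-root simulation and carefully tracing, given five or six hypothetical touches in the configuration of $\pi_i$, a reconstructable sixth or seventh touch forced by Greedy that either adds the missing entries of the larger forbidden pattern in $G(X)$ or violates the classification that put the touches into $M_i$ in the first place. Getting the correct local classification so that this case analysis goes through for every $\pi_i$ simultaneously is the crux of the proof; once the right decomposition is identified, each individual avoidance lemma reduces to a short structural argument, and the theorem follows by summing the extremal bounds as displayed in the bullet list preceding the theorem statement.
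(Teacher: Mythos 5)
Your proposal correctly identifies the framework — decompose $G(X)$ into a constant number of submatrices and bound each via forbidden-submatrix theory — but there are concrete gaps in all three parts.

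For preorder/postorder, your four-way classification is not the paper's, and it is not clear that yours works. The paper splits $G(X)\setminus X$ into $TL, TR, BL, BR$ where top/bottom means the unique input point $(k, t_0)$ in the same column as the touch $(k,t)$ satisfies $t_0 > t$ (top, i.e.\ $k$ accessed in the future) or $t_0 < t$ (bottom, i.e.\ $k$ already accessed), and left/right compares $k$ with $x_t$. Your criterion of ``extremal (topmost or bottommost $1$) in its column $t$'' is a different, local condition on the set of keys touched at time $t$; it does not coincide with the paper's top/bottom, and you give no argument that the resulting classes avoid the claimed patterns. The heart of the paper is precisely the sequence of avoidance lemmas for each class (e.g.\ Lemma~\ref{lemma: topavoid} and Corollary~\ref{claim: TR} for $TL\cup TR$ in the preorder case), which are proved by combining the input-revealing gadgets (Claims~\ref{claim: inputRevealing} and~\ref{hidden}) with the structure of the avoided $3$-pattern. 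Appealing to ``Greedy's staircase characterization'' and ``deriving a contradiction with \Cref{lem:overall}'' is not a substitute; the latter is in fact circular since \Cref{lem:overall} only yields the weaker $n2^{\alpha(n)^{O(1)}}$ bound.

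For the deque, your proposed reduction is incorrect: a mixed sequence of {\sc InsertMin}, {\sc InsertMax}, {\sc DeleteMin}, {\sc DeleteMax} cannot be split into an insert-only phase followed by a delete-only phase, and the time-reversal symmetry between insert-only and delete-only does not help with interleaved sequences. The paper handles this via the concentrated-deque reduction from~\cite{DBLP:journals/corr/ChalermsookG0MS15a} and then a phase-based argument: at the start of phase $i$ there are $k_i$ active keys, the phase lasts $k_i/2$ steps, the touched points in that phase are split into a left set $\lset$ and right set $\rset$ each shown to avoid a size-$4$ pattern with extremal bound $O(k_i\alpha(n))$ (Lemma~\ref{lemma:leftpoints}), and a geometric series sums the phase costs to $O(m\alpha(n))$. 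Your proposal contains none of this; what you describe is essentially the warm-up ``Deque Access Theorem'' (delete-only, fixed key set), which is a much more restricted statement than \Cref{thm:main}(\ref{item:main1}).
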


\begin{remark}
The sequence for preorder and postorder is a permutation sequence of length $n$. For deque we consider any sequence of length $m>n$.
\end{remark}

\begin{remark}
In all our results we consider an initial tree $T$ before the execution of Greedy. If the initial tree is not given, then problems becomes much "easier" to solve. For example the cost of sequential access is $O(n)$ with initial tree and without initial tree but the proof structures are different~\cite{chalermsook2015pattern,fox2011upper}.
\end{remark}

\begin{table}[]
\centering

\begin{tabular}{lccc}
\toprule
          & \textbf{Previous known}               & \textbf{This paper}                                 & \textbf{Remark} \\ \hline
Preorder       & $n2^{\alpha(n)^{O(1)}}$ \cite{chalermsook2015pattern} & $O(n2^{\alpha(n)})$                 & \Cref{thm:main}(\ref{item:main2}) \\  \hline
Postorder      & $n2^{\alpha(n)^{O(1)}}$ \cite{chalermsook2015pattern} & $O(n)$                              &\Cref{thm:main}(\ref{item:main3})  \\  \hline
Deque          & $O(m2^{\alpha(m,m+n)})$ \cite{DBLP:journals/corr/ChalermsookG0MS15a}  & $O(m\alpha(n))$                     &   \Cref{thm:main}(\ref{item:main1})\\  \hline
Split          & -                       & $O(n2^{\alpha(n)})$                 &  \Cref{thm:main split}\\  \hline
$k$-Increasing & $O(nk^2)$   \cite{chalermsook2015pattern,cibulka2013extremal}       & $O(\min\{nk^{2}, nk \alpha(n) \})$ &  \Cref{thm:main k-dec} \\ \bottomrule
\end{tabular}
\caption{Main results for Greedy BSTs. \label{table:results}}
\end{table}

For split conjecture, we in fact prove that the maximum cost of Greedy's splitting is at most the cost of searching a preorder traversal and therefore our traversal bound directly gives an upper bound on Greedy's serving split tree operations.   

\begin{theorem}[Informal]
\label{thm:main split} For $n \in {\mathbb N}$, let ${\sf deldeq}(n)$  be the maximum possible costs of Greedy when serving deletion-only deque operations on $n$ keys, and  ${\sf preorder}(n)$ be the maximum cost

 when serving preorder search. Then, 
\[{\sf deldeq}(n) \leq {\sf split}(n) \leq {\sf preorder}(n).\] 
Consequently, Greedy can be used as a split tree with cost $O(n 2^{\alpha(n)})$.
\end{theorem}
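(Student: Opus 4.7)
The plan is to establish the theorem via the two inequalities ${\sf deldeq}(n) \leq {\sf split}(n)$ and ${\sf split}(n) \leq {\sf preorder}(n)$, after which the bound $O(n 2^{\alpha(n)})$ follows directly from Theorem~\ref{thm:main}(\ref{item:main2}).

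For the first inequality, observe that each \textsc{DeleteMin} (resp.\ \textsc{DeleteMax}) operation is precisely a {\sc Split} on the current minimum (resp.\ maximum): Greedy accesses the extremum, which then becomes the root, and removing it leaves one empty split-subtree and the rest as the other. Hence every deletion-only deque sequence is literally a split sequence, which immediately yields ${\sf deldeq}(n) \leq {\sf split}(n)$.

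For the main inequality ${\sf split}(n) \leq {\sf preorder}(n)$, I first note that every permutation $X = (x_1,\ldots,x_n)$, interpreted as a split sequence, is the preorder traversal of a uniquely determined BST $R$ (place $x_1$ at the root and recursively build the left/right subtrees from the subsequences of $X$ on elements below/above $x_1$). I then prove by induction on $n$ that, for any initial tree $T$, Greedy's cost of serving $X$ as splits is at most Greedy's cost of serving $X$ as preorder searches. Both executions begin by accessing $x_1$ from $T$ at identical cost, producing the same post-access subtrees $T_L$ (keys $<x_1$) and $T_R$ (keys $>x_1$) from Greedy's rearrangement. In the split execution, $x_1$ is removed and the subsequences $X_L, X_R$---which by the preorder structure appear consecutively in $X$---are served independently on $T_L$ and $T_R$. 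Thus the split cost equals $\mathsf{access}(x_1,T)+\mathsf{split}(X_L,T_L)+\mathsf{split}(X_R,T_R)$, which by the inductive hypothesis is at most $\mathsf{access}(x_1,T)+\mathsf{search}(X_L,T_L)+\mathsf{search}(X_R,T_R)$. It therefore suffices to establish the \emph{Key Lemma}:
$$\mathsf{search}(X_L,T_L)+\mathsf{search}(X_R,T_R)\leq \mathsf{search}(X_LX_R,T'),$$
where $T'$ is the tree with $x_1$ at the root and $T_L,T_R$ as subtrees.

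For the Key Lemma I plan to use the geometric view of Greedy. Let $M$ be Greedy's satisfied point set for $(X_LX_R,T')$, and let $M_L, M_R$ denote its row-restrictions to keys of $T_L, T_R$. Two points in rows of $T_L$ have their entire witnessing rectangle contained in $T_L$-rows, so $M_L$ remains arborally satisfied; similarly for $M_R$. Moreover, $M_L$ retains all access points of $X_L$ together with the encoding of $T_L$ inherited from $T'$'s canonical initial encoding restricted to $T_L$-rows. Since $T_L$ and $T_R$ occupy disjoint row intervals separated by row $x_1$, we get $|M|\geq|M_L|+|M_R|$. The remaining task is to conclude $|M_L|\geq \mathsf{search}(X_L,T_L)$ (and analogously for $M_R$): treating the columns corresponding to $X_R$ accesses as no-op gaps (on which Greedy, run on the subproblem, touches nothing), the subsequence/monotonicity argument already employed in the Greedy literature (see, e.g.,~\cite{DBLP:journals/corr/ChalermsookG0MS15a,chalermsook2015pattern}) lets us identify Greedy's execution for $(X_L,T_L)$ within $M_L$, making its cost at most $|M_L|$.

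The hard part will be precisely this final step, since Greedy is online and is not in general the pointwise-minimum arborally-satisfied superset of its input. I expect to bypass this by leveraging the ``local'' nature of Greedy's rearrangement within each BST subtree---specifically, that an access to a $T_L$-key, whether performed in the tree $T'$ or in $T_L$ alone, evolves the internal structure of $T_L$ identically (the extra root $x_1$ and extra subtree $T_R$ in $T'$ can only contribute \emph{additional} touched points, namely $x_1$ along with witnesses outside $T_L$). This dominance yields the desired $|M_L|\geq \mathsf{search}(X_L,T_L)$, closes the induction, and combined with Theorem~\ref{thm:main}(\ref{item:main2}) gives ${\sf split}(n)=O(n 2^{\alpha(n)})$.
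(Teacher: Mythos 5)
The proposal takes a genuinely different route from the paper (recursive induction on the BST structure of $X$, versus the paper's sequence-rearrangement plus interval-restriction argument), but it has two substantive gaps that prevent it from closing.

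First, the opening claim of the inductive argument---``every permutation $X$, interpreted as a split sequence, is the preorder traversal of a uniquely determined BST $R$''---is false as stated. For $X=(3,5,1,4,2)$ the construction you describe gives a tree whose preorder is $(3,1,2,5,4)\neq X$. What is actually true, and what the paper proves as Lemma~\ref{lem: split-traversal rearrange}, is that $X$ can be \emph{rearranged} into a preorder sequence $X^R$ of that BST \emph{and} that this rearrangement leaves the split-model cost invariant: $|G_S(X^R)|=|G_S(X)|$. That cost-invariance is a real lemma (the paper establishes it by an exchange argument on adjacent swaps of keys with disjoint active intervals and a row-by-row comparison of $G_S(B)$ and $G_S(B')$); it cannot be replaced by the claim that $X$ was already preorder. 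Your induction also silently relies on two further unstated facts---that the split cost on $X$ decomposes exactly as $\mathsf{access}(x_1,T)+\mathsf{split}(X_L,T_L)+\mathsf{split}(X_R,T_R)$ even when $X_L$ and $X_R$ are interleaved, and that $\mathsf{access}(x_1,T)+\mathsf{search}(X_LX_R,T')=\mathsf{search}(x_1X_LX_R,T)$---which again need the interval/rearrangement machinery to justify.

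Second and more seriously, the Key Lemma is precisely where the technical difficulty lives, and the sketch offered does not surmount it. The assertion that ``an access to a $T_L$-key, whether performed in the tree $T'$ or in $T_L$ alone, evolves the internal structure of $T_L$ identically'' is essentially the statement $G^{(i)}(X)\cap I_{x_i}=G^{(i)}_S(X)\cap I_{x_i}$ that the paper proves as the final lemma of \Cref{section: Split}, and its proof requires that $X$ already be preorder and leans on two structural facts (Lemmas~\ref{northwest} and~\ref{east}) that forbid, respectively, later inputs to the far left of an active interval and touched points to the far right on an access row. These facts are derived from $(2,3,1)$-avoidance and the One-sided Capture Gadget (\Cref{hidden}); without them there is no reason the row-projection $M_L$ should contain Greedy's independent execution on $(X_L,T_L)$, and Greedy is not monotone in the sense your argument requires. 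In short, the ``hard part'' you flag is exactly what Lemma~\ref{lem: split at most search} handles via pattern-avoidance, and waving at locality of Greedy's rearrangement does not substitute for that argument. The first inequality ${\sf deldeq}(n)\leq{\sf split}(n)$ and the final step invoking Theorem~\ref{thm:main}(\ref{item:main2}) are fine.
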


As a consequence for Greedy, the traversal property implies the split property, which implies the delete-only deque property. The implication from traversal to split properties is not known for Splay trees. 

One can view this collection of conjectures as the dynamic optimality conjecture on restricted inputs, where the restriction on the input sequence is defined by pattern avoidance properties.  Pattern avoiding problems are interesting special cases of the dynamic optimality conjecture that have shown  interplay between extremal combinatorics and data structures.

We now define pattern avoidance formally. Consider any input $X$.\footnote{It was argued in~\cite{demaine2007dynamic} that one can assume w.l.o.g. that the input is a permutation.} We say that $X=(x_1, x_2, \ldots, x_n)$ contains pattern $\pi = (\pi_1,\ldots, \pi_k)$ if there are indices $i_1 < \ldots < i_k$ such that the subsequence $(x_{i_1}, x_{i_2}, \ldots, x_{i_k})$  is order-isomorphic to $\pi$. 
Otherwise,  $X$ avoids $\pi$. 

The three properties can be (roughly) rephrased in this language as follows. For the preorder traversal property, we are given an input permutation $X$ that avoids pattern $(2,3,1)$, and our goal is to show a binary search tree that searches this sequence with cost at most $O(n)$. For the postorder traversal property, our goal is to search an input sequence that avoids $(1,3,2)$. For the deque property, if we represent the (delete-only) input $X$ (where $x_t$ is the key deleted at time $t$), then $X$ avoids patterns $(2,3,1)$ and $(2,1,3)$. 
In this way, all these properties deal with size-$3$ pattern-avoiding input classes.

Besides the small patterns, recent works have also started exploring the complexity of input classes that avoid patterns of growing sizes~\cite{chalermsook2015pattern,goyal2019better,chalermsook2016landscape}. 
See the thesis of Kozma~\cite{kozma2016binary} for more detail about this connection and a broader perspective on this class of problems. 
Our next result shows the improvement on a pattern-avoiding input class that allow patterns to be growing in terms of $k$. We say that an input $X$ is $k$-\textit{increasing} (respectively, $k$-\textit{decreasing}) if $X$ avoids $(k+1,k,k-1,\ldots,1)$ (respectively, $(1,2,\ldots,k,k+1)$). Note that $1$-increasing (1-decreasing) sequence corresponds to sequential sequences: $X = (1,\ldots,n)$ (or $X = (n,\ldots,1)$, respectively). The sequential sequence has been studied in the early days of the dynamic optimality conjecture~\cite{tarjan1985sequential,elmasry2004sequential} for splay and a bit more recently for Greedy~\cite{fox2011upper}. Note that $k$-increasing sequence and $k$-decreasing sequence are symmetric. 

\begin{theorem}
\label{thm:main k-dec} 
For $k$-increasing or $k$-decreasing input $X$, Greedy serves input $X$ with cost at most $O(\min\{n k^{2}, nk\alpha(n)\})$. 
\end{theorem}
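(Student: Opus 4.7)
The plan is to extend this paper's matrix-decomposition framework to $k$-increasing (equivalently, by reversal symmetry, $k$-decreasing) inputs. The bound $O(nk^2)$ follows directly from prior work \cite{chalermsook2015pattern,cibulka2013extremal}: if $X$ avoids a permutation pattern of size $k+1$, then the reduction of \Cref{lem:overall} produces a fixed permutation pattern of size $O(k)$ avoided by $G(X)$, and Cibulka--Kyncl--type estimates yield $|G(X)| = O(nk^2)$. I will treat that bound as given and focus on the new $O(nk\alpha(n))$ bound.

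Let $X$ be $k$-decreasing, so its longest increasing subsequence has length at most $k$. By Mirsky's theorem applied to the partial order $s \prec t \iff s<t \text{ and } x_s<x_t$, the positions $[n]$ partition into $k$ antichains $T_1,\ldots,T_k$; equivalently, the values $\{x_t : t \in T_j\}$ form a decreasing subsequence of $X$ for each $j$. This induces a column decomposition of the Greedy matrix
\[
G(X) \;=\; \sum_{j=1}^{k} M_j,
\]
where $M_j$ collects the columns of $G(X)$ indexed by $T_j$.

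The core step is to show that each $M_j$ avoids a constant-size pattern $\pi$ whose extremal bound is $O(n\alpha(n))$. A natural target is the Hart--Sharir pattern
\[
\pi \;=\; \kbordermatrix{
     &  &  & &  \\
      & 1 & & 1 &  \\
    &   & 1 & & 1},
\]
which already appears in the deque analysis of \Cref{lem:overall}. Intuitively, the monotonicity of $X$ restricted to $T_j$ (the columns of $M_j$ are ordered so that their accessed keys decrease) restricts how ``alternating'' the $1$s in $M_j$ can be, so any occurrence of $\pi$ inside $M_j$ should combine with Greedy's geometric non-crossing invariant to produce a larger forbidden configuration in the full matrix $G(X)$, contradicting properties already established in the paper's size-$3$ input analyses. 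Summing $\mathsf{ex}(n,\pi) = O(n\alpha(n))$ over the $k$ chains gives $|G(X)| = O(nk\alpha(n))$, and taking the minimum with the prior $O(nk^2)$ bound proves the theorem.

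The principal obstacle is the pattern-avoidance claim for $M_j$. Crucially, $M_j$ is \emph{not} the Greedy execution log on the subsequence $X_j$ in isolation (which would be a trivial decreasing access with cost $O(|X_j|)$); it records every key Greedy touches at the times in $T_j$, and many of those touches reach into keys lying on other chains. Controlling these cross-chain touches using only the monotonicity of the active chain together with Greedy's geometric invariants is the technical heart of the proof and the place where the argument departs meaningfully from the size-$3$ pattern-avoidance analyses established earlier in the paper.
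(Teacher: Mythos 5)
You have correctly reduced the theorem to proving an $O(nk\alpha(n))$ bound and correctly identified that some $k$-fold decomposition of $G(X)$ into pieces each avoiding an $O(n\alpha(n))$-cost pattern is the way to go, but the decomposition you chose leaves the crucial avoidance claim unproven, and you say so yourself: the assertion that each $M_j$ (the rows of $G(X)$ at the times of one decreasing chain $T_j$) avoids the alternating pattern is exactly the ``technical heart'' you defer. This is a genuine gap, not a routine verification. The difficulty is structural: a touch at a time $t\in T_j$ is caused by the geometric interaction of the access $x_t$ with points placed at \emph{arbitrary} earlier times, and the capture gadgets (\Cref{claim: inputRevealing}, \Cref{hidden}) only reveal input points in rectangles whose time coordinates need not lie in $T_j$. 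So the monotonicity of the chain $T_j$ by itself gives you nothing to contradict; you are forced to invoke the global $(1,\dots,k+1)$-avoidance of $X$, and once you do that a per-chain row partition no longer isolates the relevant structure. I see no argument (and you give none) that an occurrence of the Hart--Sharir pattern inside $M_j$ forces a forbidden configuration.

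The paper's proof partitions differently, and the difference is the missing idea: instead of partitioning the time rows by monotone chains of the input, partition the \emph{touched points themselves} by a dominance statistic. For $q\in G(X)\setminus X$ let $\textsf{chain}(q)$ be the longest chain of input points $p_1\succ\cdots\succ p_j\succ q$ (dominance in both coordinates); since $X$ avoids an increasing pattern of length $k+1$ this takes at most $k$ values, giving $G(X)\setminus X=\bigcup_i G_i(X)$. Each class $G_i(X)$ then avoids the multi-typed pattern consisting of a $\times$, then an input point $\bullet$ above-right of it, then a $\times$ above-right of that: the witness chain of the top $\times$ together with the middle input point would be a longer witness chain for the bottom $\times$, contradicting that both lie in the same class $G_i(X)$. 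Combining this with \Cref{claim: inputRevealing} shows $G_i(X)$ avoids the four-column pattern of \Cref{thm:exbound} with extremal bound $O(n\alpha(n))$, and summing over the $k$ classes gives $O(nk\alpha(n))$. Note that this partition is adapted to the Greedy output rather than to the input times, which is why the pattern-avoidance proof closes in two lines, whereas in your scheme the cross-chain interactions you flag remain uncontrolled. (A minor further point: your account of the $O(nk^2)$ bound is slightly off -- the known reduction gives a \emph{layered} permutation of size $\Theta(k^2)$ avoided by $G(X)$, and linearity of layered patterns yields $O(nk^2)$; a general size-$O(k)$ permutation would only give $n2^{O(k)}$ -- but since you quote that bound from prior work, as the paper does, this does not affect the verdict.)
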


Previously, the best analysis of Greedy achieves the upper bound of $O(n k^2)$~\cite{chalermsook2015pattern}. They showed that the greedy matrix  avoids a permutation of size $k^2$. Furthermore, the permutation is \textit{layered} (i.e., a concatenation of decreasing sequences into layers such that each entry of a layer is smaller than the following layers) and thus it admits $O(nk^2)$ bounds  by Theorem 1.6 of \cite{cibulka2013extremal}.  Our new result improves the previous bound whenever $k > \alpha(n)$. 
\Cref{table:results} summarizes our main results. 

\paragraph{A New Result in Extremal Combinatorics:}  Along the way of proving $k$-increasing bounds for Greedy, we discover a new result in extremal combinatorics regarding the bounds of ``easy'' permutation patterns. A seminal result by Marcus and Tardos~\cite{marcus2004excluded}, show that $\textsf{ex}(n,P) = O(n 2^{k \log k})$ for any length-$k$ permutation matrix $P$. The bounds have been improved to $\textsf{ex}(n,P) = n 2^{O(k)}$ by \cite{fox2013stanley,CibulkaK17}. Furthermore, Fox \cite{fox2013stanley} showed (via a randomized construction) that almost all permutation matrices have the bound $\geq n2^{\Omega((k/\log k)^{1/2})}$  and left open the following conjecture: 

\begin{conjecture}
If $\pi$ is a permutation that avoids $O(1)$-sized pattern, ${\sf ex}(n, \pi) = n \cdot poly(|\pi|)$. 
\end{conjecture}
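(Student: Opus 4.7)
The plan is to attack the conjecture by establishing it first on progressively richer sub-classes, since the full statement is wide open. The natural axis along which to climb is the size of the forbidden pattern: fix a small pattern $\sigma$ of size $c = O(1)$, let $\pi$ range over all permutations of size $k$ that avoid $\sigma$, and prove ${\sf ex}(n, \pi) = n \cdot \mathrm{poly}_c(k)$. The intuition is that a $\sigma$-avoiding $\pi$ cannot be ``generic''---Marcus--Tardos itself forces $\pi$ to inherit a shallow block hierarchy---and this regularity should translate into stronger, structure-sensitive exclusion constraints on any $\pi$-avoiding matrix $M$.

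The first step is to produce a canonical hierarchical decomposition of $\pi$ from its $\sigma$-avoidance. Applying Marcus--Tardos (or an Erd\H{o}s--Szekeres-type argument) to $\pi$ itself yields a partition of $[k]$ into $O(1)$ intervals such that $\pi$ restricted to each is again $\sigma$-avoiding of smaller size; iterating gives a decomposition tree of depth $O(\log k)$ whose leaves are primitive sub-permutations of size $O(c)$. The second step is to mirror this decomposition on the matrix side: partition the rows and columns of $M$ into groups and form an auxiliary ``coarse'' matrix $M'$ whose $(I, J)$-entry is $1$ iff the corresponding block of $M$ contains a copy of the primitive sub-permutation attached to that node of $\pi$'s tree. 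If the granularity is chosen correctly, $M'$ must avoid a pattern of size $O(c)$, and so existing constant-sized extremal bounds give $|M'| \le O(n)$ with only a $\mathrm{poly}(c) = O(1)$ multiplicative loss.

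The third step is to recurse inside each nonzero block of $M'$ and accumulate the bounds along the decomposition tree of $\pi$. Because the multiplicative loss per level is $O(1)$ and the depth is $O(\log k)$, the total bound is at worst $n \cdot k^{O(1)}$, matching the conjectured shape. To make the recursion actually collapse into a clean sum of extremal functions, one would phrase the result in the same language used elsewhere in this paper: write $|M| \le \sum_i {\sf ex}(n, \pi_i)$ where each $\pi_i$ is a sub-pattern of $\pi$ at a node of the decomposition tree, exactly mirroring the matrix-decomposition template behind \Cref{lem:overall} and \Cref{thm:main}. This is also what allows the approach to plug back into the analysis of Greedy on $k$-increasing inputs, where the relevant permutation avoided by the Greedy matrix is itself of the $\sigma$-avoiding form.

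The main obstacle sits squarely in the second step: guaranteeing that the coarse matrix $M'$ truly avoids a small enough pattern, rather than merely a pattern of comparable size to $\pi$ itself. The standard Marcus--Tardos ``pigeonhole'' that controls rows/columns of $M'$ loses a factor exponential in $k$ precisely because it does not use any structural property of $\pi$; reducing this to $\mathrm{poly}(c)$ seems to require a genuinely new counting lemma that converts $\sigma$-avoidance of $\pi$ into a forbidden-submatrix statement for $M'$. It is at this bottleneck that I expect only partial progress to be possible---proving the polynomial bound for the specific sub-class of permutations that arise from Greedy on $k$-increasing inputs (via the pattern exhibited in the proof of \Cref{thm:main k-dec}), while leaving the full conjecture open.
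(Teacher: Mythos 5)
This statement is Fox's conjecture; the paper does not prove it, and neither does your proposal, so the honest comparison is between your program and the paper's partial progress. Your plan has a genuine gap, and it is exactly the one you flag yourself: step two asserts that the coarse matrix $M'$ (recording which blocks of $M$ contain a primitive sub-permutation of $\pi$) ``must avoid a pattern of size $O(c)$,'' but no mechanism is offered for this. In every Marcus--Tardos-style contraction the coarse matrix is only guaranteed to avoid $\pi$ itself (size $k$), and the exponential-in-$k$ loss does not come from the recursion depth at all --- it comes from bounding the number of wide/tall blocks per column- or row-group, a $\binom{4k^2}{k}$-type count that is completely insensitive to whether $\pi$ avoids a small pattern $\sigma$. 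Converting $\sigma$-avoidance of $\pi$ into a forbidden-submatrix statement about $M'$ is therefore the whole problem, not a technical lemma to be supplied later. A second unsupported step is the ``canonical hierarchical decomposition'': a $\sigma$-avoiding permutation need not admit any nontrivial interval (block) decomposition --- simple permutations can avoid constant-size patterns --- so the claimed depth-$O(\log k)$ tree with $O(1)$-size leaves does not exist in general, and without it the recursion in step three has nothing to recurse on.

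For contrast, the paper's partial progress goes along a different axis and never uses pattern-avoidance of $\pi$ itself. It rephrases the Marcus--Tardos recurrence through the function $f(c,\pi)$ counting rows of wide blocks within one column-group (\Cref{thm:ex and f}), proves a corner-peeling reduction $f(c,\pi)=O(c)\cdot f(c,\widehat{\pi})$ when $\pi$ has a $1$ in a corner of its first or last column (\Cref{lem: reduction}), and anchors the recursion at $k$-linear permutations, for which $f(c,Q)\leq c$ (\Cref{lem: f Q small}). This yields ${\sf ex}(n,P)\leq nk^{O(t)}$ for any $P$ reducible to a $k$-linear permutation in $t$ steps (\Cref{thm:permutation reduction}), which in particular covers the patterns relevant to Greedy on $k$-increasing inputs. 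So the class for which polynomial bounds are actually established is defined by corner-reducibility, not by $\sigma$-avoidance, and the gain per reduction step is a clean factor $O(c)=O(k^2)$ in $f$ rather than a hoped-for constant-loss contraction of the host matrix. If you want to salvage your approach, the missing ingredient is precisely a counting lemma bounding wide/tall blocks polynomially in $k$ under a structural hypothesis on $\pi$; the paper's $f$-formalism is one concrete way to isolate and attack that quantity.
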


Here, we make a partial progress by showing an approach to determine if the extremal bound of a permutation matrix $P$ has polynomial dependence on $k$ instead of exponential dependence. As a result, we discover a new class  permutation matrices whose extremal bounds are polynomially bounded. 

For any permutation  $\pi$, denote by ${\sf dleft}(\pi)$ (abbreviation for ``delete from the left'') the permutation  obtained by removing the point (in the matrix form of $\pi$) on the leftmost column as well as its corresponding row and column; for instance, ${\sf dleft}(1,3,4,2) = {\sf dleft}(2,3,4,1) = (2,3,1)$. Similarly, we can define ${\sf dright}(\pi)$. We say that a length-$k$ permutation matrix $P$ is \textit{left-reducible} if it contains a point on one of the two corners of the first column (i.e., at coordinate $(1,1)$ or $(1,k)$). Similarly, we say that $P$ is \textit{right-reducible} if it contains a point on one of the two corner of the last columns (i.e., at coordinate $(k,1)$ or $(k,k)$). 

\begin{definition} 
Let $P$ be a length-$k$ permutation matrix. We say that $P$ is \textit{reducible} to a permutation matrix $Q$, denoted by $P \rightarrow Q$, if one of the followings is true
\begin{itemize}
    \item $Q = {\sf dleft}(P)$ and $P$ is left-reducible, or
    \item $Q = {\sf dright}(P)$ and $P$ is right-reducible. 
\end{itemize} 
Furthermore, we say that $P$ is reducible to $Q$ in $t$ steps, denoted by $P \overset{t}{\rightarrow} Q$, if there exist permutation matrices $P_1, \ldots, P_{t-1}$ such that $P \rightarrow P_1 \rightarrow \ldots \rightarrow P_{t-1} \rightarrow Q$. 
\end{definition}

We say that a length-$k$ permutation $P$  is \textit{$k$-linear} if $\textsf{ex}(n,m,P) = O(k(m+n))$ where $\textsf{ex}(n,m,P)$ is asymmetric extremal bounds of  $m$-by-$n$ matrix avoiding $P$. Similarly, we say that $P$ is \textit{$k$-polynomial} if   $\textsf{ex}(n,P) \leq nk^{O(1)}.$ Our new result is the following. 

\begin{theorem} \label{thm:permutation reduction}
If a length-$k$ permutation $P$ is reducible to a $k$-linear permutation $Q$  in $t$ steps, then  $\textsf{ex}(n,P) \leq nk^{O(t)}.$  In particular, $P$ is $k$-polynomial whenever $t = O(1)$.
\end{theorem}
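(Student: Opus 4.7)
The plan is to induct on $t$. The base case $t=0$ is immediate: since $P=Q$ is $k$-linear, $\textsf{ex}(n, P) \leq \textsf{ex}(n, n, P) = O(kn) = nk^{O(1)}$. For the inductive step it suffices to establish a single-step reduction lemma of the form
\[\textsf{ex}(n, P) \leq k^{O(1)} \cdot \textsf{ex}(n, P_1) + O(kn)\]
whenever $P \to P_1$; iterating this lemma $t$ times and applying the base case to $Q$ then yields $\textsf{ex}(n, P) \leq nk^{O(t)}$, as desired.

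To prove this lemma, I would assume without loss of generality that $P$ is left-reducible with its corner at the top of the first column (the other three corner positions follow by horizontal and vertical reflection, which preserve extremal bounds). Under this assumption, a $P$-pattern in a matrix consists of a ``corner 1'' together with a $P_1$-pattern strictly to its SE, where $P_1 = {\sf dleft}(P)$. Given an $n \times n$ matrix $M$ avoiding $P$, the idea is to partition the $1$-entries of $M$ into a boundary set $B$ with $|B| = O(kn)$ and an interior set which, viewed as a matrix, avoids $P_1$. The natural first attempt is to take the topmost $k-1$ ones in each column as boundary (so $|B| \leq (k-1)n$). To verify that the interior avoids $P_1$, suppose for contradiction that it contains $P_1$ at rows $r_1 < \cdots < r_{k-1}$ and columns $c_1 < \cdots < c_{k-1}$, where the pattern point in column $c_i$ sits in row $r_{\sigma(i)}$ and $\sigma$ denotes the underlying permutation of $P_1$. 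Above each pattern point lies a column reservoir of at least $k-1$ boundary ones of $M$, and the plan is to chain these reservoirs to extract a $1$ of $M$ strictly NW of the $P_1$-pattern; combined with the $P_1$-pattern this would yield a forbidden $P$-pattern in $M$.

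The main obstacle is carrying out this NW extraction for an arbitrary $P_1$. When $P_1$'s leftmost column has its point at $P_1$'s topmost row (i.e.\ $\sigma(1) = 1$), the column-$c_1$ reservoir already lies at rows $< r_1$, and a secondary descent into row reservoirs produces a $1$ at a column $< c_1$. In the general case, when $P_1$'s leftmost-column point is not at the top of $P_1$, the reservoir above $(r_{\sigma(1)}, c_1)$ does not sit strictly NW of the whole pattern and the direct argument fails; overcoming this appears to require a more refined marking (for instance, also designating the leftmost $k-1$ ones per row as boundary, doubling the boundary budget and combining the two reservoirs), or arguing across multiple reduction steps at once. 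This careful marking argument is where the bulk of the technical work lies; once the single-step lemma is established, the theorem follows by induction on $t$, and the $k$-polynomial conclusion for $t = O(1)$ is immediate.
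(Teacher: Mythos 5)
Your route is genuinely different from the paper's, but it has a real gap at exactly the step you flag. You propose to prove a single-step extremal inequality directly for $\textsf{ex}$, namely $\textsf{ex}(n,P) \leq k^{O(1)}\,\textsf{ex}(n,P_1) + O(kn)$ when $P \to P_1$, by marking $O(kn)$ ``boundary'' ones and arguing that the interior avoids $P_1$. But the NW-extraction argument you describe does not close: the $k-1$ reservoir ones sitting above (or to the left of) a given $P_1$-pattern point live in that point's \emph{own} column (or row), which in general is neither strictly left of $c_1$ nor strictly above $r_{k-1}$, and chaining through further reservoirs need not escape the column/row range of the $P_1$-pattern. You acknowledge this (``the direct argument fails'') and propose doubling the boundary as a speculative fix, but you neither verify it nor explain how the chain terminates outside the pattern's bounding box. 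Also note a mismatch: the boundary-marking scheme, if it worked, would give the \emph{additive} bound $\textsf{ex}(n,P) \leq \textsf{ex}(n,P_1) + O(kn)$, which after $t$ steps gives $O(tkn)$ — substantially stronger than the theorem's $nk^{O(t)}$ and stronger than what is known; this should itself raise a flag that the clean one-step inequality for $\textsf{ex}$ is likely not attainable this way.

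The paper avoids this difficulty entirely by not reducing $\textsf{ex}$ directly. It first passes to the Marcus--Tardos auxiliary function $f(c,\pi)$ (the maximum number of rows in a $c$-column matrix with $\geq 2|\pi|$ ones per row that avoids $\pi$), shows $\textsf{ex}(n,\pi) = O\bigl(nk^3(f(4k^2,\pi)+f(4k^2,\pi'))\bigr)$ via the block decomposition, and only then proves a one-step reduction rule $f(c,\pi) \leq f(c-1,\pi) + f(c,\widehat{\pi}) + 2$ for reducible $\pi$. In the $f$-world the recursion is easy precisely because the matrix has only $c = O(k^2)$ columns and every row is dense: a pigeonhole on the bottom block forces a point in the first column, and a pigeonhole on the top block forces a copy of ${\sf dleft}(\pi)$ away from the first column, and these two pieces assemble into $\pi$. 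Combined with the base case $f(c,Q) \leq c$ for $k$-linear $Q$ (a short counting argument using $\textsf{ex}(r,c,Q) \leq k(r+c)$), unfolding the recursion $t$ times gives the theorem. To repair your write-up you would either need to supply a working version of the boundary argument for $\textsf{ex}$ (which seems hard and may be false in the clean additive form), or adopt the paper's two-stage structure through $f$.
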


In other words, if we start with a $k$-linear permutation, we can add a point on one of the corners and repeat for a few times, then the resulting permutation is $k$-polynomial. An example of linear permutation includes an identity matrix (Theorem 7(i) of \cite{BrualdiC21}). Another important class of linear permutation is $\textit{layered permutation}$. A layered permutation is a concatenation of decreasing sequences $S_1, \ldots, S_\ell$ such that every element of $S_i$ is smaller than all elements of $S_{i+1}$. Layered permutations are known to be linear in $k$ (Theorem 1.6 of \cite{cibulka2013extremal}).  We refer to \cite{CibulkaK17} for more discussion regarding $k$-linear permutations.




\vspace{0.2in} 

\noindent 
{\bf Further Related Work.} 
The ``parameterized'' pattern avoiding inputs, where one considers an input class whose avoided pattern has size depending on parameter $k$, have recently received attention (see e.g.,
~\cite{chalermsook2015pattern,goyal2019better}). 
Research questions in this setting aim to first prove the upper bound for $\opt(X)$ as a function of $k$ and later show that Greedy matches this upper bound. 
The parameters of interest are those that generalize the classical special cases of the dynamic optimality conjecture (such as deque and pre-order traversals). 
Chalermsook et al.~\cite{chalermsook2015pattern} showed $O(n 2^{\alpha(n)^{O(|\pi|)}})$ upper bound for the cost of Greedy on inputs avoiding $\pi$. 
Goyal and Gupta proved $O(n \log k)$ upper bound on the cost of Greedy on $k$-decomposable sequences~\cite{goyal2019better} (if one allows ``preprocessing''). A stronger bound that subsumes the $k$-decomposable bound and the dynamic finger bound~\cite{cole2000dynamic1,cole2000dynamic2} was shown by~\cite{iacono2016weighted,bose2016power} (see discussion in~\cite{chalermsook2016landscape}.). 
Besides pattern avoidance, other BST bounds include the unified bound~\cite{buadoiu2007unified,iacono2001alternatives,derryberry2009skip} and the multi-finger bound
~\cite{chalermsook2018multi,howat2013fresh,demaine2013combining}.  
The original drawback of Greedy was that, in contrast to Splay whose simplicity made it attractive for practitioners, Greedy does not admit a simple implementation in the BST model. However, due to a recent work by  Kozma and Saranurak
~\cite{kozma2019smooth}, there exists a heap data structure (called smooth heap) which matches the cost of Greedy and is implementable in practice.  

\paragraph{Conclusion and Open Problems.} 
We propose a simple idea of partitioning the execution log of Greedy into several simpler-to-analyze matrices based on the input structures and leveraging distinct patterns to upper bound each structured matrix separately.  
Based on this idea, we derived improved bounds for many notorious pattern avoidance conjectures (and completely settling the postorder conjecture).
We view these results as a showcase of the decomposition trick, which allows us to extend/strengthen the applications of forbidden submatrices in amortized analysis of data structures. 
We believe that this technique would find further uses in BSTs and more broadly in amortized analysis of data structures. 

\paragraph{Paper Organization.}
We start preliminaries including notations, definitions (the formal definition of Greedy BSTs in particular), basic facts about Greedy in \Cref{section: short-prelims}. In \Cref{section: seq}, we start with a warm-up section providing simple proofs of sequential access theorem, and delete-only deque sequence. In \Cref{sec:size3input}, we prove Greedy bounds for preorder and postorder traversals. In \Cref{sec:deque}, we discuss Deque sequence with insertions and deletions. We discuss Split conjecture for Greedy in \Cref{section: Split}. In \Cref{section: k-dec}, we discuss $k$-decreasing sequence. In \Cref{sec: step2}, we discuss the new result in extremal combinatorics.

\section{Preliminaries} \label{section: short-prelims}


\noindent 
{\bf Matrix and geometry:} Let $M$ be a binary matrix. For geometric reasons, we write matrix entries column-first, and the rows are ordered bottom-to-top, i.e. the first row is the bottom-most. 
Strictly speaking, $M(i,j)$ is the value of $i$-th column and $j$-th row of $M$. The matrix $M$ can be interchangeably viewed as the set of points $\pset(M)$ such that $(i,j) \in \pset(M)$ if and only if $M(i,j) = 1$. 
We abuse the notation and sometimes write $M$ (viewing $M$ as both the matrix and the point set corresponding to $1$-entries) instead of $\pset(M)$. Denote by $|M|$  the number of 1s in $M$.  

For point $p \in [n] \times [m]$, we use $p.x$ and $p.y$ to refer to the $x$ and $y$-coordinates of $p$ respectively. 
Let $I \subseteq [n]$ and $J \subseteq [m]$ sets of consecutive integers. We refer to $R= I \times J$ as a rectangle. We say that matrix $M$ is \textit{empty} in rectangle $R$ if $M(i,j) = 0$ for all $(i,j) \in R$; or equivalently, the rectangle $R$ is \textit{$M$-empty}. 

Let $\sigma = (\sigma(1), \sigma(2), \ldots, \sigma(k))$ be a permutation. We can view any permutation $\sigma$ as a matrix $M_{\sigma}$ where $M_{\sigma}(\sigma(i), i) =1$ and other entries are zero.

\noindent
{\bf Pattern avoidance:} We say that matrix $M$ contains pattern $P$ if $P$ can be obtained by removing rows, columns and non-zero entries of $M$. If $M$ does not contain $P$, we say that $M$ avoids $P$. 
The theory of forbidden submatrices focuses on understanding the following extremal bound: ${\sf ex}(n,P)$ is defined as the maximum number of $1$-entries in an $n$-by-$n$ matrix that avoids $P$. 

We will use the following known bounds from \cite{FurediH92,DBLP:journals/jct/Pettie11}: 

\begin{theorem} [\cite{FurediH92,DBLP:journals/jct/Pettie11}]
\label{thm:exbound}
\begin{itemize}
\item ${\sf ex}\bigg(n, \kbordermatrix{
     &  &  &  \\
      &  & 1 &   \\ 
    &  1  & &  1   }\bigg) = O(n)$. 

\item  ${\sf ex}\bigg(n, \kbordermatrix{
     &  &   & &  \\
      &  & 1 &   &  1 \\ 
    &  1  & &  1  &  }\bigg) = O(n\alpha(n))$.
    
\item     ${\sf ex} \bigg(n, \kbordermatrix{
     &  &  & & \\
      &  & & &1  \\ 
      &  & 1&  &\\ 
    &  1 & & 1&}\bigg) = O(n\alpha(n))$.
    
\item  ${\sf ex}\bigg(n, \kbordermatrix{
     &  &   & & &  \\
      &  & 1 &   &  1 & \\ 
    &  1  & &  1  &  & 1 }\bigg) = O(n2^{\alpha(n)})$.
\end{itemize}
\end{theorem}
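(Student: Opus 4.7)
} Each of the four bounds is a classical extremal statement, and rather than giving a bespoke combinatorial argument for each pattern, my plan is to handle all four items uniformly via the standard correspondence between pattern-avoiding 0/1 matrices and (generalized) Davenport--Schinzel sequences. This correspondence is the workhorse of both \cite{FurediH92} and \cite{DBLP:journals/jct/Pettie11}, and once it is in place the four bounds fall out from the classical DS estimates $\lambda_2(n) = O(n)$, $\lambda_3(n) = O(n\alpha(n))$, and $\lambda_4(n) = O(n\,2^{\alpha(n)})$.

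The general recipe is as follows. Given an $n \times n$ matrix $M$ avoiding a pattern $P$, scan the rows of $M$ from bottom to top and, for each row, list the column indices of its $1$-entries in increasing order; concatenate these row-sequences to produce a single sequence $s(M)$ over the alphabet $[n]$. A pattern avoidance in the matrix translates into an alternation avoidance in $s(M)$: one shows that any occurrence of the alternation $\sigma(P)$ in $s(M)$ can be extracted back to yield an occurrence of $P$ in $M$. The order of $\sigma(P)$ is determined by the ``height'' of $P$ (i.e., how many rows and how interleaved its 1-entries are). Applying the classical DS bound on $s(M)$ of the appropriate order yields the matrix bound.

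Working through the four patterns in the statement: the first (a $3$-point ``peak'' with two bottom points flanking a top point) corresponds to an $abab$ avoidance, i.e. order $2$, yielding $O(n)$; the two $4$-point patterns both correspond to order-$3$ avoidances (each can be checked to force an $ababa$-type alternation in $s(M)$), yielding $O(n\alpha(n))$; and the asymmetric $5$-point pattern of the last item corresponds to an order-$4$ avoidance and thus yields $O(n\,2^{\alpha(n)})$. For the linear case one can alternatively give a direct charging argument: label each $1$ by whether it is the bottommost in its column; the bottommost-labels contribute at most $n$ ones, and for any non-bottommost $1$ the peak-avoidance forces structural constraints on its row that limit the total to $O(n)$.

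The main obstacle I anticipate is not invoking the DS bounds (which are black-boxed from \cite{FurediH92,DBLP:journals/jct/Pettie11}) but rather correctly matching each concrete matrix pattern to an alternation of the \emph{right} DS order. In particular one must verify that the two distinct $4$-point patterns both collapse to order $3$ and not to order $2$ (otherwise the bound would be linear, which is false), and that the asymmetric $5$-point pattern is genuinely order $4$ rather than order $3$. This amounts to a finite case-analysis of the 1-entry positions in each pattern, together with a sanity check against the known lower bounds $\Omega(n\alpha(n))$ for $(1,3,2,4)$-like patterns cited earlier in the introduction.
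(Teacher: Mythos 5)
The paper does not prove Theorem~\ref{thm:exbound}: the four bounds are imported as known facts from F\"uredi--Hajnal \cite{FurediH92} and Pettie \cite{DBLP:journals/jct/Pettie11}, so there is no in-paper proof to compare against. What can be assessed is whether your sketch would actually close.

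Your high-level plan --- reduce forbidden-submatrix extremal bounds to Davenport--Schinzel bounds and then invoke $\lambda_2(n)=O(n)$, $\lambda_3(n)=O(n\alpha(n))$, $\lambda_4(n)=O(n\,2^{\alpha(n)})$ --- is the right family of tools, and is indeed what the cited sources use. However, the concrete reduction you describe does not work as stated. You propose to read $M$ row by row into a sequence $s(M)$ over the alphabet of column indices and to argue that a $k$-term two-symbol alternation $abab\cdots$ in $s(M)$ can be pulled back to an occurrence of $P$ in $M$. But an $\ell$-term alternation involves only two distinct symbols, i.e.\ two columns of $M$, whereas the patterns in the theorem span $3$, $4$, $4$ and $5$ columns respectively. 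Five occurrences forming $ababa$ give at most five $1$-entries sitting in two columns, which cannot produce a pattern occupying three or more columns; this mismatch already fails for the linear case (three columns). Relatedly, your heuristic that the alternation order is governed by ``how many rows and how interleaved'' $P$ is cannot be right as stated: items 1, 2 and 4 are all two-row patterns with the same degree of interleaving (strictly alternating) yet they sit at three different orders of growth, so what controls the order is the number of alternating $1$-entries, not the row count.

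The actual reductions in \cite{FurediH92,DBLP:journals/jct/Pettie11} are more structured. For the two-row zigzag patterns (items 1, 2 and 4, with $3$, $4$ and $5$ alternating ones) the translation to a DS sequence goes through a per-row interval/block construction rather than a naive readout of column indices, and the number of zigzagging $1$-entries then maps to the DS order minus one. Item 3 is not a two-row zigzag at all: it occupies three rows and four columns, is exactly the pattern the paper's footnote invokes for the matching $\Omega(n\alpha(n))$ lower bound, and is treated in \cite{FurediH92} by a separate argument. You correctly flag matching patterns to DS orders as the delicate step, but the ``finite case-analysis'' you defer is precisely where the missing reduction lives; carrying it out with the readout you describe would show that the alternations do not produce these patterns, so the argument as sketched does not close.
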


\vspace{0.2in}

\noindent 
{\bf Greedy algorithm:}  We consider input in the matrix $X$, that is, $X(i,j) = 1$ if and only if key $i$ is accessed at time $j$. Notice that each matrix row contains exactly one $1$-entry.  
Denote by $Y= G_T(X)$ the matrix corresponding to the execution log of Greedy on sequence $X$ and initial tree $T$, that is, $Y(i,j) = 1$ if and only if key $i$ is touched by Greedy on initial tree $T$ at time $j$.
We have $X \subseteq G_T(X)$ (Greedy always touches the input).  For any two points $p,q \in {\mathbb R}^2$, we denote $\Box_{p,q}$ as the minimally closed rectangular area defined by $p$ and $q$.  

We explain how $G_T(X)$ is constructed. 
Inputs are  matrix $X$ (one point per row) and matrix $T$ ($n$ rows and $n$ columns). The columns of $G_T(X)$ are $[n]$ and the rows of are indexed by $\{-(n-1),\ldots, 0, 1, \ldots, m\}$. The non-positive rows are exactly by $T$.   Greedy starts adding points into $G_T(X)$ by processing rows $t=1,\ldots, m$ in this order. 
At time $t$, we initialize $S \leftarrow \emptyset$. For any key $a \in [n]$, we denote $\tau(a,t)$ as the last time $t'$ before $t$ such that the point $(a,t')$ was added by Greedy or by the initial tree. Let $p$ be an point in $X$ on $t$-th row. For each $a \in [n]$, let $q = (a,\tau(a,t))$. If the  rectangle $\Box_{p,q}$ contains only two points $p$ and $q$, then we add point $(a,t)$ to $S$. 
After we process all keys $a$, we add points in $S$ to $G_T(X)$. 
See Figure~\ref{fig:greedyexample} below. 

We say that points in $X$ are accessed and points in $G_T(X)$ are touched. Moreover, for point $p$ in $G_T(X)$ we say that key $p.x$ is touched at time $p.y$. Throughout the paper, our statements hold for every initial tree $T$, hence we use $G(X)$ instead of $G_T(X)$.

\begin{figure}[ht]
  \centering
  \includegraphics[scale=0.4]{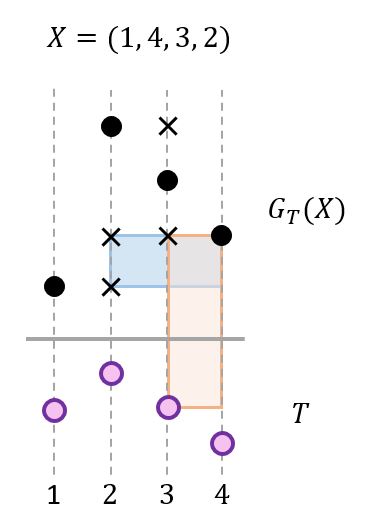}
  \caption{An example of $G_T(X)$ \label{fig:greedyexample}}
\end{figure}

We extend the pattern avoidance notation to handle multiple types of points. Let ``$\bullet$'' denote each input point in $X$ and ``$\times$'' each point in $G(X) \setminus X$. 
The notation $Y=G(X)$ contains $\kbordermatrix{
     &  a &  b  \\
    t_2 & \bullet &   \\ 
    t_1&   & \times }$ is used in the most intuitive way: Entry  $Y(a,t_2)$ contains an input, and $Y(b,t_1)$ contains a touched point (or equivalently, $b$ is touched at time $t_1$). 
    
We are interested in studying the pattern avoidance bound for Greedy. Define $\gex(n,P)$ as the maximum execution cost of Greedy $G(X)$ over all permutation input $X$ that avoids pattern $P$ and over all initial trees $T$. This extremal function is a function of $n$ and $|P|$.

\noindent
{\bf Multi-typed pattern avoidance:} For convenience, we extend the pattern avoidance terminology to allow points to have different types. 
Let $M$ be a point set and ${\mathbb T}$ be the set of type of points. A type function of $M$ is a mapping $f: M \rightarrow {\mathbb T}$. In the matrix view, the type $f$ assigns a value in ${\mathbb T}$ to each non-zero entry of $M$. 

Let $M$ be a matrix and $P$ a pattern. Let $\mu$ and $\pi$ be types of $M$ and $P$ respectively.  
We say that $(M,\mu)$ contains $(P,\pi)$ if and only if
\begin{itemize}
    \item $M$ contains $P$ or there exists a submatrix $M'$ obtained by removing rows, columns, and points of $M$ such that $M' = P$. Also, $\pi': M' \rightarrow {\mathbb T}$ be the type function $\pi$ induced on $M'$. 

    \item For all $(i,j) \in P$, we have $\pi'(i,j) = \mu(i,j)$. 
\end{itemize}

In this paper, our types are ${\mathbb T} = \{\times, \bullet\}$ where $\times$ and $\bullet$ are the touched (but non-accessed) and accessed points respectively.
A Greedy matrix $G(X)$ is naturally associated with a type function $f:G(X) \rightarrow {\mathbb T}$ which assigns $\times$ to points in $G(X) \setminus X$ and $\bullet$ to points in $X$.  
Therefore, in the statement  
\[ (G(X), f) \mbox{ avoids }  \kbordermatrix{
     &  &   \\
      &  & \times  \\ 
    &  \bullet &     }\]  
we will often omit the types and simply say $G(X)$ instead of $(G(X),f)$.

If $M$ contains a $k$-by-$q$ pattern $P$, then there exist columns $c_1 \leq \ldots\leq c_q$ and rows $r_1\leq \ldots \leq r_k$ such that the induced submatrix of $M$ on those contains $P$.  
In such case, we use the following notation to specify such rows and columns where the pattern appears: 
\[M \mbox{ contains }  \kbordermatrix{
     & c_1 &  \ldots & c_q  \\
    r_k  &  &   &  \\ 
     \vdots &   & P(i,j)  &    \\ 
      r_1 &   &  &     }\]

\vspace{0.2in} 

\noindent 
{\bf Input-revealing properties of Greedy:}    We use a small matrix gadget that allows us to ``reveal'' the location of an input point in $X$. 

\begin{claim} [Generic Capture Gadget~\cite{chalermsook2015pattern}]\label{claim: inputRevealing} 
If $G(X)$ contains $\kbordermatrix{
    & a & b & c \\
   t_2  &  &\times  & \\ 
   t_1 &  \times &  & \times  }$,  
then the input matrix $X$ is non-empty in the rectangle $[a+1,c-1] \times [t_1+1,t_2]$. 
\end{claim}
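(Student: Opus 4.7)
The plan is to pinpoint a canonical candidate input point by looking at when key $b$ is first touched strictly after time $t_1$. Let $t^\star$ be the smallest time in $[t_1+1,t_2]$ at which Greedy touches key $b$; this exists because $(b,t_2)$ itself is a touch. Let $d$ denote the column of the (unique) input point at time $t^\star$, i.e. the $\bullet$-point that Greedy is serving at that step. The whole argument will be a short case distinction on where $d$ lies relative to $a$ and $c$.

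First I would handle the easy case: if $d\in(a,c)$, then $(d,t^\star)$ is an input point lying inside $[a+1,c-1]\times[t_1+1,t_2]$ and we are done. The real content is the complementary case $d\le a$ or $d\ge c$. In that case $d\neq b$, so $(b,t^\star)$ is not the accessed point but a captured one; by the Greedy rule the closed rectangle $\Box_{(d,t^\star),(b,\tau(b,t^\star))}$ contains only its two defining corners. By minimality of $t^\star$, no touch of $b$ occurs in $(t_1,t^\star)$, so $\tau(b,t^\star)\le t_1$.

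Now I would derive a contradiction from emptiness of that rectangle using the two given $\times$-points at row $t_1$. If $d\le a$, the rectangle is $[d,b]\times[\tau(b,t^\star),t^\star]$ and contains $(a,t_1)$, since $d\le a<b$ and $\tau(b,t^\star)\le t_1<t^\star$. Symmetrically, if $d\ge c$, the rectangle $[b,d]\times[\tau(b,t^\star),t^\star]$ contains $(c,t_1)$. Either way I contradict the capture-emptiness property, ruling out $d\notin(a,c)$ and completing the proof.

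The only subtlety, and where one must be careful, is verifying that $(a,t_1)$ (respectively $(c,t_1)$) is a \emph{genuine} extra point inside the rectangle and not one of the two permitted corners $(d,t^\star)$ or $(b,\tau(b,t^\star))$. This reduces to checking $a\neq b$ (given) and $t_1<t^\star$ (by the choice $t^\star\ge t_1+1$), and symmetrically $c\neq b$ and the same time inequality; these are all immediate from the hypotheses $a<b<c$ and $t_1<t_2$. I expect no other obstacle — the proof is essentially a one-step application of the definition of Greedy together with the ``first re-touch of $b$'' idea, so the writeup should be short.
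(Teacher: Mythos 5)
Your proof is correct: the ``first re-touch of $b$ after $t_1$'' choice forces $\tau(b,t^\star)\le t_1$, and the case analysis on the accessed key $d$ together with the emptiness condition in Greedy's rule does exactly what is needed, with the corner-coincidence check handled properly. Note that the paper itself only cites this claim from~\cite{chalermsook2015pattern} rather than proving it, but your argument is essentially the same minimal-touch-time-plus-emptiness-contradiction technique the paper uses for the analogous one-sided gadget (\Cref{hidden}), so there is nothing further to flag.
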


\begin{claim} [One-sided Capture Gadget]\label{hidden} 
If $G(X)$ contains $\kbordermatrix{
     & a & b  \\
     t_2 &  & \times  \\ 
    t_1&  \times &     }$ or $\kbordermatrix{
     & a & b  \\
     t_2 &  & \times  \\ 
    t_1&  \bullet &     }$,
then  input matrix $X$ is non-empty in  rectangle $[a+1,\infty) \times [t_1+1,t_2]$. This  holds symmetrically for  $\kbordermatrix{
     & a & b  \\
     t_2 & \times &   \\ 
    t_1&  & \times    }$ and $\kbordermatrix{
     & a & b  \\
     t_2 & \times &   \\ 
    t_1&   & \bullet     }$.
\end{claim}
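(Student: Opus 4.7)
The plan is to exploit Greedy's defining empty-rectangle property: whenever Greedy introduces a touched point $(b,t)$ in response to an input $(c,t)$, the rectangle $\Box_{(c,t),(b,\tau(b,t))}$ contains no $G(X)$-points besides its two corners. Combined with the hypothesis that $(a,t_1)\in G(X)$, this will either directly exhibit an input in the desired strip or force a contradiction that drives the argument to the right time.

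First I would focus on the configuration $\kbordermatrix{& a & b \\ t_2 & & \times \\ t_1 & \times &}$; the variant with $\bullet$ at $(a,t_1)$ is treated identically, since the proof only uses that $(a,t_1)$ is a point of $G(X)$. Let $t^*\in(t_1,t_2]$ be the \emph{earliest} time after $t_1$ at which key $b$ is touched (this exists since $(b,t_2)$ is touched). If $(b,t^*)$ is of type $\bullet$, then because $b\geq a+1$ and $t_1<t^*\leq t_2$, the point $(b,t^*)$ lies in $[a+1,\infty)\times[t_1+1,t_2]$ and we are done.

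Otherwise $(b,t^*)$ is of type $\times$, added by Greedy to serve some input $(c^*,t^*)$ with $c^*\neq b$. If $c^*>a$, then $(c^*,t^*)$ lies in $[a+1,\infty)\times[t_1+1,t_2]$ and we are done. So suppose for contradiction that $c^*\leq a$. By Greedy's rule, the rectangle $[c^*,b]\times[\tau(b,t^*),t^*]$ contains only the corners $(c^*,t^*)$ and $(b,\tau(b,t^*))$. Now examine $\tau(b,t^*)$: if $\tau(b,t^*)>t_1$, then $b$ is touched somewhere in $(t_1,t^*)$, contradicting the minimality of $t^*$; if $\tau(b,t^*)\leq t_1$, then $(a,t_1)$ lies inside this rectangle (using $c^*\leq a<b$ and $\tau(b,t^*)\leq t_1<t^*$) and is distinct from the two corners (using $a\neq b$ and $t_1\neq t^*$), contradicting emptiness. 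Either way the case $c^*\leq a$ is impossible, and the claim follows.

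The symmetric configurations with $\times$ at $(a,t_2)$ and $\times$ or $\bullet$ at $(b,t_1)$ are handled by the mirror argument: take $t^*$ to be the earliest time after $t_1$ at which \emph{$a$} is touched, and examine the input $(c^*,t^*)$ that triggered it; either $c^*<b$ immediately produces an input in $(-\infty,b-1]\times[t_1+1,t_2]$, or the same minimality/empty-rectangle dichotomy applied to $(b,t_1)$ yields a contradiction. The only delicate point I foresee is verifying that the corner coincidences in the rectangle are genuinely ruled out by strict coordinate inequalities; beyond this bookkeeping, the argument is a direct one-shot application of Greedy's defining property, with no inductive descent required.
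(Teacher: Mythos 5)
Your proof is correct and follows essentially the same route as the paper: a minimality argument (you take the earliest touch of key $b$ after $t_1$, the paper takes the earliest touched point anywhere in the strip $[a+1,\infty)\times[t_1+1,t_2]$) combined with Greedy's empty-rectangle rule, with $(a,t_1)$ serving as the forbidden third point in the box. The bookkeeping you flag (that $(a,t_1)$ cannot coincide with either corner, via $a<b$ and $t_1<t^*$) indeed checks out, so no gap remains.
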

\begin{proof}
Let $(c,t')$ be a touched point in the rectangle $[a+1,\infty) \times [t_1+1,t_2]$ with smallest $t'$. We will show that $X$ is non-empty in the rectangle $[a+1,\infty) \times [t_1+1,t']$, which will imply the Claim. Assume the rectangle $[a+1,\infty) \times [t_1+1,t']$ is input-empty. Let $p$ be an input point at time $t'$ such that $p.x \le a$. Since $c$ is touched at time $t'$, the rectangle $[p.x,c]\times[\tau(c,t'),p.y]$ must be empty. This contradicts to the fact that $(a,t_1)\in [p.x,c]\times[\tau(c,t'),p.y]$.
\end{proof}

\begin{figure}[ht]
  \centering
  \includegraphics[scale=0.35]{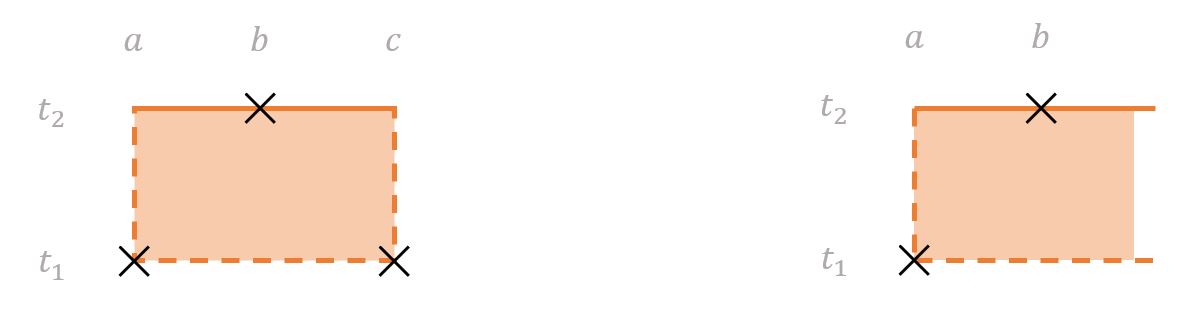}
  \caption{Illustrations of \Cref{claim: inputRevealing} (left) and \Cref{hidden} (right)}
\end{figure}

\begin{corollary} \label{refinedHidden}
If $G(X)$ contains $\kbordermatrix{
     & a & b  \\
     t_2 &  & \times  \\ 
    t_1&  \times &     }$ or $\kbordermatrix{
     & a & b  \\
     t_2 &  & \times  \\ 
    t_1&  \bullet &     }$,
then  input matrix $X$ is non-empty in  rectangle 
$\left( [b,\infty)\times [t_1+1,t_2]\right)$ or in  $([a+1,\infty)\times [t_2,t_2])$. This  holds symmetrically for  $\kbordermatrix{
     & a & b  \\
     t_2 & \times &   \\ 
    t_1&  & \times    }$ and $\kbordermatrix{
     & a & b  \\
     t_2 & \times &   \\ 
    t_1&   & \bullet     }$.
\end{corollary}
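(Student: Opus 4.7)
The plan is to argue by contradiction, using \Cref{hidden} as the starting point. Write $R_1 = [b,\infty)\times [t_1+1, t_2]$ and $R_2 = [a+1,\infty)\times \{t_2\}$ for the two target regions. Suppose both are input-empty. Since $R_1\cup R_2 \subseteq [a+1,\infty)\times[t_1+1,t_2]$, \Cref{hidden} still guarantees some input point inside the larger slab $[a+1,\infty)\times[t_1+1, t_2]$. Among all inputs in this slab, pick $(p_0, s_0)$ with the largest time coordinate $s_0$. The assumption that $R_1$ is input-empty forces $p_0 < b$, and the assumption that $R_2$ is input-empty forces $s_0 < t_2$. Maximality of $s_0$ together with $R_1$ being empty then implies that the rectangular strip $[a+1,\infty)\times [s_0+1, t_2]$ is input-empty.

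The next step is to walk backwards through the touches of column $b$. Set $t^{(0)} = t_2$ and inductively $t^{(i+1)} = \tau(b, t^{(i)})$, continuing as long as $t^{(i+1)} > s_0$. At each time $t^{(i)}$ in this chain, column $b$ is touched, and the input point at time $t^{(i)}$, say at column $p^{(i)}$, must satisfy $p^{(i)}\leq a$, because $t^{(i)} \in [s_0+1, t_2]$ and the slab above is input-empty in columns $\geq a+1$. The definition of Greedy then guarantees that the closed rectangle $[p^{(i)}, b]\times [t^{(i+1)}, t^{(i)}]$ contains only its two opposite corners $(p^{(i)}, t^{(i)})$ and $(b, t^{(i+1)})$, and is otherwise empty of both input and touched points.

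Since $t^{(0)} > t^{(1)} > \cdots$ is a strictly decreasing sequence of integers, the iteration halts at some index $k$ with $t^{(k+1)} \leq s_0 < t^{(k)}$. For this $k$ the forbidden rectangle is $[p^{(k)}, b]\times [t^{(k+1)}, t^{(k)}]$, and the input point $(p_0, s_0)$ sits inside it: horizontally, $p^{(k)} \leq a < p_0 < b$; vertically, $t^{(k+1)} \leq s_0 < t^{(k)}$. Moreover $(p_0, s_0)$ is neither corner, since $p_0 \neq b$ and $s_0 \neq t^{(k)}$. This contradicts the emptiness of the rectangle, completing the proof; the symmetric pattern is handled by a horizontal reflection of the argument.

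The main obstacle is the bookkeeping for the backward walk along the touches of $b$: one has to verify at every intermediate step that the current input lies in the left half $(-\infty, a]$ (so that the Greedy rectangle condition can be invoked on $[p^{(i)}, b]\times[t^{(i+1)}, t^{(i)}]$), and that the walk ends with $(p_0, s_0)$ strictly interior to the final rectangle rather than on its boundary. Choosing $(p_0, s_0)$ to have the \emph{largest} time coordinate in the slab is precisely what makes both of these checks go through cleanly.
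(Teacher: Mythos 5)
Your argument is correct. Both you and the paper start the same way: assume both target rectangles are input-empty, invoke \Cref{hidden} to get an input point in the slab $[a+1,\infty)\times[t_1+1,t_2]$, and take an extremal (topmost) such point. The difference is in how the contradiction is extracted. The paper simply applies \Cref{hidden} a \emph{second} time, now with the topmost input point $(c,t')$ playing the role of the lower-left $\bullet$ and $(b,t_2)$ the upper-right $\times$: this yields an input in $[c+1,\infty)\times[t'+1,t_2]$, which under the emptiness assumptions must again land in $[a+1,b-1]\times[t'+1,t_2-1]$, contradicting the maximality of $t'$. You instead go back to the definition of Greedy and perform a backward walk $t^{(0)}=t_2$, $t^{(i+1)}=\tau(b,t^{(i)})$ along the touch history of column $b$, stopping when the walk drops below $s_0$, and then exhibit $(p_0,s_0)$ as a forbidden third point inside the closed rectangle $[p^{(k)},b]\times[t^{(k+1)},t^{(k)}]$ that Greedy required to be empty when it touched $b$ at time $t^{(k)}$. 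This is essentially an inline re-proof of the capture-gadget mechanism rather than a black-box reuse of \Cref{hidden}; your bookkeeping (maximality of $s_0$ giving input-emptiness of $[a+1,\infty)\times[s_0+1,t_2]$, hence $p^{(i)}\le a$ for every step of the walk, and $(p_0,s_0)$ being strictly off both corners) all checks out. What the paper's route buys is brevity — two sentences by leveraging the already-proved claim; what yours buys is a self-contained argument that only uses the Greedy rectangle rule directly, at the cost of redoing work that \Cref{hidden} already encapsulates.
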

\begin{proof}
Assume the input matrix $X$ is empty in the rectangle 
$\left( [b,\infty)\times [t_1+1,t_2]\right)$ and in the rectangle  $([a+1,\infty)\times [t_2,t_2])$. Let $(c,t')$ be the top most input point in $[a+1,b-1] \times [t_1+1,t_2-1]$. From \Cref{hidden}, $X$ must be non-empty in $[a+1,b-1] \times [t_1+1,t_2-1]$. This contradicts to the fact that $(c,t')$ be the top most input point.
\end{proof}

\begin{claim}[Monotone Capture Gadget~\cite{chalermsook2015pattern}]\label{claim:monotoneGadget}
If $X$ avoids $(1,2,...,k)$ and $G(X)$ contains $\kbordermatrix{
    & a_1 &a_2& \cdots & a_{k+1} \\
   t_{k+1}  & \times& &  & \\ 
   \vdots  & &\cdots&&\\
t_2      & &&\times&\\
   t_1 &&   &  &\times}$, then the input matrix $X$ is non-empty in the rectangle $[a_1,a_{k+1}-1] \times [t_1+1,t_{k+1}]$.
\end{claim}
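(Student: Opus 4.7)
I would prove the contrapositive: assume that the rectangle $R := [a_1, a_{k+1}-1] \times [t_1+1, t_{k+1}]$ contains no input of $X$, and exhibit an increasing subsequence of length $k$ in $X$, contradicting the pattern-avoidance hypothesis.

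The starting point is to apply the symmetric version of \Cref{hidden} to each consecutive pair of staircase points $(s_{j+1}, s_j) = \bigl((a_{k+1-j}, t_{j+1}),\,(a_{k+2-j}, t_j)\bigr)$ for $j = 1, \ldots, k$. Each such application captures an input $p_j$ with $p_j.x \le a_{k+2-j} - 1$ and $p_j.y \in [t_j+1, t_{j+1}]$. The time strips are disjoint, so the $y$-coordinates automatically satisfy $p_1.y < p_2.y < \cdots < p_k.y$. Since $p_j.y \in [t_1+1, t_{k+1}]$ and $p_j.x \le a_{k+1}-1$, the input-emptiness of $R$ forces $p_j.x < a_1$ for every $j$. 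A parallel preliminary check is: if the input $q_j$ occurring at any staircase time $t_j$ with $j \in [2,k]$ coincides with $s_j$, then $q_j \in R$ and we are done immediately; so we may assume $q_j \neq s_j$ and hence $q_j.x \in (-\infty, a_1) \cup [a_{k+1}, \infty)$ for each such $j$.

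The technical heart of the argument is then to promote the $k$ captured inputs (which have strictly increasing times but $x$-coordinates only known to be bounded above by $a_1 - 1$) into an actual length-$k$ increasing subsequence in $X$. The plan is to re-select the $p_j$'s strip by strip from $j=1$ upward, maintaining a strictly increasing $x$-chain $p_1.x < p_2.x < \cdots < p_k.x$; in each strip, one seeks an input whose $x$-coordinate strictly exceeds the previously selected value. The main obstacle is handling the case when this re-selection fails in some strip $j$, i.e.\ when every input in the $j$-th strip with $x \le a_{k+2-j}-1$ has $x \le p_{j-1}.x$. In this failure configuration, I expect to combine a secondary application of the capture gadgets (\Cref{hidden} or \Cref{refinedHidden}) with Greedy's visibility rule at time $t_j$ — exploiting the fact that Greedy must touch $s_j$ in response to input $q_j$, and that $q_j$ must lie either far left or far right of the staircase — to force an input to materialize inside $R$, contradicting the initial assumption. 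Making this secondary argument watertight across all failure configurations is where the geometric structure of Greedy's execution log at the staircase points (beyond pure pattern-avoidance combinatorics) enters essentially, and is the step I expect to be most delicate.
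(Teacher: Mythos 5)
First, note that the paper does not actually prove this claim: it is imported verbatim from~\cite{chalermsook2015pattern}, so there is no in-paper proof to compare against; I can only judge your proposal on its own terms.

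Your first two steps are fine: applying the symmetric form of \Cref{hidden} to each consecutive pair of staircase points does yield inputs $p_1,\ldots,p_k$ in disjoint time strips with $p_j.x \le a_{k+2-j}-1$, and emptiness of $R=[a_1,a_{k+1}-1]\times[t_1+1,t_{k+1}]$ does force every $p_j.x < a_1$. You also correctly recognize that this alone proves nothing, since the captured points need not be increasing in $x$. But the remainder is a plan, not a proof, and the plan has a structural problem. The invariant you propose --- one point per time strip, re-selected so that the $x$-coordinates strictly increase --- is not something the hypotheses force: a strip $[t_j+1,t_{j+1}]$ may simply contain no input to the right of the previously chosen $x$-value (all of its inputs, captured or otherwise, can sit far to the left of $p_{j-1}.x$), and the claim is still true in that situation, so any argument that insists on extending the chain within the current strip must fail or must fall back on the unspecified "secondary argument". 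That secondary argument is exactly the heart of the claim, and "I expect to combine a capture gadget with Greedy's visibility rule to force an input into $R$" is not a proof; in particular you never confront the genuine obstruction, namely that when you anchor a capture at an already-found input $p_{j-1}$ (a $\bullet$ below and left of a staircase $\times$), \Cref{hidden} or \Cref{refinedHidden} only places the new input in a half-infinite region to the right, and emptiness of $R$ then leaves two live cases: the new input lies left of $a_1$ (good) or it lies at column $\ge a_{k+1}$, strictly to the right of the entire staircase. In that escape case no gadget in the paper produces a point up-and-to-the-right of it (all remaining staircase $\times$'s are up-and-left, and the symmetric gadgets only reveal inputs to the left), so neither an extension of the chain nor a point inside $R$ follows, and your proposal says nothing about how to dispose of it.

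So there is a genuine gap: the decisive step --- converting the revealed inputs into a length-$k$ increasing subsequence of $X$, or else manufacturing an input inside $R$, uniformly over all configurations including the "far right of $a_{k+1}$" escape --- is precisely what the cited proof of~\cite{chalermsook2015pattern} has to do, and it is absent here. If you want to complete the argument yourself, I would advise abandoning the per-strip re-selection and instead building the chain by induction on the staircase (use the bottom two points to seed a revealed input left of $a_1$, then repeatedly capture against the next staircase point up-and-left), stating explicitly what invariant the current chain endpoint satisfies and giving a separate, self-contained treatment of the case where a captured input jumps to a column $\ge a_{k+1}$; that case will require going back to Greedy's emptiness rule at the relevant access time, not just the black-box gadgets.
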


\section{Warm-Up} \label{section: seq}

We present two warm-up proofs before proceeding to our main results: (i) a very short proof for the sequential access theorem of Fox~\cite{fox2011upper}, and (ii) a proof when a given input sequence avoids both $(2,3,1)$ and $(2,1,3)$. 
This is a special case of both preorder traversal and deque conjectures.  


\vspace{0.2in} 

\noindent 
{\bf Sequential Access Theorem:}
Let $X$ be a sequence that avoids $(2,1)$ (equivalently, $X$ is the permutation $(1,2,\ldots, n)$) and $G(X)$ be the Greedy points on $X$. Notice that the points in $X$ lie on the diagonal line $x=y$. 
Decompose $G(X)$ into $X\cup Y_L \cup Y_R$ where $Y_L= \{q \in G(X)\mathrel{}\mid\mathrel{} q.y>q.x\}$ and $Y_R= \{q \in G(X)\mathrel{}\mid\mathrel{} q.y<q.x\}$. In words, the sets $Y_L$ and $Y_R$ are the points strictly on the left and right of the diagonal line  respectively.
\begin{observation} \label{outside}
$Y_L$ avoids $\kbordermatrix{
     &  &   \\
     & \times &   \\ 
    &   & \times }$, so $|Y_L| \leq O(n)$. 
\end{observation}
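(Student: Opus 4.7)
The plan is to establish that $Y_L$ cannot contain two touched points forming a ``decreasing'' pair by a minimality/descent argument, and then conclude from the standard extremal bound that any $0/1$ matrix avoiding this $2$-point pattern on an $n$-by-$n$ grid has $O(n)$ ones.

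First I would set up the contradiction: suppose there exist $p_1 = (c_1, r_1), p_2 = (c_2, r_2) \in Y_L$ with $c_1 < c_2$ and $r_1 > r_2$. Being in $Y_L$ means $r_i > c_i$, so the four values satisfy $c_1 < c_2 < r_2 < r_1$. Among all such bad pairs I would pick one minimizing the temporal gap $r_1 - r_2$.

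The heart of the argument would be a single descent step. Because Greedy touches $(c_1, r_1)$, its rectangle rule forces $[c_1, r_1] \times [\tau(c_1, r_1), r_1]$ to contain no touched point other than $(c_1, \tau(c_1, r_1))$ and the input $(r_1, r_1)$. But $(c_2, r_2)$ is itself a touched point with $c_2 \in (c_1, r_1)$ and $r_2 < r_1$, so the only way to keep it out of the rectangle is $\tau(c_1, r_1) > r_2$. This produces a strictly earlier Greedy touch $(c_1, \tau(c_1, r_1))$ whose row exceeds $r_2 > c_1$, so it too lies in $Y_L$; paired with $p_2$ it yields a bad pair of strictly smaller gap $\tau(c_1, r_1) - r_2 < r_1 - r_2$, contradicting minimality.

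Once avoidance is established, the bound $|Y_L| \le 2n - 1$ would follow from a routine telescoping: for any $0/1$ matrix avoiding this $2$-point pattern, the per-column row ranges must form an increasing chain of intervals that overlap only at endpoints, and summing the range sizes gives the bound. The main corner case I anticipate is when $\tau(c_1, r_1)$ refers to the initial tree $T$ (row index $\le 0$); there $\tau \le 0 < r_2$ trivially, so $(c_2, r_2)$ is already inside the rectangle and we obtain a direct contradiction to Greedy's rule rather than going through a descent step. I do not foresee a significant obstacle beyond this --- the whole argument is a short, clean invocation of Greedy's rectangle-emptiness characterization.
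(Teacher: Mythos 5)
Your proof is correct, and it takes a genuinely different route from the paper's. The paper disposes of the alleged pattern in $Y_L$ by applying the one-sided/refined input-revealing gadget (Corollary~\ref{refinedHidden}): from the two touched points it extracts an input point $q$ to the left of and above the input $p=(b,b)$, and then $p,q$ form a $(2,1)$ in $X$, contradicting the assumption that $X$ is the sequential permutation. You instead argue entirely at the level of Greedy's rectangle rule: you pick a bad pair with smallest vertical gap, observe that the emptiness of $\Box_{(c_1,\tau(c_1,r_1)),(r_1,r_1)}$ forces $\tau(c_1,r_1)>r_2$, and note that $(c_1,\tau(c_1,r_1))$ is again in $Y_L$ (since $\tau(c_1,r_1)>r_2>c_2>c_1$), yielding a strictly smaller bad pair, a contradiction. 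The initial-tree corner case you flag ($\tau(c_1,r_1)\le 0$) is handled correctly. In effect you are inlining the kind of $\tau$-descent that the paper packages once and for all in Claim~\ref{hidden}/Corollary~\ref{refinedHidden}: the paper's gadget-based route is more reusable across the other lemmas (which is why the paper states those claims separately), while your argument is more self-contained and avoids invoking the input-revealing machinery at the cost of being tailored to this $2$-point pattern. Both give the $O(n)$ bound once avoidance of the decreasing $2$-point pattern is established.
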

\begin{proof}
Assume otherwise that $Y_L$ contains $\kbordermatrix{
     & a & b  \\
     t_2 & \times &   \\ 
    t_1&   & \times }$ for some time indices $t_1 < t_2$ and keys $a <b$. Let $p = (b,t_0)$ be the input point below $(b,t_1)$. Applying \Cref{refinedHidden}, there exists an input point $q$ in the region $(-\infty, b-1] \times [t_1+1,t_2]$. Notice that $p$ and $q$ form $(2,1)$ a contradiction.  
\end{proof}

\begin{claim} \label{claimseq}
    $Y_R$ avoids $\kbordermatrix{
     & & &  \\
     & \times & &\times  \\ 
    &   & \times& }$. Therefore, $|Y_R|= O(n)$. 
\end{claim}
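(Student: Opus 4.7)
The plan is to show that the claimed three-point pattern inside $Y_R$ would contradict \Cref{refinedHidden} once combined with the sequential structure of $X$. Suppose for contradiction that there exist columns $a < b < c$ and rows $t_1 < t_2$ with $(a, t_2), (b, t_1), (c, t_2) \in Y_R$. Since every point $q$ of $Y_R$ satisfies $q.y < q.x$, I get $t_2 < a$, $t_2 < c$, and $t_1 < b$. The crucial observation is that combining $t_2 < a$ with $a < b$ yields $t_2 < b$; this is the only place where the third point $(a,t_2)$ enters, and it is precisely what will distinguish the argument from a (false) analogous statement for only the two points $(b,t_1)$ and $(c,t_2)$.

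Next, I isolate the two points $(b, t_1)$ and $(c, t_2)$, which together form the sub-pattern $\kbordermatrix{ & b & c \\ t_2 & & \times \\ t_1 & \times & }$ inside $G(X)$. Applying \Cref{refinedHidden} to this sub-pattern shows that $X$ must be non-empty either in the region $[c, \infty) \times [t_1+1, t_2]$ or in the region $[b+1, \infty) \times [t_2, t_2]$. Since $X=(1,2,\ldots,n)$ is the identity permutation, the only input point on row $t$ is the diagonal point $(t,t)$.

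From here the case check is short. An input $(t,t)$ in the first region would require $t \le t_2$ and $t \ge c$, which is impossible because $t_2 < a < c$. The only candidate in the second region is $(t_2, t_2)$, which would require $t_2 \ge b+1$, contradicting $t_2 < b$. Both cases fail, so \Cref{refinedHidden} is violated, and therefore $Y_R$ avoids the three-point pattern. To conclude $|Y_R| = O(n)$, I invoke the first bullet of \Cref{thm:exbound}: the stated pattern is the vertical flip of $\kbordermatrix{& & & \\ & & 1 & \\ & 1 & & 1}$, and since the extremal function is invariant under flipping the matrix upside down, the same $O(n)$ bound applies.

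The point to watch out for is making sure that all three points really are needed: without $(a,t_2)$, the inequality $t_2 < b$ is lost, and the second region of \Cref{refinedHidden} would no longer be excluded (a diagonal input $(t_2,t_2)$ with $t_2 \ge b+1$ would then be a legitimate witness). So the role of the third point is purely to prevent the sequential access at time $t_2$ from supplying the witness input that \Cref{refinedHidden} would otherwise demand. No other delicate steps are expected.
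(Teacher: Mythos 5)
Your proof is correct, and it tracks the paper's argument closely in structure (decompose, isolate the $(b,t_1),(c,t_2)$ sub-pattern, apply a one-sided capture gadget, contradict the identity structure of $X$), but you invoke a slightly different tool. The paper applies the basic one-sided capture gadget, \Cref{hidden}, to obtain an input point $p$ in $[b+1,\infty)\times[t_1+1,t_2]$, then separately names the diagonal input $q=(t_2,t_2)$ at row $t_2$, argues $p\neq q$ so $p.y<q.y$ while $p.x>q.x$, and concludes $p,q$ induce the forbidden $(2,1)$. You instead apply the refined version, \Cref{refinedHidden}, which already splits the conclusion into the two rectangles $[c,\infty)\times[t_1+1,t_2]$ and $[b+1,\infty)\times\{t_2\}$, and then you rule each out directly from $t_2<a<b<c$. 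The two proofs are logically equivalent (\Cref{refinedHidden} is derived from \Cref{hidden} by essentially the argument the paper does inline here), but your route is slightly tidier: the case split is handed to you by the corollary, you never need to reason about two distinct input points, and you are more explicit than the paper about why $(a,t_2)$ is genuinely needed (it is what forces $t_2<b$, which kills the second rectangle). Your justification of the $O(n)$ bound via a vertical flip of the pattern in \Cref{thm:exbound} is also valid and, in fact, more explicit than what the paper writes.
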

\begin{proof}
Assume otherwise that $Y_R$ contains $\kbordermatrix{
     & a & b &c \\
     t_2 & \times & &\times  \\ 
    t_1&   & \times& }$ for some time indices $t_1 < t_2$ and keys $a<b<c$. Let $q$ denote an input point at time $t_2$. Because $(a,t_2) \in Y_R$, we have that $q.x<a$. 
    Applying \Cref{hidden} and the fact that $Y_R$ contains  $\kbordermatrix{
     & b & c  \\
     t_2 &  & \times  \\ 
    t_1&  \times & 
    }$, we would have that the rectangular region $[b+1,\infty) \times [t_1+1,t_2]$ must contain input point $p$, which cannot be the same point as $q$; so we have $p.y < q.y$. The points $p$ and $q$ induce pattern $(2,1)$, a contradiction. See Figure~\ref{fig:seq} for illustration. 
\end{proof}


\begin{figure}[ht]
  \centering
  \includegraphics[scale=0.4]{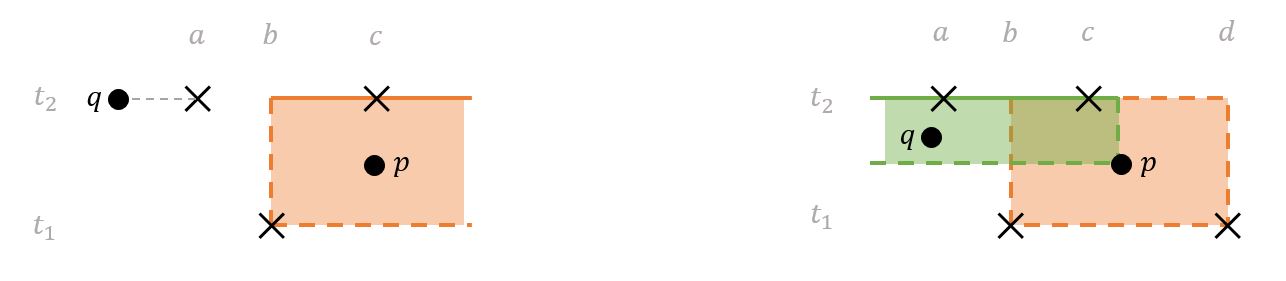}
  \caption{Illustrations of the proofs for \Cref{claimseq} (left) and \Cref{claimpath} (right) \label{fig:seq}} 
\end{figure}

\vspace{0.2in} 

\noindent 
{\bf ``Deque Access'' Theorem:} 
Let $X$ be an input permutation that avoids $(2,3,1)$ and $(2,1,3)$. 
This is a special case of the deque conjecture, roughly equivalent to the case when we are only allowed to delete the minimum and maximum. 
In fact, as we argue later, this is a special case of preorder  traversal, deque, and split  conjectures. 
In this section, we show that $|G(X)| = O(n \alpha(n))$.

Let $r \in X$ be an input point on the top row of $X$. We decompose $G(X)$ into $X \cup G_{<}\cup G_{>} \cup G_{=}$ where 
\begin{itemize}
    \item $G_{<}=\{q \in G(X) \setminus X\mathrel{}\mid\mathrel{} q.x<r.x\}$
    \item $G_{>}=\{q \in G(X) \setminus X\mathrel{}\mid\mathrel{} q.x>r.x\}$
    \item $G_{=}=\{q \in G(X) \setminus X\mathrel{}\mid\mathrel{} q.x=r.x\}$
\end{itemize}

To highlight the structure of this sequence, we similarly break $X$ into $X= \{r\} \cup X_{<} \cup X_{>}$ where $X_< = \{p \in X: p.x < r.x\}$. 

\begin{observation}
The points in $X_{<}$ and $X_{>}$ avoid $(2,1)$ and $(1,2)$ respectively.
\end{observation}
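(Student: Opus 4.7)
The plan is to use the point $r$ directly as the ``third'' point witnessing a forbidden length-$3$ pattern whenever $X_<$ or $X_>$ were to contain the relevant length-$2$ obstruction. By assumption $r$ sits on the top row of $X$, so every other point of $X$ has strictly smaller $y$-coordinate than $r$; moreover, by definition, every point of $X_<$ lies strictly to the left of $r$ (in $x$-coordinate) and every point of $X_>$ lies strictly to the right. These two facts are exactly what is needed to promote a length-$2$ pattern on $X_<$ or $X_>$ to a forbidden length-$3$ pattern on $X$.

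First I would handle $X_<$. Suppose toward contradiction that there exist $p, q \in X_<$ with $p.y < q.y$ but $p.x > q.x$ (so $(p,q)$ forms the pattern $(2,1)$). Consider the triple $(p, q, r)$, read in increasing $y$-order; this is a valid ordering because $p.y < q.y < r.y$. Their $x$-coordinates satisfy $q.x < p.x < r.x$, where the last inequality uses $p \in X_<$. Hence the triple $(p,q,r)$ is order-isomorphic to $(2,1,3)$, contradicting the assumption that $X$ avoids $(2,1,3)$.

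The argument for $X_>$ is symmetric. Suppose $p, q \in X_>$ with $p.y < q.y$ and $p.x < q.x$ (so $(p,q)$ realizes $(1,2)$). In time order the triple $(p, q, r)$ has $x$-coordinates $p.x, q.x, r.x$ satisfying $r.x < p.x < q.x$ (the first inequality uses $p, q \in X_>$), which is order-isomorphic to $(2,3,1)$, contradicting the assumption that $X$ avoids $(2,3,1)$.

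There is essentially no obstacle here beyond being careful about which coordinate plays which role: $r$ must be the latest point in time (largest $y$), and $X_<$ vs.\ $X_>$ decides whether $r$ becomes the maximum or minimum value when comparing $x$-coordinates. Once $r$ is correctly placed as the third coordinate, the two forbidden patterns $(2,1,3)$ and $(2,3,1)$ fall out immediately from the two potential two-point obstructions.
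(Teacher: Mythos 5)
Your proof is correct and is exactly the argument the paper leaves implicit (the observation is stated without proof): since $r$ is the last accessed point and separates $X_<$ from $X_>$ in key value, appending $r$ to any $(2,1)$ occurrence in $X_<$ yields the forbidden $(2,1,3)$, and appending it to any $(1,2)$ occurrence in $X_>$ yields the forbidden $(2,3,1)$. Nothing further is needed.
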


This means that the input points in $X_{<}$ form an increasing sequence, while those in $X_{>}$ form a decreasing sequence.  
Intuitively, we view input points as a triangle without base, then we partition the plane into the vertical column, left side and right side of the triangle. 
\begin{observation}
$|G_{=}| \leq n$.
\end{observation}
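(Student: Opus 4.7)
The plan is to prove this observation by a direct row-counting argument, with no appeal to pattern-avoidance structure. By definition of $G_=$, every one of its elements lies in the single column indexed by the key $r.x$, so the task reduces to bounding the number of non-input touched entries of $G(X)$ that appear in that column.

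I would split the rows of $G(X)$ into the positive rows $\{1, \ldots, n\}$ and the non-positive rows $\{-(n-1), \ldots, 0\}$ that encode the initial tree $T$. In each positive row $t$, Greedy's update rule adds at most one point $(a,t)$ per key $a$, so column $r.x$ receives at most one entry per positive row. Moreover, since $r$ was chosen as the input point on the topmost row and $X$ is a permutation of length $n$, the only positive row whose column-$r.x$ entry lies in $X$ is row $r.y$, and the corresponding point is $r$ itself, which is excluded from $G_=$. Consequently, the positive rows contribute at most $n-1$ entries to $G_=$. For the non-positive rows, the initial tree $T$ assigns exactly one marked entry per key (one node per key), so column $r.x$ is hit at most once across those rows. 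Summing the two contributions yields $|G_=| \le (n-1) + 1 = n$.

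The main (and essentially only) technical point I would need to justify is that the initial-tree portion of $G_T(X)$ contributes at most one entry to any given column; this is immediate from the convention laid out in the preliminaries, where $T$ is described as an $n \times n$ matrix encoding one node per key. There is no genuine combinatorial obstacle: the observation is a clean pigeonhole statement over a fixed column, and neither the $(2,3,1)$- nor the $(2,1,3)$-avoidance of $X$ is invoked. Compared with the non-trivial bounds for $G_<$ and $G_>$ that will follow, the bound on $G_=$ is the easiest ingredient of the decomposition.
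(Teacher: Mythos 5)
Your argument is correct: $G_=$ lives in the single column $r.x$, each row contributes at most one entry to a column, the one input point in that column among the $n$ positive rows is $r$ itself (excluded from $G_=$), and the initial tree contributes one point per key, giving $(n-1)+1 = n$. The paper states this observation without proof, treating it as exactly this kind of immediate per-column counting, so your proof matches the intended (trivial) justification.
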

Next we will show that $|G_{<}| \le O(n\alpha(n))$. The left-right symmetric arguments also hold for upper bounding $|G_{>}|$. We further decompose $G_{<}$ into $Y_L \cup Y_R$ where $Y_L=\{q \in G_{<} \mathrel{}\mid\mathrel{} \exists s\in X \text{ with } (q.y = s.y) \land (q.x<s.x<r.x)\}$ and $Y_R=G_{<} \setminus Y_L$. In words, the sets $Y_L$ and $Y_R$ are the points that are ``outside" and ``inside" the triangle respectively. 
\begin{observation}
$Y_L$ avoids $\kbordermatrix{
     &  &   \\
     & \times &   \\ 
    &   & \times }$, so $|Y_L| \leq O(n)$. 
\end{observation}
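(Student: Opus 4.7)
The plan is to mirror Observation~2 from the sequential-access warm-up, but to exploit the richer forbidden patterns $(2,3,1)$ and $(2,1,3)$ available for deque-access inputs. Suppose for contradiction that $Y_L$ contains $\kbordermatrix{ & a & b \\ t_2 & \times & \\ t_1 & & \times }$ for some keys $a<b$ and times $t_1<t_2$; my goal is to exhibit three distinct input points in $X$ realizing the forbidden pattern $(2,1,3)$.

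First, I would unpack the membership $(b,t_1)\in Y_L$ to obtain an input $s_1\in X$ in row $t_1$ with $b<s_1.x<r.x$. Next, I would apply the symmetric version of \Cref{refinedHidden} to the two touched points $(a,t_2)$ and $(b,t_1)$, yielding a hidden input $q\in X$ with $q.x\le b-1$ and $t_1<q.y\le t_2$. Combining $s_1$, $q$, and the top-row input $r$ gives three input points at strictly increasing times $t_1<q.y<r.y$; here $r.y>t_2$ holds because the input $s_2\in X$ in row $t_2$ (guaranteed by $(a,t_2)\in Y_L$) satisfies $s_2.x<r.x$, so $s_2\ne r$ and thus $t_2<r.y$. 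Their columns satisfy $q.x<b<s_1.x<r.x$, so the rank profile of $(s_1,q,r)$ is exactly $(2,1,3)$, contradicting the assumed avoidance of $(2,1,3)$ by $X$.

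The main step is the hidden-input extraction via \Cref{refinedHidden}; everything else is bookkeeping. The one subtlety I anticipate is confirming that $s_1$, $q$, and $r$ are three genuinely distinct input points and that the time inequalities are strict --- both follow immediately from the column and time bounds above, together with the fact that at most one input point lies in each row. Once $Y_L$ is shown to avoid this size-$2$ descending pattern, the classical linear bound for avoiding a $2\times 2$ permutation pattern (a folklore staircase argument gives at most $2n-1$ ones) yields $|Y_L|=O(n)$, as stated.
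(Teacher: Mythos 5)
The paper states this observation in the deque warm-up without proof, presumably viewing it as a routine adaptation of the sequential-case \Cref{outside}. Your argument correctly supplies the missing proof and is the natural adaptation: you extract the row-$t_1$ input $s_1$ from the definition of $Y_L$ (where the sequential proof instead used the input in column $b$), apply the one-sided capture gadget to obtain a hidden input $q$ with $q.x \le b-1$ and $t_1 < q.y \le t_2$, and then adjoin the top-row input $r$ to manufacture a $(2,1,3)$-occurrence in $X$. The column and time orderings $q.x < s_1.x < r.x$ and $s_1.y < q.y < r.y$ do give exactly $(2,1,3)$, and the distinctness checks go through. Two small simplifications are available but not needed: \Cref{hidden} already yields an input in $(-\infty, b-1] \times [t_1+1, t_2]$, so the case split in \Cref{refinedHidden} is not actually exploited; and $q.x < b < r.x$ already forces $q \ne r$ and hence $q.y < r.y$, making the detour through $s_2$ to establish $t_2 < r.y$ unnecessary.
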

\begin{claim} \label{claimpath}
    $Y_R$ avoids $\kbordermatrix{
     & & & & \\
     & \times & &\times&  \\ 
    &   & \times& &\times}$. Therefore, $|Y_R|= O(n\alpha(n))$. 
\end{claim}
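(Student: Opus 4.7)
The plan is to argue by contradiction. Suppose $Y_R$ contains the pattern at columns $a<b<c<d$ (all strictly below $r.x$) and times $t_1<t_2$; let $s_1,s_2\in X$ denote the unique inputs at rows $t_1$ and $t_2$. The $Y_R$-membership of the four pattern entries immediately gives the combinatorial constraints $s_1.x<b$ or $s_1.x\geq r.x$, and $s_2.x<a$ or $s_2.x\geq r.x$. These, together with the $(2,3,1)$/$(2,1,3)$-avoidance of $X$ and the monotonicity of $X_<$ (increasing in time) and $X_>$ (decreasing in time), will produce the contradiction.

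The first step is to extract an input point inside the pattern. Apply the Generic Capture Gadget (\Cref{claim: inputRevealing}) to the triple $(b,t_1),(c,t_2),(d,t_1)$: this yields an input $p$ with $b<p.x<d$ and $p.y\in(t_1,t_2]$. The possibility $p.y=t_2$ would force $p=s_2$, but $p.x\in(a,r.x)$ violates both alternatives for $s_2$, so $p.y\in(t_1,t_2)$; in particular $p\in X_<$ (since $p.x<r.x$).

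When $s_2.x<a$ (so $s_2\in X_<$), the three inputs $p$, $s_2$, $r$ taken in their time order $p.y<t_2<n$ have $x$-coordinates $s_2.x<a<b<p.x<r.x$, which is exactly the forbidden pattern $(2,1,3)$ in $X$. This handles the first case, mirroring the argument of \Cref{claimseq}.

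The remaining case $s_2.x\geq r.x$ is the main obstacle. I additionally apply the symmetric form of the one-sided capture gadget (\Cref{hidden}) to $(a,t_2),(b,t_1)$: this produces a second input $q$ with $q.x\leq b-1$ and $q.y\in(t_1,t_2]$, and since $s_2.x\geq r.x>b-1$ excludes $q=s_2$, we obtain $q\in X_<$ with $q.y\in(t_1,t_2)$ and $q.y<p.y$ (by the monotonicity of $X_<$ applied to $q.x<b<p.x$). The argument is then completed by a case split on whether $s_1\in X_<$ (so $s_1.x<b$) or $s_1\in X_>$ (so $s_1.x>r.x$); in each subcase, combining the ``left-side'' $X_<$ inputs $q,p$ with the ``right-side'' inputs $s_1$ (when in $X_>$), $s_2$, and the apex $r$ at row $n$, a careful choice of three inputs from $\{s_1,q,p,s_2,r\}$ realizes one of the forbidden patterns $(2,3,1)$ or $(2,1,3)$. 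The delicate point here is that no pair of the extracted inputs is directly incompatible; one must invoke the fact that $r$ lies at the top row and use the strict monotonicity of $X_<$ and $X_>$ to force the forbidden triple.
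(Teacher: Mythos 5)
Your first capture step and the case $s_2.x<a$ are fine, but the case you yourself flag as the main obstacle ($s_2.x\geq r.x$) has a genuine gap: the concluding claim that some triple from $\{s_1,q,p,s_2,r\}$ must realize $(2,3,1)$ or $(2,1,3)$ is false for the points you have extracted. After your applications of \Cref{claim: inputRevealing} and \Cref{hidden} you only know $q.x<b<p.x<d<r.x\leq s_2.x$ with $t_1<q.y<p.y<t_2\leq n$, and (in the subcase $s_1\in X_<$, where monotonicity forces $s_1.x<q.x$) the time-ordered key ranks of $s_1,q,p,s_2,r$ are $(1,2,3,5,4)$; in the subcase $s_1\in X_>$ they are $(5,1,2,4,3)$. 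Neither of these permutations contains $(2,3,1)$ or $(2,1,3)$, so no choice of three of these inputs is forbidden, and no appeal to $r$ being on the top row or to strict monotonicity can change their relative order. The problem is that you applied the symmetric one-sided gadget to the pair $(a,t_2),(b,t_1)$, which only reveals an input somewhere left of $b$ in rows $(t_1,t_2]$ --- a point that can legitimately exist (e.g.\ below $p$), so it carries no contradiction.

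The missing idea is to pair the touched point $(a,t_2)$ with the captured \emph{input} $p$ rather than with $(b,t_1)$: since $p.y<t_2$ and $a<p.x$, the $\times$-over-$\bullet$ form of \Cref{hidden} reveals an input $q'$ in $(-\infty,p.x-1]\times[p.y+1,t_2]$, i.e.\ strictly above and strictly to the left of $p$. Both $q'$ and $p$ lie in $X_<$ (their keys are below $r.x$), so they form a $(2,1)$ pair, contradicting the observation that $X_<$ is increasing. This is the paper's argument; note it needs no case distinction on $s_2$ (or $s_1$) at all, whereas your route cannot be completed in the $s_2.x\geq r.x$ case without extracting this additional point above $p$.
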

\begin{proof}
Assume otherwise that $Y_R$ contains $\kbordermatrix{
     & a & b &c &d\\
     t_2 & \times & &\times & \\ 
    t_1&   & \times& & \times}$ for some time indices $t_1 < t_2$ and keys $a<b<c<d$. Applying \Cref{claim: inputRevealing} and the fact that $Y_R$ contains  $\kbordermatrix{
     & b & c &d \\
     t_2 &  & \times&  \\ 
    t_1&  \times & &\times
    }$, we would have that the rectangular region $[b+1,d-1] \times [t_1+1,t_2]$ must contain input point $p \in X_{<}$. Because $(a,t_2) \in Y_R$, we have that $p.y<t_2$. Applying \Cref{hidden} and the fact that $Y_R$ contains $\kbordermatrix{
     & a &p.x \\
     t_2  & \times&  \\ 
    p.y  & &\bullet }$, we can conclude that the rectangle $(-\infty,p.x-1] \times [p.y+1,t_2]$ contains an input point $q \in X_{<}$. 
    The points $p$ and $q$ induce pattern $(2,1)$, a contradiction. 
\end{proof} 

\section{Bounds for Input Avoiding Size-$3$ Patterns} \label{sec:size3input}

There are six patterns of size three. We divide them into three different groups as follows: $\Pi_1 = \{(1,2,3), (3,2,1)\}$, $\Pi_2 = \{(2,3,1), (2,1,3)\}$ and $\Pi_3 = \{(1,3,2), (3,1,2)\}$. We  argue that, for each such pattern class $\Pi_i$ (for $i =1,2,3$), we only need to analyze the cost of Greedy on one pattern in $\Pi_i$. 
We make this claim precise as follows. For each matrix (point set) $M$ with $n$ columns, define the flipped matrix $M^{flip}$ as a matrix $M'$ obtained by reflecting $M$ around $y$-axis, that is, $M'(i,j) = M(n-i+1, j)$ for all $i,j$.
Therefore, if we define $P_1 = (1,2,3), P_2 = (2,3,1)$ and $P_3 = (1,3,2)$, we would have $\Pi_i = \{P_i, P_i^{flip}\}$.

\begin{proposition}
For each $i =1,2,3$, we have $\gex(n,P_i) = \gex(n,P_i^{flip})$. 
\end{proposition}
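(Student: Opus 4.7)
The plan is to establish the equality by a direct symmetry argument: horizontally reflecting the input, the initial tree, and the execution log simultaneously gives a bijection between instances avoiding $P_i$ and instances avoiding $P_i^{flip}$ that preserves Greedy's cost. All three ingredients (the flip itself, Greedy's execution, and pattern containment) behave well under reflection, so the equality will drop out by taking maxima on both sides.

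First I would formalize flipping on all objects. For a point set / matrix $M$ with columns indexed by $[n]$, set $M^{flip}(i,j) = M(n-i+1, j)$; this is an involution and $|M| = |M^{flip}|$. A permutation input $X=(x_1,\dots,x_n)$ then flips to $(n-x_1+1,\dots,n-x_n+1)$, and a binary search tree $T$ on $[n]$ has a natural flipped counterpart $T^{flip}$ obtained by swapping left/right children everywhere and relabeling key $a$ to $n-a+1$, which is again a valid BST and induces the correctly flipped non-positive-row encoding.

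The technical core is the commutation lemma
\[ G_{T^{flip}}(X^{flip}) \;=\; \bigl(G_T(X)\bigr)^{flip}. \]
This holds because Greedy's rule at time $t$ refers only to (i) the previous-touch function $\tau(a,t)$, which is defined purely in terms of row indices, and (ii) whether the minimal axis-aligned rectangle $\Box_{p,q}$ spanned by the current input point and $(a,\tau(a,t))$ is empty of other previously-touched points. Horizontal reflection sends rectangles to rectangles, preserves emptiness, and preserves row indices, so the touching condition is invariant under flipping. A routine induction on $t$ then shows that key $a$ is touched at time $t$ on $(X,T)$ iff key $n-a+1$ is touched at time $t$ on $(X^{flip},T^{flip})$, which gives the identity above (and in particular $|G_T(X)| = |G_{T^{flip}}(X^{flip})|$).

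Finally, pattern containment respects flipping: the relabeling $a \mapsto n-a+1$ reverses the relative order of the entries in every subsequence of $X$, so $X$ contains $\pi$ iff $X^{flip}$ contains $\pi^{flip}$. Applied to $\pi = P_i$, this yields that $X$ avoids $P_i$ iff $X^{flip}$ avoids $P_i^{flip}$. Combining with the commutation lemma, every pair $(X,T)$ achieving cost $c$ with $X$ avoiding $P_i$ produces a pair $(X^{flip},T^{flip})$ of the same cost with input avoiding $P_i^{flip}$, so $\gex(n,P_i) \le \gex(n,P_i^{flip})$; the reverse inequality follows by the same argument since flipping is an involution. The only step requiring care is the commutation lemma, and even that is a routine induction rather than a real obstacle.
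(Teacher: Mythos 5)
Your proposal is correct and follows essentially the same route as the paper: flip the input and the initial tree, invoke the commutation identity $(G_T(X))^{flip} = G_{T^{flip}}(X^{flip})$ (which the paper also justifies by induction on rows, citing Greedy's symmetry), observe that $X$ avoids $P_i$ iff $X^{flip}$ avoids $P_i^{flip}$, and conclude both inequalities. Your write-up simply spells out the rectangle-emptiness invariance in more detail than the paper does.
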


\begin{proof}
We first prove that $\gex(n,P_i) \leq \gex(n,P_i^{flip})$. Let $X$ and $T$ be the input an initial tree that achieves the value $\gex(n,P_i)$. 
First, notice that if $X$ avoids $P_i$, then $X^{flip}$ avoids $P_i^{flip}$. 
The following claim can be proved using the fact that Greedy is symmetric: 
\begin{observation}
For any $X$ and initial tree $T$, we have $(G_T(X))^{flip} = G_{T^{flip}}(X^{flip})$. 
\end{observation}
This observation can be proved, for instance, by induction on the number of rows. 
Therefore, $X^{flip}$ and $T^{flip}$ are the inputs that prove that $\gex(n,P_i^{flip}) \geq \gex(n,P_i)$. The other direction can be argued symmetrically. 
\end{proof}
We will use our techniques to prove the following theorems, which are the restatement of \Cref{thm:main} (1 and 2).

\begin{theorem} [Preorder Traversal] \label{thm:preorder} 

     For $P \in \Pi_2$, $\gex(n,P) = \gex(n,(2,3,1)) = O(n2^{\alpha(n)})$. 
    
\end{theorem}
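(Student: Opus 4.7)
The plan is to realize the four-term matrix decomposition sketched in the introduction. I would write $G(X) \setminus X = Z_0 \sqcup Z_1 \sqcup Z_2 \sqcup Z_3$ so that $Z_0$ avoids the $2 \times 2$ identity $\kbordermatrix{ & & \\ & 1 & \\ & & 1}$, the set $Z_1$ avoids $\kbordermatrix{ & & & \\ & 1 & & 1 \\ & & 1 & }$, and each of $Z_2, Z_3$ avoids the pattern $Q_0$ from \Cref{lem:overall}. Once the decomposition is in place, summing the extremal bounds from \Cref{thm:exbound} immediately yields $|G(X)| \leq n + O(n) + O(n) + 2 \cdot O(n 2^{\alpha(n)}) = O(n 2^{\alpha(n)})$. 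By the preceding symmetry \Cref{thm:exbound}'s counterpart (Proposition~1 above), it suffices to handle $P_2 = (2,3,1)$.

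For the decomposition itself, I would follow the style of the warm-up arguments in \Cref{section: seq}: for each touched non-input point $q = (c,t) \in G(X) \setminus X$, let $p$ be the unique input on row $t$, and classify $q$ based on (i) whether $c < p.x$ or $c > p.x$, and (ii) whether $q$ is ``extreme'' in its row (no input lies between $q$ and $p$ horizontally) or ``interior''. Because preorder-avoidance is not left-right symmetric, the two sides need separate treatment, but each side will yield one ``easy'' class (assigned to $Z_0$ or $Z_1$) and one ``hard'' class (assigned to $Z_2$ or $Z_3$).

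For each of the four pattern-avoidance claims, the strategy mirrors \Cref{claimseq} and \Cref{claimpath}: assume toward contradiction that the relevant subset of $G(X) \setminus X$ contains the forbidden pattern, then repeatedly apply the capture gadgets \Cref{claim: inputRevealing}, \Cref{hidden}, and \Cref{refinedHidden} to extract input points in prescribed rectangles. Chaining two or three extractions produces three input points whose coordinates are forced into the $(2,3,1)$ pattern, contradicting the hypothesis on $X$. For the easy classes, a single extraction plus the extremity/interior distinction suffices — much like \Cref{outside} in the warm-up.

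The main obstacle is the $Q_0$-avoidance argument for the two hard pieces. Unlike the warm-up arguments, where the forbidden pattern has size only $3$ or $4$, the pattern $Q_0$ has five touched points in a $2 \times 5$ layout, so more extractions are needed, and one must ensure that the extracted inputs are genuinely distinct and positioned so as to exhibit $(2,3,1)$ specifically (rather than, say, $(2,1,3)$, which preorder inputs need not avoid). Selecting the right sub-patterns of $Q_0$ to feed into each capture gadget, and verifying that the extracted inputs always witness the forbidden preorder pattern, is the technical heart of the proof; once this is done, Theorem~\ref{thm:main}(\ref{item:main2}) follows at once.
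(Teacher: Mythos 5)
Your proposal has the right target --- the four-term decomposition and the bound $n + O(n) + O(n) + 2\cdot{\sf ex}(n,Q_0)$ --- but that much is already announced in the paper's introduction; the substance of the theorem is the precise definition of the four classes and the four avoidance proofs, and on both counts there is a genuine gap. The classification you sketch (left/right of the row input, combined with whether $q$ is ``extreme'' in its row, i.e.\ no input between $q$ and $p$ horizontally) is not the one that works, and as stated it is not even well-defined: since $X$ is a permutation, every key strictly between $q.x$ and $p.x$ is accessed at some time, so ``no input in between'' either trivializes (only $|q.x-p.x|=1$ qualifies) or requires an unspecified time window. The paper's second axis is different: for each touched point $q$ one looks at the unique input point in $q$'s \emph{column} and asks whether it lies above or below $q$. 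The ``bottom'' points, split by left/right of the row input, give the two easy classes $BR$ and $BL$, which avoid the two-point decreasing pattern and the three-point pattern respectively (\Cref{lem: traversal BR}, \Cref{lem: traversal BL}); both proofs hinge on having an input directly below $q$ in its own column to serve as one leg of the $(2,3,1)$ witness, something your row-based criterion does not supply. The ``top'' points $TL$ and $TR$ are the two $Q_0$ classes.

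Moreover, the $Q_0$-avoidance for $TL$ and $TR$ --- which you yourself flag as the technical heart --- is left entirely open in your plan, and it is not a routine chaining of gadgets. The paper does it in two steps: first an intermediate mixed-type lemma (\Cref{lemma: topavoid}) that $TL\cup TR$ avoids the pattern with two $\times$ on the top row and a $\bullet$ strictly between them on the bottom row, proved via \Cref{hidden} together with the input lying \emph{above} the left $\times$ in its column; then, assuming $Q_0$ appears in $TL$, the generic gadget \Cref{claim: inputRevealing} produces an input $q$ in a rectangle, and the case analysis is: if $q.y<t_2$ this contradicts \Cref{lemma: topavoid}, while if $q.y=t_2$ the row-$t_2$ input lies strictly between the outer columns, forcing the leftmost $\times$ into the class $R$ rather than $L$ --- a contradiction that is only available because the left/right label is part of the partition (\Cref{claim: TR}). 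Without the column-based top/bottom split and this two-step argument, the avoidance claims you assert for $Z_0,\dots,Z_3$ cannot be verified, so the proposal as written does not constitute a proof.
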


\begin{theorem} [Postorder Traversal] \label{thm:postorder} 
     For $P \in \Pi_3$, $\gex(n,P) = \gex(n,(1,3,2)) = O(n)$.
\end{theorem}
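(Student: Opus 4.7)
The plan is to upper bound $|G(X)|$ by decomposing $G(X)$ into a constant number of submatrices, each of which avoids a size-$3$ pattern whose extremal function is linear in $n$, and then invoking Theorem~\ref{thm:exbound}. Throughout, $X$ is a permutation on $[n]$ avoiding $(1,3,2)$.

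\textbf{Step 1 (input/touch split).} Every point $q \in G(X)\setminus X$ lies in a row $q.y$ that also contains a unique input point; call it $p_q$. Partition $G(X)\setminus X = L \sqcup R$ according to whether $q.x < p_q.x$ (touches strictly left of the input being served) or $q.x > p_q.x$ (strictly right). Since $X^{\mathrm{flip}}$ avoids $(2,3,1)$ rather than $(1,3,2)$, the two sides are not symmetric; I would handle $L$ and $R$ separately, although by parallel arguments.

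\textbf{Step 2 (refining $L$ and $R$).} Further split $L$ and $R$ using a secondary structural feature of $X$ near each touch $q$. A natural candidate is to look at the closest input point strictly above $q$ whose value lies on the same side of $q.x$ as $p_q$, and to classify $q$ according to whether such a witness exists or according to how $q.x$ compares with that witness. This yields four families $L_1,L_2,R_1,R_2$ with
\[
 |G(X)| \;\le\; n + |L_1| + |L_2| + |R_1| + |R_2|.
\]

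\textbf{Step 3 (pattern avoidance).} The goal is to match the decomposition hinted at in the introduction by showing that one of the pieces avoids $\kbordermatrix{ & & \\ & 1 & 1 \\ & & 1 }$, one avoids $\kbordermatrix{ & & & \\ & 1 & & 1 \\ & & 1 & }$, and the remaining two both avoid $\kbordermatrix{ & & & \\ & & 1 & \\ & 1 & & 1 }$. Each such claim is argued by contradiction: suppose a piece contains the forbidden pattern on some columns $a<b<c$ and rows $t_1<t_2$. Applying the capture gadgets (Claim~\ref{claim: inputRevealing}, Claim~\ref{hidden}, Corollary~\ref{refinedHidden}) to these touches together with their associated inputs $p_q$ forces input points of $X$ to appear in specific rectangles. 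Combining these forced inputs with the $p_q$'s produces three input points of $X$ that are order-isomorphic to $(1,3,2)$, contradicting our assumption on $X$. Theorem~\ref{thm:exbound} then gives $|L_i|, |R_i| = O(n)$ for each $i$, and summing yields $|G(X)| = O(n)$.

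\textbf{Main obstacle.} The crux is choosing the secondary split in Step~2 so that each piece avoids exactly the pattern with linear extremal bound. Unlike the preorder case, whose most expensive piece only avoids the longer zigzag pattern $\kbordermatrix{ & & & & & \\ & 1 & & 1 & & 1 \\ & & 1 & & 1 & }$ with extremal function $O(n 2^{\alpha(n)})$, the postorder case needs the corresponding piece to avoid the much smaller pattern $\kbordermatrix{ & & & \\ & & 1 & \\ & 1 & & 1 }$. This is possible only because $(1,3,2)$-avoidance is rigid in exactly the direction that matters: once an ascent $x_{t_1}<x_{t_2}$ occurs with $t_1<t_2$, no subsequent value can lie strictly between $x_{t_1}$ and $x_{t_2}$. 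Turning this rigidity into the right family partition and then verifying, via the capture gadgets, that no piece contains the associated forbidden pattern is the main combinatorial work.
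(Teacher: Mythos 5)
Your framework is right (matrix decomposition, capture gadgets, invoking Theorem~\ref{thm:exbound}), your first split is the paper's first split, and your closing intuition about the rigidity of $(1,3,2)$-avoidance is exactly what powers the final contradictions. But there is a genuine gap, and you have in fact flagged it yourself in the ``Main obstacle'' paragraph: you do not pin down the secondary split, and without it the rest of the argument cannot be carried out.

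The paper's secondary split is much simpler and more canonical than what you tentatively propose. Since $X$ is a permutation, every touched non-input point $q$ has a \emph{unique} input point $p_1$ in the same column ($p_1.x = q.x$) and a unique input point $p_2$ in the same row ($p_2.y = q.y$). The row-wise comparison ($q.x$ vs.\ $p_2.x$) gives your $L/R$ split; the \emph{column-wise} comparison ($q.y$ vs.\ $p_1.y$) gives the second split into ``bottom'' ($p_1.y < q.y$) and ``top'' ($p_1.y > q.y$). The four parts $BL$, $BR$, $TL$, $TR$ are then handled by separate lemmas (Lemmas~\ref{lem: post TR}, \ref{lem: post TL} and the two before them). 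Your proposal of ``the closest input point strictly above $q$ whose value lies on the same side of $q.x$ as $p_q$'' is both more complicated and not what is needed: the avoidance arguments for the top pieces crucially use the fact that the column's own input witness lies strictly above $q$, so that when \Cref{hidden} or \Cref{refinedHidden} forces a third input into a bounded rectangle, the three inputs can be certified to form a $(1,3,2)$ with the right timing. Your candidate split does not immediately supply such a witness, and you would have to re-derive the whole case analysis with a different and less convenient anchor point. Beyond this, Step~3 is only a description of what a proof would have to do; the four avoidance claims (and in particular the slightly delicate $BL$ case, which needs \Cref{refinedHidden} rather than \Cref{hidden} to rule out the third input being in the same row as the bottom input) still need to be proved. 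As written, the proposal is a plausible outline that identifies the right target but stops short of the decisive combinatorial work.
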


    
    

As for the input in $\Pi_1$, a theorem of \cite{chalermsook2015pattern} implies that Greedy costs at most $O(n)$. In the following subsections we prove the above theorems.

\subsection{Preorder Traversals} \label{section: 213}

This section is devoted to proving \Cref{thm:preorder}. Let $X$ be an input matrix which corresponds to a permutation that avoids $(2,3,1)$. 
We partition the Greedy points $Y$ into four parts based on the location of the points with respect to the input points. 

\begin{observation}
For each point $q \in Y \setminus X$, there are (unique) input points $p_1, p_2 \in X$ such that $p_1.x = q.x$ and $p_2.y = q.y$. 
\end{observation}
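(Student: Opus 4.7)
The plan is to observe that, for the preorder traversal regime, the input matrix $X$ is a permutation of $[n]$, i.e.\ a genuine permutation matrix on $[n]\times[n]$: each of the $n$ columns and each of the $n$ rows of $X$ carries exactly one $1$-entry. Once this is recorded, both halves of the observation become immediate.

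I would fix an arbitrary $q\in Y\setminus X$ and produce $p_1$ and $p_2$ separately. For $p_2$: the point $q$ is a Greedy-added (touched but not accessed) point, so by the construction of $G_T(X)$ in \Cref{section: short-prelims}, $q$ was appended during Greedy's processing of some access time, i.e.\ $q.y$ is a time $t\in[1,n]$ at which an input is accessed. Since $X$ is a permutation matrix, row $q.y$ contains a unique $1$-entry; call this input point $p_2$, so that $p_2.y = q.y$ and $p_2$ is the only such input. For $p_1$: the column index $q.x$ lies in $[n]$, and because $X$ is a permutation matrix, column $q.x$ contains exactly one $1$-entry of $X$. Take this to be $p_1$; then $p_1.x = q.x$ and $p_1$ is unique.

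There is no real obstacle here; the whole content is that preorder traversal inputs are permutations (as already emphasized in the remark after \Cref{thm:main}). The only minor care needed is to interpret $Y$ as Greedy's execution on the access sequence, so that $q.y \in [1,n]$; points of $G_T(X)$ contributed by the initial tree $T$ sit in non-positive rows and will be accounted for separately in the four-part partition that follows the observation.
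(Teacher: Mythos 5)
Your proof is correct and is essentially the only reasonable argument: the paper leaves the observation unproved precisely because, once $X$ is a permutation, the existence and uniqueness of $p_1$ (one input per column) and $p_2$ (one input per row) is immediate. Your added remark about the initial tree is the one place where care is genuinely required: points of $G_T(X)$ in the non-positive rows (those contributed by $T$) have no matching input row, so they are excluded from the four-part partition that follows; they contribute only $O(n)$ to the total cost, which is absorbed in the final bound. This matches the paper's implicit reading of ``the Greedy points $Y$'' in Section~\ref{section: 213} as the touched points at times $1,\ldots,n$.
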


Using this observation, for each such point $q$, if $p_1.y < q.y$, we say that $q$ is a bottom point; otherwise, we say that it is a top point, so this will partition $Y \setminus X$ into $T \cup B$ where $T$ and $B$ are the top and bottom points respectively.  
Similarly, we define the left/right partition $Y \setminus X = L \cup R$ where $L$ and $R$ are the left and right points.  

These can be used to define our partition as follows: $BR = B \cap R, BL = B \cap L, TR = T \cap R, TL = T \cap L$.

\begin{lemma}
\label{lem: traversal BR} 
    BR avoids $\kbordermatrix{
     &  &   \\
     & \times &   \\ 
    &   & \times }$. Therefore, $|BR| = O(n)$. 
\end{lemma}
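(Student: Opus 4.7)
The plan is to assume for contradiction that $BR$ realizes the $2{\times}2$ identity with $\times$-entries, say $q_1 = (a, t_1)$ and $q_2 = (b, t_2)$ in $BR$ with $a < b$ and $t_1 < t_2$, and then to derive a $(2,3,1)$ pattern among three input points of $X$, contradicting the preorder-traversal hypothesis. Unpacking the $BR$-membership of $q_1, q_2$ yields four input points in very constrained positions: from $q_i \in B$ we obtain an input $p_i$ on the column of $q_i$ (so $p_i.x = x_i$), and from $q_i \in R$ we obtain an input $r_i$ on the row of $q_i$ (so $r_i.y = y_i$), each with the sidedness dictated by the corresponding membership.

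Since $(q_1, q_2)$ form the diagonal pattern $\kbordermatrix{ & a & b \\ t_2 & & \times \\ t_1 & \times & }$, applying the one-sided capture gadget (\Cref{hidden}) produces an additional input point $s$ in the rectangle $[a+1, \infty) \times [t_1+1, t_2]$; if needed, the refined version (\Cref{refinedHidden}) can pin $s$ further into either $[b, \infty) \times [t_1+1, t_2]$ or onto row $t_2$ itself. The endgame is a short case analysis on the time order of $p_1, p_2$ (and, as a refinement, on the column of $s$ relative to $b$), identifying in each case a triple among $\{p_1, p_2, r_1, r_2, s\}$ that realizes the $(2,3,1)$ profile. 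For instance, in the easy subcase where $p_1$ precedes $p_2$ in time, the triple $(p_1, p_2, r_1)$ already displays the medium-high-low value profile of $(2,3,1)$, using the $R$-sidedness of $q_1$ to place $r_1$ at the low rank. The remaining subcases pair $s$ with one of $p_1, p_2$ and with $r_2$, exploiting that $s$ lies strictly above row $t_1$ and strictly to the right of column $a$ to supply the rank-$3$ element.

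The main obstacle will be the bookkeeping in this case split: some residual configurations may require a second (refined) invocation of the capture gadget to force $s$ into $[b, \infty) \times [t_1+1, t_2]$ before the matching $(2,3,1)$ triple emerges. Once every case is settled, the forbidden $(2,3,1)$ triple lives entirely in $X$ and contradicts the preorder-traversal hypothesis, ruling out the existence of $q_1, q_2$ and establishing that $BR$ avoids the identity pattern; the standard $O(n)$ extremal bound for identity avoidance then yields $|BR| = O(n)$, as claimed.
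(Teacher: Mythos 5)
Your proposal misidentifies the forbidden pattern. In the paper's convention the top displayed row of a $\kbordermatrix$ is the \emph{later} time, so the pattern $\kbordermatrix{ & & \\ & \times & \\ & & \times }$ with columns $a<b$ and rows $t_1<t_2$ places the $\times$-entries at $(a,t_2)$ and $(b,t_1)$: the later touch is on the \emph{smaller} key. That is the decreasing pattern $(2,1)$, not the $2\times 2$ identity. You instead take $q_1=(a,t_1)$, $q_2=(b,t_2)$ (the increasing pattern) and then run \Cref{hidden} on that configuration, getting $s$ in $[a+1,\infty)\times[t_1+1,t_2]$. The paper's lemma uses the symmetric conclusion of \Cref{hidden}, namely an input $r$ in $(-\infty,b-1]\times[t_1+1,t_2]$, and that sidedness is what makes the argument close.

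Because of the sign flip, your ``easy subcase'' does not yield $(2,3,1)$. With $q_1=(a,t_1)\in R$, the input $r_1$ on row $t_1$ satisfies $r_1.x>a$; so for the triple $(p_1,p_2,r_1)$ with $p_1.x=a$, $p_2.x=b>a$, and $r_1.x>a$, the only possible order types are $(1,3,2)$ or $(1,2,3)$ --- never $(2,3,1)$. The remaining cases are described only as ``bookkeeping,'' so the argument never actually reaches the contradiction. With the correct orientation there is no case analysis at all: the single touched point $(b,t_1)\in BR$ already gives an input $p$ below it (so $p.x=b$, $p.y<t_1$) and an input $q$ to its right (so $q.y=t_1$, $q.x>b$); \Cref{hidden} on the decreasing pair produces an input $r$ with $r.x<b$ and $t_1<r.y\le t_2$; and then the time-ordered triple $(p,q,r)$ has values $(\text{mid},\text{high},\text{low})$, i.e.\ $(2,3,1)$, contradicting that $X$ is a preorder sequence. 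You should rework the proof starting from the correct pattern; once that is done, unpacking both $q_1$ and $q_2$ and splitting on cases is unnecessary.
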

\begin{proof}
Assume otherwise that BR contains $\kbordermatrix{
     & a & b  \\
     t_2 & \times &   \\ 
    t_1&   & \times }$ for some time indices $t_1 < t_2$ and keys $a <b$.
Since the point $(b,t_1)$ is a touched point in BR, there are input points $p$ and $q$ that are at the bottom and right of $(b,t_1)$ respectively. From Claim~\ref{hidden}, the region $(-\infty,b-1] \times [t_1+1, t_2]$ must contain an input point $r$. The points $p,q$ and $r$ induce pattern $(2,3,1)$ in $X$, a contradiction.
\end{proof}

\begin{lemma}
\label{lem: traversal BL} 
    BL avoids $\kbordermatrix{
     & & &  \\
     & \times & &\times  \\ 
    &   & \times& }$. Therefore, $|BL| = O(n)$. 
\end{lemma}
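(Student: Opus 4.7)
I plan to prove the lemma by contradiction: assume $BL$ contains the displayed $2 \times 3$ pattern, so there exist columns $a < b < c$ and rows $t_1 < t_2$ with touched points $(a,t_2), (c,t_2), (b,t_1) \in BL$. The goal is to extract from these assumptions three input points of $X$ whose order type is $(2,3,1)$, contradicting the hypothesis that $X$ avoids $(2,3,1)$. Once avoidance is established, the bound $|BL| = O(n)$ is immediate from \Cref{thm:exbound}, since the forbidden pattern above is the vertical reflection of $\kbordermatrix{&&&\\&&1&\\&1&&1}$, and extremal bounds are invariant under row-reversal.

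Next I decode what $BL$ membership tells us, using that $X$ is a permutation and hence every row and every column contains exactly one input. From $(a,t_2), (c,t_2) \in L$, the unique input $p_2 = (x_2, t_2)$ at time $t_2$ satisfies $x_2 < a$ (the constraint from $c$ is implied by that from $a$). From $(b,t_1) \in B$, the unique input $q_b = (b, y_b)$ in column $b$ satisfies $y_b < t_1$. These two witnesses are the easy ones; the third will come from a capture gadget.

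I then apply the one-sided capture gadget (\Cref{hidden}) to the pair $(b,t_1), (c,t_2)$, which jointly form the sub-pattern $\kbordermatrix{& b & c \\ t_2 & & \times \\ t_1 & \times & }$. This produces an input $r \in [b+1, \infty) \times [t_1+1, t_2]$. The input $p_2$ cannot be this $r$, because $x_2 < a < b+1$; moreover, since $p_2$ is the only input at time $t_2$, the witness $r$ must actually lie strictly below row $t_2$. Hence $r$ satisfies $r.x > b$ and $t_1 < r.y < t_2$.

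Finally I combine: the three inputs $q_b = (b, y_b)$, $r$, and $p_2 = (x_2, t_2)$ appear at strictly increasing times $y_b < r.y < t_2$, while their $x$-coordinates satisfy $x_2 < a < b < r.x$. Reading the $x$-coordinates in time order therefore produces rank sequence $(2,3,1)$, contradicting $X$'s avoidance of $(2,3,1)$. The one subtlety to highlight is the choice of which pair of touched points to feed to \Cref{hidden}: applying the gadget to $(a,t_2), (b,t_1)$ instead produces a rectangle $(-\infty, b-1] \times [t_1+1, t_2]$ that already contains $p_2$, yielding no new witness; it is precisely the presence of the third column $c$ (the $2 \times 3$, not $2 \times 2$, pattern) that lets us push the capture to the opposite side of $b$, away from $p_2$.
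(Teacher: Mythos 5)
Your proposal is correct and follows essentially the same route as the paper's own proof: the same three witnesses (the input below $(b,t_1)$, the input to the left of $(a,t_2)$ at time $t_2$, and the input captured by applying \Cref{hidden} to the pair $(b,t_1),(c,t_2)$), distinguished by the same permutation argument, assembled into a $(2,3,1)$ occurrence. Your extra remarks---why the gadget must be applied on the $c$-side rather than between $a$ and $b$, and why the $O(n)$ bound follows from \Cref{thm:exbound} via reflection invariance of extremal bounds---are accurate but only make explicit what the paper leaves implicit.
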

\begin{proof}
Assume otherwise that  BL contains $\kbordermatrix{
     & a & b &c \\
     t_2 & \times & &\times  \\ 
    t_1&   & \times& }$ for some time indices $t_1 < t_2$ and keys $a < b < c$.
Let $p$ be an input point at the bottom of $(b,t_1)$. Let $r$ be an input on the left of $(a,t_2)$. 
Applying  \Cref{hidden} and the fact that $BL$ contains  $\kbordermatrix{
     & b & c  \\
     t_2 &  & \times  \\ 
    t_1&  \times & 
    }$, we would have that the rectangular region $[b+1,\infty) \times [t_1+1,t_2]$ must contain input point $q$. Notice that $q \neq r$, so we have that $p,q$ and $r$ induce pattern $(2,3,1)$ in $X$, a contradiction.
\end{proof}

\begin{figure}[ht]
  \centering
  \includegraphics[scale=0.4]{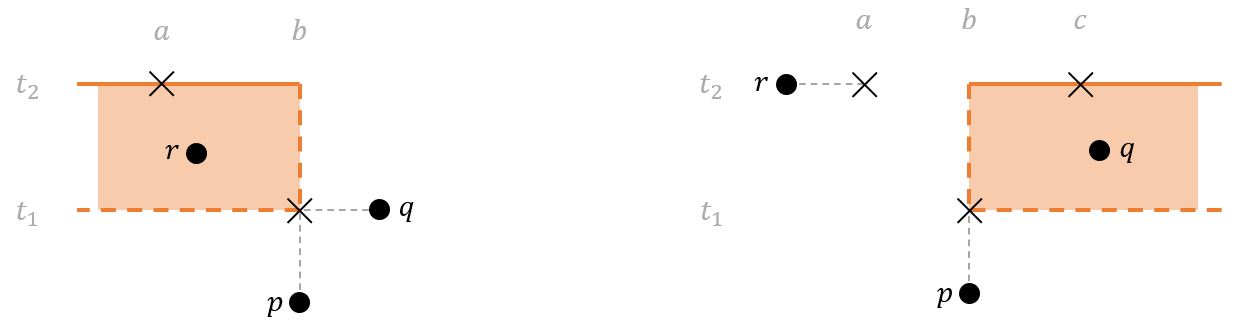}
  \caption{Illustrations of the proofs for $BR$ (left) and $BL$ (right) in preorder\label{fig:prebottom}}
\end{figure}

\begin{lemma} \label{lemma: topavoid}
    $TL \cup TR$ avoids $\kbordermatrix{
     & &  & \\
      & \times & &\times  \\ 
    &   & \bullet& }$. 
\end{lemma}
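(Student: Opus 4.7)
The plan is to argue by contradiction using the one-sided capture gadget (\Cref{hidden}) to extract a fourth input point, and then assemble a forbidden $(2,3,1)$ pattern. Suppose $TL\cup TR$ contains the forbidden subpattern, witnessed by columns $a<b<c$ and rows $t_1<t_2$ with $(a,t_2),(c,t_2)\in TL\cup TR$ (both top-touched) and $(b,t_1)\in X$. Since $(a,t_2)$ is a top point, the unique input point in column $a$, which I denote $p_a=(a,y_a)$, satisfies $y_a>t_2$ by definition of ``top''.

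The main step is to reveal a fourth input point to the right of column $b$. The pair consisting of the touched point $(c,t_2)$ and the input point $(b,t_1)$ forms exactly the subpattern $\kbordermatrix{ & b & c \\ t_2 & & \times \\ t_1 & \bullet & }$, so applying the symmetric version of \Cref{hidden} (the case with $\bullet$ at the bottom-right) produces an input point $r=(r_x,r_y)\in X$ satisfying $r_x\geq b+1$ and $t_1+1\leq r_y\leq t_2$. In particular $r_x>b>a$, which immediately gives $r\neq p_a$ and $r\neq p_b:=(b,t_1)$.

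Now the three input points $p_a,p_b,r$ already yield a $(2,3,1)$ pattern in $X$: their $x$-coordinates satisfy $a<b<r_x$, giving $x$-ranks $1,2,3$ respectively, while $t_1<r_y\leq t_2<y_a$ puts them in time order $p_b$, then $r$, then $p_a$. Reading the $x$-ranks in time order gives the sequence $2,3,1$, contradicting the hypothesis that $X$ avoids $(2,3,1)$. I do not foresee any real obstacle; the only care needed is to pick the correct (left-touched, right-input) orientation of \Cref{hidden} and to observe that the top-point property of $(a,t_2)$ is precisely what places $p_a$ above $t_2$ (so that $p_a$ can play the role of the ``1'' at the latest time in the $(2,3,1)$ witness).
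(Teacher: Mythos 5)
Your proof is correct and is essentially the paper's own argument: reveal an input point in $[b+1,\infty)\times[t_1+1,t_2]$ via \Cref{hidden} applied to $(b,t_1)$ and $(c,t_2)$, then combine it with $(b,t_1)$ and the input point above $(a,t_2)$ to force a $(2,3,1)$ in $X$. The only nitpick is a labeling slip: the gadget you invoke (input at bottom-left, touched point at top-right) is the directly stated case of \Cref{hidden}, not its symmetric version, but the pattern and rectangle you use are the right ones, so the argument stands.
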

\begin{proof}
Assume otherwise that $TL \cup TR$ contains $\kbordermatrix{
     & a & b &c \\
     t_2 & \times & &\times  \\ 
    t_1&   & \bullet& }$. 
Let $r$ be an input point at the top of $(a,t_2)$. 
Applying \Cref{hidden} and the fact that $TL \cup TR$ contains $\kbordermatrix{
     & b &c \\
     t_2  & &\times  \\ 
    t_1  & \bullet& }$, we can conclude that the rectangle $[b+1,\infty) \times [t_1+1,t_2]$ contains an input point $q$. Let $p = (b,t_1)$. Notice that $p, q$ and $r$ induce pattern $(2,3,1)$, a contradiction. See Figure~\ref{fig:pretop}.    
\end{proof}

\begin{figure}[ht]
  \centering
  \includegraphics[scale=0.4]{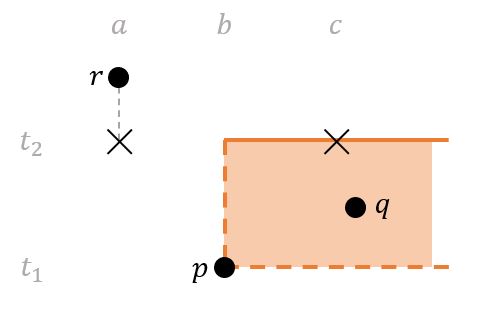}
  \caption{An illustration of the proof for $TL \cup TR$ in preorder. \label{fig:pretop}}
\end{figure}

\begin{corollary} \label{claim: TR}
    Each TL and TR avoids $\kbordermatrix{
     &  &  & & &  \\
      & \times & &\times && \times \\ 
    &   & \times& &\times&}$. Therefore, $|TL| + |TR| \leq O(n2^{\alpha(n)})$. 
\end{corollary}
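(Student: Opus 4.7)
The plan is to reduce the avoidance of the size-$5$ ``zig-zag'' pattern for $TL$ (and symmetrically for $TR$) to the size-$3$ $\times$-$\bullet$-$\times$ avoidance that was just established in \Cref{lemma: topavoid}, using the generic capture gadget to produce the required middle input.

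Suppose toward contradiction that $TL$ contains the forbidden $5$-point pattern at columns $c_1 < c_2 < c_3 < c_4 < c_5$ and rows $r_1 < r_2$, with $\times$'s at $(c_1, r_2), (c_3, r_2), (c_5, r_2)$ on the top row and $(c_2, r_1), (c_4, r_1)$ on the bottom row. The first step is to isolate the $\Lambda$-shaped sub-pattern on columns $c_2, c_3, c_4$ (the $\times$ at $(c_3,r_2)$ on top, flanked by the two $\times$'s at $(c_2,r_1)$ and $(c_4,r_1)$ on the bottom) and apply \Cref{claim: inputRevealing}, which yields an input point $I \in X$ inside the rectangle $[c_2+1, c_4 - 1] \times [r_1 + 1, r_2]$.

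Two sanity checks pin down where $I$ can sit. First, since $(c_1, r_2) \in TL$, the unique input on row $r_2$ has $x$-coordinate strictly less than $c_1$, so it cannot be the input found inside the rectangle; hence $I.y < r_2$. Second, since $(c_3, r_2) \in TL$ is a \emph{top} point, the unique input on column $c_3$ occurs at a row strictly greater than $r_2$, and therefore $I.x \neq c_3$. Thus $I.x$ lies in $(c_2, c_3)$ or in $(c_3, c_4)$, with $I.y < r_2$ in either case. When $I.x \in (c_2, c_3)$ the triple $(c_1, r_2),\, I,\, (c_3, r_2)$ realises the forbidden size-$3$ pattern of \Cref{lemma: topavoid} inside $TL \cup TR$; when $I.x \in (c_3, c_4)$ the same role is played by $(c_3, r_2),\, I,\, (c_5, r_2)$. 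Either way we contradict \Cref{lemma: topavoid}. The $TR$ case is handled symmetrically: now the unique input on row $r_2$ lies strictly to the right of $c_5$, which again keeps it out of the capture rectangle, and the column argument is identical.

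Finally, the size-$5$ pattern just shown to be avoided by each of $TL$ and $TR$ has extremal bound $O(n 2^{\alpha(n)})$ by (the vertical reflection of) the fourth item of \Cref{thm:exbound}, giving $|TL| + |TR| = O(n 2^{\alpha(n)})$. The only non-routine step is the case split on the location of $I$; once the capture gadget is invoked, the reduction to \Cref{lemma: topavoid} is essentially mechanical.
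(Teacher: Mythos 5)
Your proof is correct and follows essentially the same route as the paper: apply the generic capture gadget (\Cref{claim: inputRevealing}) to the middle three points to obtain an input in $[c_2+1,c_4-1]\times[r_1+1,r_2]$, rule out the top row using the fact that the points are left (resp.\ right) points, and then contradict \Cref{lemma: topavoid}, finishing with the $O(n2^{\alpha(n)})$ extremal bound. The only cosmetic difference is your extra (harmless, not strictly needed) observation that $I.x \neq c_3$, where the paper instead implicitly lets either flanking pair of top $\times$'s play the role in \Cref{lemma: topavoid}.
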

\begin{proof}
We only present the proof for $TL$; the arguments for $TR$ are symmetric. Assume otherwise that TL contains $\kbordermatrix{
     & a & b &c&d&e \\
     t_2 & \times & &\times && \times \\ 
    t_1&   & \times& &\times&}$.
Applying Claim~\ref{claim: inputRevealing} to the pattern $\kbordermatrix{
      & b &c&d \\
     t_2  & &\times &  \\ 
    t_1   & \times& &\times }$, there must be an input point $q \in [b+1,d-1] \times [t_1+1, t_2]$. If $q.y < t_2$, we would be done since it would contradict Lemma~\ref{lemma: topavoid}, so assume that $q.y = t_2$. Since $a < q.x < e$, this implies that $(a,t_2) \in R$, a contradiction.    
\end{proof}

\subsection{Postorder Traversals} \label{section: 132}

This section is devoted to proving \Cref{thm:postorder}. Let $X$ be a permutation that avoids $(1,3,2)$. 
We partition the Greedy points $G(X)$ into four sets $BL, BR,TL, TR$ in the same way as in the last subsection. 

\begin{lemma}
BR avoids $\kbordermatrix{
     &  &  &  \\
      &  & \times&  \\ 
    &  \times & & \times}$. Therefore, $|BR| \leq O(n)$. 
\end{lemma}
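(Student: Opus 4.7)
The plan is to prove the pattern-avoidance claim by contradiction and then invoke the first bullet of Theorem~\ref{thm:exbound}, which states that the forbidden pattern has extremal bound $O(n)$; this immediately gives $|BR| = O(n)$. So assume $BR$ contains the pattern, i.e., there exist columns $a<b<c$ and times $t_1<t_2$ with $(a,t_1),(c,t_1),(b,t_2)\in BR$.

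From the definition of $BR$, I extract two boundary input points. Because $(c,t_1)\in R$, the (unique) input point on row $t_1$ lies strictly to the right of column $c$; call it $\pi$, so $\pi.y=t_1$ and $\pi.x>c$. Because $(a,t_1)\in B$, the (unique) input point in column $a$ lies strictly below row $t_1$; call it $\sigma$, so $\sigma.x=a$ and $\sigma.y<t_1$. These two points are forced by the very definition of being in $BR$.

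Next, observe that the three touched points $(a,t_1),(b,t_2),(c,t_1)$ form exactly the configuration on columns $a,b,c$ and rows $t_1,t_2$ to which the Generic Capture Gadget (Claim~\ref{claim: inputRevealing}) applies. The gadget then produces an input point $q\in X$ with $q.x\in[a+1,c-1]$ and $q.y\in[t_1+1,t_2]$. In particular, $a<q.x<c<\pi.x$ and $\sigma.y<t_1<q.y$.

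Finally, the three input points $\sigma,\pi,q$ lie on three distinct rows (hence are distinct), with rows $\sigma.y<\pi.y<q.y$ and columns $(a,\pi.x,q.x)$ read in time order satisfying $a<q.x<\pi.x$; this is exactly the forbidden pattern $(1,3,2)$ in $X$, contradicting postorder-avoidance. The only place where any care is needed is keeping the column inequalities straight so that the values $a,\pi.x,q.x$ genuinely realize $(1,3,2)$ rather than, say, $(1,2,3)$ — and this is immediate from $q.x<c<\pi.x$. Everything else is a direct unfolding of definitions and of the capture gadget.
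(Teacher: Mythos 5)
Your proof is correct and follows essentially the same route as the paper: it extracts the input point below $(a,t_1)$ from the bottom-point property, the input point to the right of $(c,t_1)$ from the right-point property, applies the Generic Capture Gadget (Claim~\ref{claim: inputRevealing}) to the three touched points, and observes that the three resulting input points realize $(1,3,2)$, contradicting postorder avoidance. The final step via the first bullet of Theorem~\ref{thm:exbound} is also exactly how the paper concludes $|BR|=O(n)$.
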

\begin{proof}
Assume otherwise that BR contains $\kbordermatrix{
     & a & b &c \\
     t_2 &  & \times&  \\ 
    t_1&  \times & & \times}$ for some time indices $t_1 < t_2$ and keys $a < b < c$. 
Let $p= (a,t_0)$ be an input at the bottom of $(a,t_1)$. Let $q=(c',t_1)$ be an input on the right of $(c,t_1)$. 
Using the Claim~\ref{claim: inputRevealing}, there must be an input point $r$ in the rectangle $[a+1,c-1] \times [t_1+1, t_2]$.  The points $p,q$ and $r$ induce pattern $(1,3,2)$ in $X$, a contradiction. 
\end{proof}

\begin{lemma}
BL avoids $\kbordermatrix{
     & &  & \\
      &  & \times&  \\ 
    &  \times & & \times}$. Therefore, $|BL| \leq O(n)$. 
\end{lemma}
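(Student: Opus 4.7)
The plan is to argue by contradiction, following the blueprint of the immediately preceding $BR$ lemma: assume $BL$ contains the inverted-V pattern at columns $a<b<c$ and times $t_1<t_2$, extract witness inputs, and exhibit a forbidden $(1,3,2)$ triple in $X$. From $(a,t_1), (b,t_2), (c,t_1) \in B$ I obtain the unique column-$a$, $b$, $c$ inputs $p_a = (a, t_{pa})$, $p_b = (b, t_{pb})$, $p_c = (c, t_{pc})$ with $t_{pa}, t_{pc} < t_1$ and $t_{pb} < t_2$; from $(a,t_1), (c,t_1) \in L$, the row-$t_1$ input $q_1 = (q_1.x, t_1)$ satisfies $q_1.x < a$, and from $(b, t_2) \in L$ the row-$t_2$ input $q_2 = (q_2.x, t_2)$ satisfies $q_2.x < b$. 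Finally, the generic capture gadget \Cref{claim: inputRevealing} applied to the $BL$ pattern itself yields an input $r$ with $a < r.x < c$ and $t_1 < r.y \leq t_2$. Once a $(1,3,2)$ triple is exhibited, the lemma concludes via the extremal bound ${\sf ex}\bigl(n,\kbordermatrix{& & & \\ & & 1 & \\ & 1 & & 1}\bigr) = O(n)$ from \Cref{thm:exbound}.

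The naive mirror of the $BR$ argument---trying the triple $(p_c, q_1, r)$---fails because the time order $t_{pc} < t_1 < r.y$ forces the value pattern $c, q_1.x, r.x$, which is $(3,1,2)$ rather than the forbidden $(1,3,2)$. To remedy this, I would apply the refined one-sided gadget \Cref{refinedHidden} to the subpattern $\kbordermatrix{& a & b \\ t_2 & & \times \\ t_1 & \times & }$, which forces one of two cases: (i) there is an input in $[b, \infty) \times (t_1, t_2]$, or (ii) $q_2.x \in (a, b)$. In branch (ii), the triple $(p_a, p_b, q_2)$ immediately realizes $(1,3,2)$ provided $t_{pa} < t_{pb}$, and the triple $(p_a, p_c, q_2)$ does so provided $t_{pa} < t_{pc}$; in branch (i), the input to the right of $b$ plays the role of the ``largest'' value at an intermediate time, again yielding $(1,3,2)$ together with $p_a$ and $q_2$ (or $q_1$).

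The main obstacle I foresee is ruling out the residual configuration in which $t_{pa}$ is the latest of $\{t_{pa}, t_{pb}, t_{pc}\}$ (all occurring before $t_1$), since none of the triples above then respects the required time ordering. To handle it, I would fall back on the interior input $r$ and consider $(q_1, r, q_2)$: whenever $q_1.x < q_2.x < r.x$ this is directly $(1,3,2)$, and otherwise a symmetric application of \Cref{hidden} on the subpattern $\kbordermatrix{& b & c \\ t_2 & \times & \\ t_1 & & \times}$ pins down the column of $r$ relative to $b$, allowing a triple with $p_b$ (which in the remaining sub-cases is forced into $(t_1, t_2)$) to close the argument. Proving that this case analysis is genuinely exhaustive---that no ordering of $t_{pa}, t_{pb}, t_{pc}, r.y$ simultaneously evades every candidate triple---is the delicate bookkeeping step, and I expect it to carry most of the work of the proof.
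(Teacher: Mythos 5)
Your proposal has a genuine gap, and the root cause is that you apply \Cref{refinedHidden} to the wrong two-point submatrix. The paper's proof applies \Cref{refinedHidden} not to the touched points $(a,t_1)$, $(b,t_2)$ as you do, but to the submatrix formed by $p=(a,t_0)$ (your $p_a$: the input directly below $(a,t_1)$, which exists with $t_0<t_1$ since $(a,t_1)\in B$) as the $\bullet$ and the touched point $(c,t_1)$ as the $\times$. The corollary then yields an input $q$ in $[c,\infty)\times[t_0+1,t_1]$ or in $[a+1,\infty)\times[t_1,t_1]$; the second alternative is ruled out because $(c,t_1)\in L$ forces the row-$t_1$ input to lie strictly left of $a$, and for the same reason the row-$t_1$ input cannot land in $[c,\infty)$, so in fact $q.y\le t_1-1$. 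Together with the input $r\in[a+1,c-1]\times[t_1+1,t_2]$ from \Cref{claim: inputRevealing}, the single triple $(p,q,r)$ has strictly increasing times $t_0<q.y<t_1<r.y$ and values $(a,q.x,r.x)$ with $a<r.x<c\le q.x$, which is $(1,3,2)$ unconditionally. No case analysis on the relative order of $t_{pa},t_{pb},t_{pc}$ is needed, and $p_b,p_c,q_1,q_2$ never enter the argument.

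Your version of \Cref{refinedHidden}, applied to $(a,t_1)$ and $(b,t_2)$, does not produce this $q$, and the candidate triples you list genuinely leave cases uncovered. In your branch (i) the input $s$ satisfies $s.x\ge b$ and $t_1<s.y<t_2$, but if additionally $q_2.x<a$ (which nothing in branch (i) excludes), then $(p_a,s,q_2)$ realizes $(2,3,1)$ rather than $(1,3,2)$, and $(q_1,s,q_2)$ is indeterminate without knowing the order of $q_1.x$ and $q_2.x$. Branch (ii) similarly requires $t_{pa}<t_{pb}$ or $t_{pa}<t_{pc}$, neither of which is forced. So the ``delicate bookkeeping'' you flag is not merely tedious---with the triples you enumerate the case analysis does not close. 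The fix is not more bookkeeping but a different one-sided gadget application, anchored on the input $p_a$ and the far touched point $(c,t_1)$ rather than on two touched points of the pattern.
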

\begin{proof}
Assume for contradiction that BL contains $\kbordermatrix{
     & a & b &c \\
     t_2 &  & \times&  \\ 
    t_1&  \times & & \times}$ for some time indices $t_1 < t_2$ and keys $a< b< c$.
Let $p=(a,t_0)$ be the input point at the bottom of $(a,t_1)$. Applying \Cref{refinedHidden} with the submatrix  $\kbordermatrix{
     & a & c  \\
     t_1 &   & \times \\ 
    t_0&  \bullet  & }$, it follows that there exists input point $q$ in the rectangle $[c,\infty) \times [t_0+1,t_1]$ or $([a+1,\infty)\times [t_1,t_1])$. Since an input point at time $t_1$ has to be on the left of $a$, This means $q$ must be in the rectangle $[c,\infty) \times [t_0+1,t_1-1]$. Finally, applying Claim~\ref{claim: inputRevealing}, there exists input point $r$ in the rectangle $[a+1,c-1] \times [t_1+1,t_2]$. The points $p,q$ and $r$ induce $(1,3,2)$ in $X$, a contradiction.
\end{proof}

\begin{figure}[ht]
  \centering
  \includegraphics[scale=0.4]{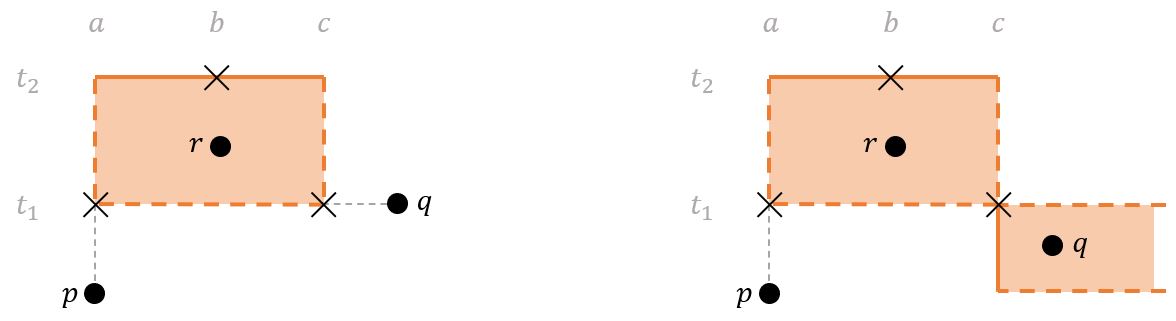}
  \caption{Illustrations of the proofs for $BR$ (left) and $BL$ (right) in postorder}
\end{figure}

\begin{lemma} \label{lem: post TR}
    TR avoids $\kbordermatrix{
     &  &  & \\
      & \times & &\times  \\ 
    &   & \times& }$. Therefore, $|TR| \leq O(n)$.
\end{lemma}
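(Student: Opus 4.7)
The plan is to mimic the argument structure of the preceding lemmas for $BR$ and $BL$: assume for contradiction that $TR$ contains the forbidden pattern on touched cells $(a,t_2), (c,t_2), (b,t_1)$ with $a < b < c$ and $t_1 < t_2$, and then derive a $(1,3,2)$ subsequence in the input $X$, contradicting the postorder assumption. The structural advantage of being in $T$ is that each of the three cells has an input strictly later in its own column; in particular there exists $p_c = (c, t_c)$ with $t_c > t_2$, and an input $p_a = (a,t_a)$ with $t_a > t_2$. The structural advantage of being in $R$ is that each cell has an input to the right within its row; since both $(a,t_2)$ and $(c,t_2)$ are in $R$ and row $t_2$ contains a unique input of $X$, the row-$t_2$ input $p_{t_2}$ must satisfy $p_{t_2}.x > c$.

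The crucial step is to apply \Cref{refinedHidden} to the $\times$-pair $(a,t_2), (b,t_1)$. It produces an input either in $(-\infty, a] \times [t_1+1, t_2]$ or in $(-\infty, b-1] \times \{t_2\}$. The second option is immediately ruled out because $p_{t_2}.x > c > b-1$ and row $t_2$ has only one input. So there is an input $r$ with $r.x \le a$ and $r.y \in (t_1, t_2]$; the boundary case $r.y = t_2$ would force $r = p_{t_2}$ whose column exceeds $c$, and $r.x = a$ would force $r = p_a$ whose row $t_a$ exceeds $t_2$, so in fact $r.x < a$ and $t_1 < r.y < t_2$. The triple of inputs $(r, p_c, p_{t_2})$ now has columns $r.x < a < c < p_{t_2}.x$ and rows $r.y < t_2 < t_c$, realizing the forbidden pattern $(1,3,2)$ in $X$.

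Once pattern avoidance for $TR$ is established, the bound $|TR| = O(n)$ follows from the first item of \Cref{thm:exbound}: the stated bound is for the vertical flip of our pattern, and vertical reflection preserves the extremal count. The main obstacle I anticipate is the choice of gadget: the plain \Cref{hidden} only confines $r$ to $(-\infty, b-1] \times [t_1+1, t_2]$, in which case if $a < r.x \le b-1$ the triple $(r, p_c, p_{t_2})$ would realize $(2,1,3)$ or $(2,3,1)$ rather than $(1,3,2)$, and no contradiction is obtained. The refinement from \Cref{refinedHidden}, combined with the derived lower bound $p_{t_2}.x > c$ used to discard its second alternative, is exactly what forces $r$ strictly left of $a$, so that the three inputs land in the order required for the $(1,3,2)$ contradiction.
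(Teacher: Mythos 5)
Your proof is correct, and in spirit it does what the paper does: use a capture gadget on the $\times$-pair $(a,t_2),(b,t_1)$ to produce an input strictly below $t_2$ and strictly left of $c$, then combine it with the input to the right of $(c,t_2)$ and the input above $(c,t_2)$ to exhibit a $(1,3,2)$. The paper works with exactly the same pair of inputs (your $p_{t_2}, p_c$ are its $q, r$).

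However, your justification for reaching for \Cref{refinedHidden} rather than the plain \Cref{hidden} rests on a misreading of the target pattern, and the paper does in fact just use \Cref{hidden}. You claim that if the captured input $r$ lands with $a < r.x \le b-1$ the triple would yield $(2,1,3)$ or $(2,3,1)$. That is not so. What is needed for the contradiction is only $r.x < p_c.x < p_{t_2}.x$ together with $r.y < p_{t_2}.y < p_c.y$: in time order the columns are $r.x$ (smallest), then $p_{t_2}.x$ (largest, since $> c$), then $p_c.x = c$ (middle), which is $(1,3,2)$ regardless of whether $r.x$ is to the left or to the right of $a$ — all that matters is $r.x < c$, and \Cref{hidden} already guarantees $r.x \le b-1 < c$. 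The only boundary case to discard is $r.y = t_2$, and you already handled that correctly (the unique input at row $t_2$ is $p_{t_2}$, whose column exceeds $c$). So the introduction of $p_a$, the deduction $r.x < a$, and the appeal to the refined gadget are all superfluous: they make the proof longer without buying anything. Your observation that the extremal bound transfers under vertical reflection is correct and is the right way to invoke the first item of \Cref{thm:exbound}.
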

\begin{proof}
Assume otherwise that TR contains $\kbordermatrix{
     & a & b &c \\
     t_2 & \times & &\times  \\ 
    t_1&   & \times& }$ for some time indices $t_1 < t_2$ and keys $a<b<c$. 
Since $(c,t_2)$ is in TR, there are input points $q$ and $r$ at the right and top of it respectively.
Applying \Cref{hidden} to the submatrix $\kbordermatrix{
     & a & b  \\
     t_2 & \times &   \\ 
    t_1&   & \times }$, there must be an input point $p$ in the region $(-\infty, b] \times [t_1+1,t_2]$. 
Since $X$ is a permutation, we have that $p.y < q.y$ (in particular, $p \neq q$). The points $p,q$ and $r$ induce $(1,3,2)$ in $X$, a contradiction. 
 
\end{proof}

\begin{lemma} \label{lem: post TL}
TL avoids $\kbordermatrix{
     &  &   \\
     & \times & \times  \\ 
    & \times  &  }$. Therefore, $|TL| \leq O(n)$.
\end{lemma}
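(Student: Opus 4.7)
My plan is to assume for contradiction that $TL$ contains the forbidden pattern, extract a few input points using the capture gadgets from \Cref{section: short-prelims}, and exhibit a forbidden $(1,3,2)$ triple in $X$.

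Concretely, suppose the pattern occurs in $TL$ at columns $a<b$ and rows $t_1<t_2$, so $(a,t_1),(a,t_2),(b,t_2)\in TL$. From the $TL$ memberships one can immediately read off two input points that will serve as the ``extreme'' values of the forbidden triple: since $(a,t_1)\in L$, the input at time $t_1$ is some $r=(x_1,t_1)$ with $x_1<a$; since $(a,t_2)\in T$, the input in column $a$ is some $p_a=(a,y_a)$ with $y_a>t_2$. I will also record that $(a,t_2)\in L$ reveals that the input at row $t_2$, call it $s$, satisfies $s.x<a$; this side fact is used only to dispose of one degenerate case below.

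The next step is to obtain the ``middle'' point $u$ of the triple by applying the one-sided capture gadget (\Cref{hidden}) to the sub-configuration $\kbordermatrix{& a & b \\ t_2 & & \times \\ t_1 & \times & }$ formed by the two touched points $(a,t_1)$ and $(b,t_2)$ in $TL$. This supplies an input point $u$ with $u.x>a$ and $t_1<u.y\le t_2$. The boundary case $u.y=t_2$ would force $u=s$ by uniqueness of the row-$t_2$ input, contradicting $s.x<a<u.x$; hence in fact $u.y<t_2$.

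The contradiction is then immediate. The times satisfy $t_1<u.y<t_2<y_a$, and the column values $x_1,\,u.x,\,a$ satisfy $x_1<a<u.x$ with $p_a.x=a$; so reading $r,u,p_a$ in time order realizes the rank pattern $(1,3,2)$ in $X$, contradicting that $X$ avoids $(1,3,2)$. The claimed bound $|TL|=O(n)$ then follows from the linear extremal bound for a three-entry forbidden pattern. I expect the only delicate step to be the capture-gadget application: its conclusion places $u$ in a closed interval including $t_2$, and it is precisely the auxiliary fact $s.x<a$ that lets us strictly separate $u.y$ from $t_2$ so that the time-ordered rank sequence comes out as $(1,3,2)$ rather than a benign pattern.
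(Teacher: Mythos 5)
Your proof is correct and follows essentially the same route as the paper's: assume the pattern occurs, extract the input above $(a,t_2)$ and the input left of $(a,t_1)$, apply the one-sided capture gadget to $(a,t_1)$ and $(b,t_2)$ to produce a middle input point, and read off a $(1,3,2)$ triple. Your explicit handling of the boundary case $u.y=t_2$ (ruled out via the fact that $(a,t_2)\in L$) is a small refinement the paper leaves implicit, but does not change the argument's structure.
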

\begin{proof}
Assume otherwise that TL contains $\kbordermatrix{
     & a & b  \\
     t_2 & \times & \times  \\ 
    t_1& \times  &  }$ for time indices $t_1 < t_2$ and keys $a<b$. 
Let $r$ be an input at the top of $(a,t_2)$, and $p$ be the input at the left of $(a,t_1)$. Using \Cref{hidden} for the submatrix $\kbordermatrix{
     & a & b  \\
     t_2 & & \times  \\ 
    t_1& \times  &  }$, there must exist point $q$ in the region $[b,\infty) \times [t_1+1,t_2]$.  
The points $p,q$ and $r$ induce $(1,3,2)$ in $X$, a contradiction.
\end{proof}

\begin{figure}[ht]
  \centering
  \includegraphics[scale=0.4]{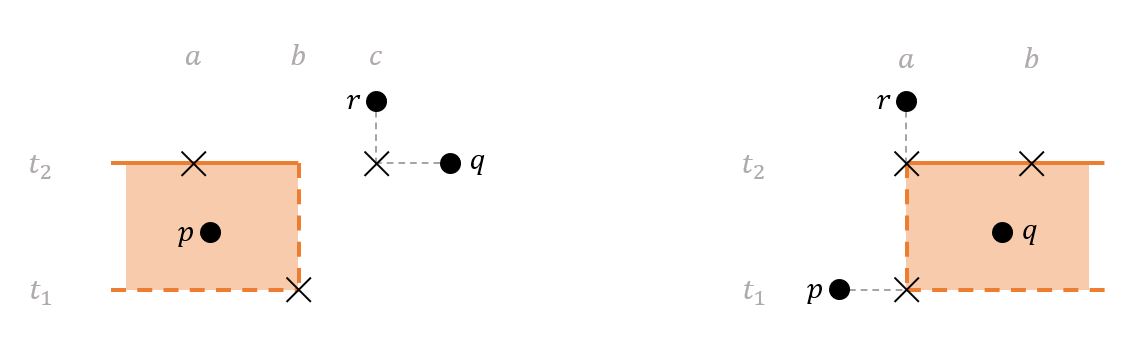}
  \caption{Illustrations of the proofs for $TR$ (left) and $TL$ (right) in postorder}
\end{figure}

\section{Dynamic Deque with Insertion and Deletion}\label{sec:deque}

In this section we prove \Cref{thm:main}(\ref{item:main1}). We use the model for dynamic update deque sequence with insertion and deletion from \cite{DBLP:journals/corr/ChalermsookG0MS15a}. An update sequence $S$ is a set of points where each point is either inserted, deleted or accessed. For our purposes, our model only deals with insertions and deletions and no access points.

\begin{definition}[Deque Sequence] An update sequence is a deque sequence if it only consists of {\sc InsertMin, InsertMax, Deletemin, DeleteMax} operations.
\end{definition}

In this model, each key can be inserted and followed by deletion at most once. In addition, it can be touched only during the time between insertion and deletion. More precisely, for any key $p$, let $t_{ins}(p)$ and $t_{del}(p)$ be an insertion and deletion time of $p$, respectively.  If $p$ is in an initial tree $T$, $t_{ins}(p)=0$. The model ensures that $t_{ins}(p) < t_{del}(p)$. The active time $act(p)$ is the interval of time $[t_{ins}(p), t_{del}(p)]$, and $p$ can be touched during $act(p)$.

Let $Min_t$ and $Max_t$ be the set of keys which are deleted by {\sc DeleteMin} and {\sc DeleteMax} before time $t$, respectively.

\begin{definition}[from \cite{DBLP:journals/corr/ChalermsookG0MS15a} Concentrated Deque Sequence] A deque sequence is concentrated if, for any time $t$, the inserted element $x$ is the minimum, then $y<x$ for all $y \in Min_t$, and if $x$ is the maximum, then $x<y$ for all $y \in Max_t$.
\end{definition}

\begin{figure}[ht]
  \centering
  \includegraphics[scale=0.3]{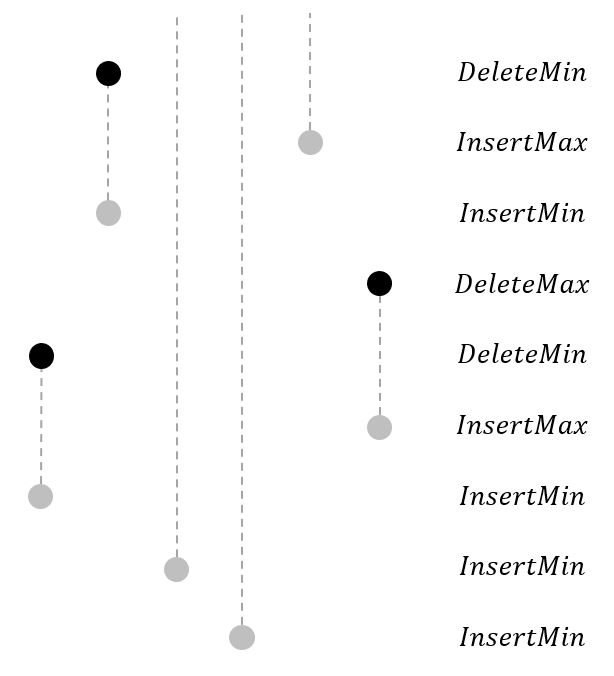}
  \caption{An example of concentrated deque sequence}
\end{figure}

In the succeeding, we will use the following lemma from \cite{DBLP:journals/corr/ChalermsookG0MS15a} to assume that the updated deque sequences we work with are always concentrated deque sequences.

\begin{lemma} [From \cite{DBLP:journals/corr/ChalermsookG0MS15a}]For any deque sequence $S$ , there is a concentrated deque sequence $S'$ such that the execution of any BST algorithm on $S'$ and $S$ have the same cost.
\end{lemma}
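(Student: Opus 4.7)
The plan is a relabeling argument: we take $S'$ to use exactly the same sequence of operation types as $S$, leave initial-tree keys with their original labels, and only reassign values to inserted keys via some map $\phi$. Because every BST algorithm is comparison-based, its execution on $S$ and $S'$ agrees as long as $\phi$ preserves the relative order of every pair of keys that are simultaneously present in the tree at some moment. Our task therefore reduces to producing such a $\phi$ for which the resulting $S'$ is concentrated.

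We build $\phi$ online while simulating $S$, maintaining two real numbers $L_t := \max\{\phi(y) : y \in Min_t\}$ and $R_t := \min\{\phi(y) : y \in Max_t\}$ (with $\max\emptyset = -\infty$ and $\min\emptyset = +\infty$), where $Min_t$ and $Max_t$ are measured with respect to $S'$. The inductive invariant we preserve is that $L_t < R_t$ and that every currently alive key $k$ satisfies $\phi(k) \in (L_t, R_t)$. When \textsc{InsertMin} on $x$ arrives, set $m_t$ to be the current $S'$-tree minimum (or $m_t := R_t$ when the $S'$-tree is empty) and choose any $\phi(x) \in (L_t, m_t)$, e.g., the midpoint; \textsc{InsertMax} is handled symmetrically. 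A \textsc{DeleteMin} removes the current $S'$-minimum $v$ and updates $L_t \leftarrow v$, which preserves the invariant because $v \in (L_t, R_t)$ was the smallest alive value; \textsc{DeleteMax} is symmetric.

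Concentration is automatic from the rule: every \textsc{InsertMin}-value satisfies $\phi(x) > L_t = \max_{y \in Min_t} \phi(y)$, and \textsc{InsertMax} is analogous. Preservation of BST cost reduces to showing that $\phi$ preserves the order of any two coexisting keys $a, b$: if both are initial there is nothing to check, and if $a$ is alive when a later $b$ is inserted, the \textsc{InsertMin} case gives $\phi(b) < m_t \le \phi(a)$ which matches $b < a$ in $S$, with \textsc{InsertMax} analogous. Hence every comparison decided by any BST algorithm outputs the same direction on $S$ as on $S'$, so the search paths traced are identical and so are the total costs.

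The main obstacle is maintaining the invariant $L_t < R_t$: the two walls built by past \textsc{DeleteMin}/\textsc{DeleteMax} operations must never cross. Only deletions can move these walls, and each such move places a wall at a value that was strictly inside the previous window $(L_t, R_t)$, so the two sides never meet; even a cascade of deletions that momentarily empties the $S'$-tree still leaves a safe open interval into which the next insertion can land. With this invariant verified by induction, concentration and cost preservation follow immediately from the construction.
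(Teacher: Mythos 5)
The paper never proves this lemma itself --- it is imported by citation from the deque paper \cite{DBLP:journals/corr/ChalermsookG0MS15a} --- and your relabeling/order-isomorphism argument is essentially the standard proof of that cited statement: keep the operation types, reassign each inserted key a value strictly inside the window between the largest value deleted by {\sc DeleteMin} and the smallest value deleted by {\sc DeleteMax} (and below/above all currently alive values), and use the BST-model convention that an algorithm's execution depends only on the order structure of coexisting keys, which your map $\phi$ preserves. Your invariant $L_t<R_t$ with all alive keys in $(L_t,R_t)$ is maintained exactly as you say, and it simultaneously yields concentration and cost preservation, so the proposal is correct.
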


\subsection{The $O(m \alpha (n))$ Bound}
Let $X$ with $|X| = m$ be an input instance of concentrated deque with insertion and deletion where $x_i$ is the inserted or deleted key at time $i$.
A key $a$ is touched (excluding deletions) by Greedy at a time $t$ only when $\square_{(a, \tau(a,t)), (x_t, t)}$ is empty.
\begin{lemma}
\label{lemma:insert-delete_deque}
 Assuming $m \ge n$, then $|G(X)| \leq O(m \alpha (n))$.
\end{lemma}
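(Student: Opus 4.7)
The plan is to instantiate the matrix-decomposition framework of the paper: partition $G(X)$ into the input set $X$ together with two auxiliary collections of non-input touched points classified by their column position relative to the input at the same time, and bound each collection by $O(n\alpha(n))$ via a classical forbidden-submatrix extremal bound. Concretely, for each non-input touched point $q\in G(X)\setminus X$, let $x_{q.y}$ denote the unique input at time $q.y$, and define
\[
L := \{q \in G(X)\setminus X : q.x < x_{q.y}\},\qquad R := \{q \in G(X)\setminus X : q.x > x_{q.y}\}.
\]
Since $|X|=m$ and $m\geq n$, proving $|L|,|R|\leq O(n\alpha(n))$ immediately yields $|G(X)|\leq m+|L|+|R|=O(m\alpha(n))$.

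Before analyzing $L$, I would record a simple consequence of the concentrated structure: a touched key in $L$ can only occur at a time $t$ when $x_t$ is a max-side operation (InsertMax or DeleteMax), since otherwise $x_t$ would be the current minimum and no active key could lie strictly to its left; symmetrically $R$ lives only over min-side times. This left/right symmetry reduces the analysis to $L$.

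The core technical step would be to show that $L$ avoids the length-$4$ pattern
\[
P_L \;=\; \kbordermatrix{&&&&\\ && \times && \times \\ & \times && \times &},
\]
whose extremal bound is $O(n\alpha(n))$ by \Cref{thm:exbound}. Suppose for contradiction that $L$ contains $P_L$ at touched points $(a,t_1),(c,t_1),(b,t_2),(d,t_2)$ with $a<b<c<d$ and $t_1<t_2$; then the two max-side inputs satisfy $x_{t_1}>c$ and $x_{t_2}>d$. Applying \Cref{claim: inputRevealing}, \Cref{hidden}, and \Cref{refinedHidden} to the sub-patterns of $P_L$ on the column triples $\{a,b,c\}$ and $\{b,c,d\}$ reveals additional input points in the rectangles $[a+1,c-1]\times[t_1+1,t_2]$ and $[b+1,d-1]\times[t_1+1,t_2]$; combined with $x_{t_1}$ and $x_{t_2}$, one obtains four input points sitting in a tightly constrained region of the input matrix between $t_1$ and $t_2$.

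The main obstacle is then to derive a contradiction from these four input points using the concentrated deque property. The crucial structural fact to exploit is the nesting of the extremes: once a key $y$ has been min-deleted (resp.\ max-deleted), every subsequent min-insert (resp.\ max-insert) must lie strictly above $y$ (resp.\ below $y$). Combined with the fact that each of $a,b,c,d$ is active at the moment it is touched, a short case split on whether each newly revealed input is a min- or a max-side operation should force an impossible excursion of the active min/max boundary between $t_1$ and $t_2$. This is where the proof departs meaningfully from the delete-only warm-up (\Cref{claimpath}), where the input itself was size-$3$-pattern-avoiding and a $(2,1)$ instance closed the argument immediately; here the concentrated property has to be invoked carefully on the four auxiliary inputs to close the argument, and keeping this case bookkeeping in check is the most delicate part.
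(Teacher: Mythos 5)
Your high-level framework (decompose $G(X)\setminus X$ and hit each part with an $O(\cdot\,\alpha(\cdot))$ forbidden-submatrix bound) is in the spirit of the paper, but the actual proof is structured very differently, and the step you defer is the entire content of the lemma. The paper does \emph{not} split the touched points globally by their position relative to the current input; it splits the execution into \emph{phases} (a phase starting with $k_i$ active keys lasts only $k_i/2$ operations) and, within each phase, splits the touched points at the \emph{median of the active keys}. That structure is exactly what makes the pattern-avoidance argument (\Cref{lemma:leftpoints}) go through: since at most $k_i/2$ max-side operations occur in the phase, no {\sc InsertMax}/{\sc DeleteMax} input can appear to the left of the divide during the phase, so the input point revealed by \Cref{claim: inputRevealing} inside the left part cannot be max-side, and the remaining min-side case is killed using concentration together with the fact that the \emph{later} row of the paper's pattern has a point at the far-left column, certifying that a key below $b$ is still active (so $b$ cannot have been min-deleted and the revealed point cannot be min-side). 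In your global decomposition neither mechanism is available: over the whole sequence the current maximum can drift arbitrarily far to the left (concentration only makes the max-frontier monotone; it does not keep max-side operations out of any fixed column range), so the input revealed in $[a+1,c-1]\times[t_1+1,t_2]$ may legitimately be a max-side operation; and in your orientation of $P_L$ the far-left point sits on the \emph{earlier} row, so nothing certifies that $a$ (or anything left of $b$) is still active later, and a {\sc DeleteMin} of a key in $(a,c)$ after $a$ has been min-deleted is not obviously excluded either. One can push parts of the case analysis through using the point $(d,t_2)$ and the input $x_{t_2}>d$, but keys being deleted/inserted between $t_1$ and $t_2$ (including the active set temporarily emptying) opens loopholes that your sketch does not address; ``should force an impossible excursion'' is precisely the claim that needs a proof, and the paper's phase-plus-divide device exists because the unrestricted version is not known to hold.

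Two smaller points. First, your preliminary observation (points of $L$ occur only at max-side times) is fine, but it is not used anywhere that matters. Second, even granting the avoidance claim, the matrix in question has $m$ rows and up to $n+m$ columns, so the extremal bound of \Cref{thm:exbound} would give $O(m\,\alpha(m))$, not the stated $O(n\alpha(n))$ per part; the paper's per-phase application of the bound (on $O(k_i)$ active keys per phase, summed over phases of length $k_i/2$) is also what produces the $O(m\,\alpha(n))$ form of the final bound. So as written the proposal has a genuine gap at its central step and would need to be reorganized along the lines of the paper's phase argument (or some substitute that localizes the min/max operations) to be repaired.
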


Let us assume that the initial tree $T$ has $k_0$ active keys. We divide the execution of Greedy into phases. The first phase starts at time $t=1$ and lasts till time $t=\frac{k_0}{2}$. After the end of the first phase, let $k_1$ be the number of active keys. Then, our second phase starts at time $t=\frac{k_0}{2}+1$ and lasts till time $t=\frac{k_0}{2}+\frac{k_1}{2}$.  This process continues from one phase to another until no active keys remain or $x_m$ is inserted or deleted. When there are no active keys at the beginning of a phase, we wait for the first key to be inserted and then begin our phase. Also, we always consider $\lceil \frac{k}{2} \rceil$ for $k$ active keys in any phase but for notation we will use $\frac{k}{2}$.

For the first phase, we see Greedy's execution on $X$ till time $\frac{k_0}{2}$. At time $\frac{k_0}{2}$, we divide the touched point of Greedy into two parts \emph{left} (denoted as $\lset_{\frac{k_0}{2}}$) and \emph{right} (denoted as $\rset_{\frac{k_0}{2}}$) such that both parts contains equal number of active keys. $\lset_{\frac{k_0}{2}}$ contains all keys to the left of the divide and  $\rset_{\frac{k_0}{2}}$ contains all the keys to the right of the divide. We show that the number of touched point in $\lset_{\frac{k_0}{2}}$ and $\rset_{\frac{k_0}{2}}$ is $O(k_0\alpha(n))$.

In general, if there are $k_i$ active keys at the start of phase $i$ then we show that the number of points touched by Greedy in phase $i$ is $O(k_i\alpha(n))$. Summing over all phases, $\sum\limits_{\text{phase $i$}} k_i\alpha(n)=O(m\alpha(n)).$

\subsection{Greedy adds $O(k_i\alpha(n))$ points in phase $i$}

\begin{lemma}\label{lemma:leftpoints}
$\lset_{\frac{k_i}{2}}$ avoids $P = \kbordermatrix{
& a &b  &c   &d \\
t'& \times  &  & \times &  \\ 
    t& &\times  &  & \times }$.
\end{lemma}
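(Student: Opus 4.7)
I would argue by contradiction: suppose $\lset_{k_i/2}$ contains $P$ at columns $a<b<c<d$ and rows $t<t'$. Since all four columns belong to $\lset$, every one of $a,b,c,d$ has $x$-coordinate strictly less than the midpoint $m^*$ of the active keys at the end of phase $i$, so each lies on the ``min side'' of the phase.

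My first step is to extract two input points via the Generic Capture Gadget (\Cref{claim: inputRevealing}). Applied to the sub-pattern $\kbordermatrix{& b & c & d \\ t' & & \times & \\ t & \times & & \times}$ on columns $\{b,c,d\}$, it yields an input point $p_1 \in [b+1,d-1]\times[t+1,t']$; applied to the vertically symmetric sub-pattern $\kbordermatrix{& a & b & c \\ t' & \times & & \times \\ t & & \times &}$ on columns $\{a,b,c\}$, it yields an input point $p_2 \in [a+1,c-1]\times[t,t'-1]$. Both $p_1.x$ and $p_2.x$ lie strictly below $m^*$, so in a concentrated deque each of $p_1,p_2$ must be either an \textsc{InsertMin} or a \textsc{DeleteMin}.

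Next, I would combine these revealed min-operations with the active-key conditions implied by the touches. The touches $(b,t)$ and $(d,t)$ certify that $b$ and $d$ are both active at time $t$, and analogously $a,c$ are active at time $t'$. For $p_1$ to be a valid min-operation with $p_1.x\in(b,d)$ at time $\tau_1$, the key $b$ cannot still be active at $\tau_1$ (otherwise $p_1.x>b$ contradicts the current-min invariant), so $b$ must be \textsc{DeleteMin}'d at some time in $(t,\tau_1]$. A symmetric argument applied to $p_2$ forces the \textsc{DeleteMin} of a small key inside $[t,p_2.y]$. Chaining these forced deletions with the \textsc{InsertMin}/\textsc{DeleteMin} events at $p_1$ and $p_2$, and invoking the concentrated invariant---\emph{every later \textsc{InsertMin} value strictly exceeds every earlier \textsc{DeleteMin}'d value}---I expect to extract a concrete \textsc{InsertMin} whose inserted value is at most a previously deleted min, which contradicts concentration and hence the assumed occurrence of $P$ in $\lset_{k_i/2}$.

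The main obstacle I foresee is the case analysis in the last step: four type assignments for $(p_1,p_2)\in\{\textsc{InsertMin},\textsc{DeleteMin}\}^2$, a degenerate sub-case $p_1=p_2$ where both applications of the capture gadget return the same input point and thus reveal only one min-operation, and further branches depending on the relative order of $\tau_1,\tau_2$ and on which of $b,c,d$ is still active at each intermediate time. In each branch one must pinpoint a specific pair of an earlier \textsc{DeleteMin} and a later \textsc{InsertMin} that witnesses the violation of concentration; handling the degenerate single-point case uniformly and doing so across every branch is the delicate core of the argument.
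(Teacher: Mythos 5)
Your plan has genuine gaps, and it also departs from what the paper actually does. The first problem is your second extraction point $p_2$: \Cref{claim: inputRevealing} applies only to the configuration where the \emph{single} touched point sits on the upper row and the two flanking touched points sit on the lower row. The ``vertically symmetric'' configuration $\kbordermatrix{ & a & b & c \\ t' & \times & & \times \\ t & & \times & }$ is not an instance of that gadget, and you cannot get it by symmetry: Greedy's matrix is built row by row upward from the input and the initial tree, so it is not invariant under time reversal, and a touch of $a$ and $c$ at time $t'$ can be caused by an access far to the left of $a$ without any input appearing in $[a+1,c-1]$ during $[t,t']$. So the second ``revealed'' min-operation you rely on would need its own (nontrivial, and here unnecessary) proof. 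A second, smaller gap: ruling out \textsc{InsertMax}/\textsc{DeleteMax} for a revealed point left of the divide does not follow from concentration or from ``lying on the min side'' alone; the paper's reason is a counting argument tied to the phase structure --- the phase contains at most $k_i/2$ operations while there are at least $k_i/2$ active keys to the right of the divide, so no max-operation can occur at a key on the left part during the phase.

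The more serious issue is that the step where the real work happens is exactly the step you defer to ``delicate chaining.'' The paper needs only the single point $q\in[b+1,d-1]\times[t+1,t']$ from the one legitimate gadget application, and then shows $q.y\in act(b)$ by proving $t_{del}(b)\ge t'$: if $b$ were deleted before $t'$ within the phase it must be by \textsc{DeleteMin}, and then either $a$ is still active at that deletion time (impossible, since $a<b$ would be the smaller active key), or $a$ is inserted only after $b$'s deletion, which contradicts concentration because the touched point $(a,t')$ certifies that $a$ is active at time $t'$. Once $b$ is known to be active at time $q.y$, the point $q$ with $q.x>b$ can be neither a min- nor a max-operation, giving the contradiction. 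Your sketch never isolates this use of $(a,t')$, and the concluding ``extract an \textsc{InsertMin} below an earlier deleted minimum'' is asserted rather than derived across the branches you yourself list; as written, the contradiction is not established.
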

 
\begin{proof}
Let us assume for  contradiction that $\lset_{\frac{k_i}{2}}$ contains $P$.
Applying Claim \ref{claim: inputRevealing} and the fact that $P$ contains $\kbordermatrix{
     & b & c &d \\
     t' &  & \times&  \\ 
    t&  \times & &\times
   }$ we would have that the rectangular region $[b+1,d-1] \times [t+1,t']$ must contain an input point $q\in X$. Since there are at most $\frac{k_i}{2}$ {\sc InsertMax} and {\sc DeleteMax} together in phase $i$, point $q$ cannot be an operation {\sc InsertMax} or {\sc DeleteMax}. Next, we will show that $q.y \in act(b)$, which implies that $q$ cannot be an operation {\sc InsertMin} or {\sc DeleteMin} because $q.x>b$. 
   
   To show that $q.y \in act(b)$, it suffices to show that $t_{del}(b) \geq t'$. If $t_{del}(b)$ is not in phase $i$, the statement trivially holds. Assume for contradiction that $t_{del}(b)$ is in phase $i$ and $t_{del}(b) < t'$. This means $b$ gets deleted by operation {\sc DeleteMin}. There are two cases: $t_{del}(b) \in act(a)$ and $t_{del}(b) \notin act(a)$. In the first case, {\sc DeleteMin} cannot delete $b$ since $a$ is active. In the second case, it means that $t_{ins}(a) > t_{del}(b)$, which contradicts to the fact that $X$ is concentrated sequence.
   


\end{proof}

Similar to above lemma we can prove that $\rset_{\frac{k_i}{2}}$ avoids $P' = \kbordermatrix{
& a &b  &c   &d \\
    t'& &\times  &  & \times \\
    t& \times  &  & \times & }$. This implies that the number of points in $\lset_{\frac{k_i}{2}}$ and $\rset_{\frac{k_i}{2}}$ is $O(\frac{k_i}{2} \alpha(\frac{k_i}{2}))$ (using Theorem \ref{thm:exbound}). Thus, the total number of points added by Greedy in phase $i$ is $O(k_i \alpha(\frac{k_i}{2}))$
As $n\ge k_i$, this quantity is $O(k_i \alpha(n))$.

\section{Split Model} \label{section: Split}

In this section we prove \Cref{thm:main split}. The geometric view of Greedy is invented for the purpose of search
~\cite{demaine2009geometry} and insert/delete~\cite{DBLP:journals/corr/ChalermsookG0MS15a}. We will first define Greedy execution in the split model and show the relation between this model and the standard search model when $X$ avoids some patterns. 
Our main theorem in this section is the following:  

\begin{theorem}
\label{thm: split main} 
Let $X \in [n]^n$ be a permutation. Then, 
\begin{itemize}
    \item If $X$ avoids $(1,3,2)$ and $(2,3,1)$, the cost of Greedy's deleting $X$ is at most the cost of Greedy's spliting $X$. 
    \item For any sequence $X$, there exists a sequence $X'$ avoiding $(2,3,1)$ such that Greedy's spliting $X$ costs at most Greedy's searching $X'$. 
\end{itemize}
\end{theorem}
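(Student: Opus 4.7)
The plan is to work in the geometric view of Greedy and first extend it cleanly to split and delete operations. Greedy's split execution on $x_t$ at time $t$ takes $x_t$ out of the split-tree that currently contains it by first bringing $x_t$ to the root of that tree (touching the path and the neighbors dictated by the usual geometric rule) and then removing it, leaving two independent split-trees. Greedy's deletion execution on $x_t$ accesses $x_t$ and removes it. With these definitions in hand, both executions produce 0/1 matrices whose number of ones equals the respective costs.

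For part (1), I would exploit the structural meaning of avoiding both $(1,3,2)$ and $(2,3,1)$: no $x_t$ is the value-wise maximum of any triple whose positional middle is $t$, which roughly means that at each time step $x_t$ is the minimum of either the prefix ending at $t$ or the suffix starting at $t$, so $x_t$ is always an ``extreme'' of its local context. Under this restriction I would argue, by induction on time, that every point touched by Greedy during the deletion of $x_t$ is also touched during the split of $x_t$. Intuitively, splitting already performs an access (bringing $x_t$ to the root) and for these restricted inputs the remaining work done by the delete rule is dominated by the access work done during splitting. Formally, I expect to set up an injection $G_{\rm del}(X) \hookrightarrow G_{\rm split}(X)$ and verify it by a case analysis based on whether $x_t$ is currently an extreme of its tree.

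For part (2), given an arbitrary split sequence $X$, I would build a recursive tree $R_X$ whose root is $x_1$ and whose left (resp.\ right) subtree is the recursive tree built from the subsequence of $X$ consisting of elements smaller (resp.\ larger) than $x_1$, preserving their relative order. Define $X'$ to be the preorder traversal of $R_X$. Since $X'$ is a preorder traversal of a BST, it avoids $(2,3,1)$, as required. To compare costs, I would set up a coupling between the split execution on $X$ and the search execution on $X'$: because a preorder traversal first processes the root of each subtree before descending, the sequence of roots-of-current-trees encountered during Greedy-split on $X$ corresponds naturally to the prefix structure of $X'$, and I would use this correspondence to map each touched point during the split of $x_t$ to a touched point during the search of the corresponding entry in $X'$.

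The main obstacle is the cost comparison in part (2): splits permanently fracture the tree, so the geometric rectangles Greedy scans during a split do not lift directly to rectangles scanned during a search. The hard step will be to show that every such split-rectangle is witnessed, possibly after a careful reindexing, by a corresponding search-rectangle in the $X'$ execution, which I expect to require an induction over time together with an auxiliary invariant describing how the collection of split-trees after $t$ steps relates to the tree state of the Greedy search on $X'$ after the analogous prefix. Once parts (1) and (2) are established, combining them with the preorder bound ${\sf preorder}(n) \le O(n\,2^{\alpha(n)})$ from Theorem~\ref{thm:preorder} and the delete-only deque bound yields the chain ${\sf deldeq}(n)\le{\sf split}(n)\le{\sf preorder}(n)$ stated in the theorem.
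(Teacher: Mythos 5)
Your overall skeleton is on the right track for part (2), but you are missing the one step that makes the argument go through, and you have correctly identified exactly where your own plan gets stuck. The tree $R_X$ you build (root $x_1$, recurse on smaller/larger) is precisely the paper's BST $\tset_X$, and your $X'$ is exactly the paper's rearranged sequence $X^R$ (the preorder traversal of $\tset_X$). Where you diverge is that you then try to couple Greedy-split on $X$ directly against Greedy-search on $X'$, across two sequences whose time axes do not agree. The paper instead proves an intermediate \emph{rearrangement lemma}: the split cost is invariant under swapping adjacent entries $b_i, b_{i+1}$ with $b_{i+1}<b_i$ and $I_{b_i}\cap I_{b_{i+1}}=\emptyset$ (\Cref{lem: invariant under swap} plus the row-by-row claim that $|G_S(B')|=|G_S(B)|$), and iterating these swaps turns $X$ into the preorder sequence $X^R$ with $|G_S(X^R)|=|G_S(X)|$. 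After that the comparison becomes an inequality \emph{on a single sequence}: for a preorder sequence, split cost $\le$ search cost, proven row-by-row using the interval structure (\Cref{northwest}, \Cref{east}, and the concluding equality $G^{(i)}(X)\cap I_{x_i}=G^{(i)}_S(X)\cap I_{x_i}$). This removes the reindexing you flagged as ``the hard step''; without the rearrangement lemma your coupling has to be argued across a permutation of time steps where the split-tree fragmentation and the search-tree state evolve differently, and you do not indicate how to do that.

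Two further concerns. First, your working definition of Greedy-split (``bringing $x_t$ to the root \ldots then removing it'') is an informal BST-view description; the proof actually needs the geometric split model, where at time $i$ the touch set is $\{a\in I_{x_i}:\Box_{(a,\tau(a)),(x_i,i)}\text{ is empty}\}$, because all the rectangle/interval lemmas are stated in that language. Second, for part (1) you propose an injection $G_{\rm del}(X)\hookrightarrow G_{\rm split}(X)$ ``by a case analysis based on whether $x_t$ is currently an extreme of its tree,'' but this is only a statement of intent with no mechanism for why the delete rule's rectangles are dominated by the split rule's. Also note that the characterization you give of the hypothesis (``$x_t$ is always the minimum of a prefix or suffix'') is for avoiding $(1,3,2)$ and $(2,3,1)$, which is \emph{not} the delete-only-deque class (that class avoids $(2,1,3)$ and $(2,3,1)$); you should not conflate the two when building the case analysis.
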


\begin{corollary}
For any permutation $X$, Greedy's splitting $X$ costs at most $O(n 2^{\alpha(n)})$. 
\end{corollary}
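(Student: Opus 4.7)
The plan is to obtain the corollary as a short combination of the second part of Theorem~\ref{thm: split main} with the preorder traversal bound established in Theorem~\ref{thm:preorder}. Given an arbitrary permutation $X \in [n]^n$ on which Greedy performs a sequence of split operations, the first step is to invoke Theorem~\ref{thm: split main}(2) to produce a companion sequence $X'$ on $[n]$ that avoids the pattern $(2,3,1)$ and for which the cost of Greedy's splitting $X$ is bounded above by the cost of Greedy's searching $X'$.

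The second step is purely to cite a previously proved bound. Since any $(2,3,1)$-avoiding permutation is by definition a preorder traversal sequence, Theorem~\ref{thm:preorder} gives $\gex(n,(2,3,1)) = O(n 2^{\alpha(n)})$, so the cost of Greedy on $X'$ is at most $O(n 2^{\alpha(n)})$. Chaining this with the inequality from the first step immediately yields the desired $O(n 2^{\alpha(n)})$ upper bound on the split cost of $X$.

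The substantive work is therefore not in the corollary itself but in Theorem~\ref{thm: split main}(2), which supplies the reduction from splitting to searching. The main obstacle there is to give a precise geometric definition of Greedy in the split model and then to construct the companion search sequence $X'$ in such a way that (i) every point Greedy touches while splitting $X$ can be charged to a point Greedy touches while searching $X'$, and (ii) $X'$ provably avoids $(2,3,1)$. Once that reduction is in hand, the corollary follows in one line, which is why the proof sketch I am proposing is intentionally brief: all the difficulty has been absorbed into the theorem it cites.
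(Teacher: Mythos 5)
Your argument is correct and is exactly the paper's intended derivation: Theorem~\ref{thm: split main}(2) reduces splitting $X$ to searching a $(2,3,1)$-avoiding sequence $X'$, and Theorem~\ref{thm:preorder} gives $\gex(n,(2,3,1)) = O(n\,2^{\alpha(n)})$, so chaining the two yields the bound. Nothing further is needed.
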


\begin{corollary}
For Greedy, the traversal conjecture implies the split conjecture, which implies the deque conjecture.  
\end{corollary}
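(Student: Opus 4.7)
The plan is to chain the two inequalities of Theorem~\ref{thm: split main} together with the standard reduction handling insertions in the deque setting.

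For the first implication (traversal $\Rightarrow$ split), I would proceed as follows. Given any input permutation $X$ of length $n$, Part~2 of Theorem~\ref{thm: split main} produces a companion permutation $X'$ avoiding $(2,3,1)$---that is, $X'$ is a preorder traversal sequence---such that the cost of Greedy splitting $X$ is at most the cost of Greedy searching $X'$. Assuming the traversal conjecture, Greedy searches any preorder traversal in $O(n)$ time, so the same bound transfers to splitting $X$, yielding the split conjecture.

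For the second implication (split $\Rightarrow$ deque), I would invoke Part~1 of Theorem~\ref{thm: split main}, which bounds Greedy's cost of deleting $X$ by Greedy's cost of splitting $X$ whenever $X$ avoids the indicated size-$3$ patterns. Under the split conjecture, the latter is $O(n)$, giving an $O(n)$ bound in the delete-only deque regime. To extend to the full deque with insertions, I would appeal to the concentrated-sequence framework from Section~\ref{sec:deque} and~\cite{DBLP:journals/corr/ChalermsookG0MS15a}, so that insertions are accounted for with only $O(m)$ additional cost.

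The main delicate point is verifying that the pattern-avoidance hypotheses of Part~1 correctly capture the deletion projection of a concentrated deque sequence, and that the insertion overhead can indeed be absorbed into $O(m)$; once those two points are settled, the corollary follows by pure chaining of the two inequalities supplied by Theorem~\ref{thm: split main}.
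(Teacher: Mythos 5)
Your chaining of the two inequalities from Theorem~\ref{thm: split main} is exactly the paper's intended argument for the two implications as far as they go: Part~2 gives traversal $\Rightarrow$ split, and Part~1 gives split $\Rightarrow$ delete-only deque. You correctly flag the two ``delicate points,'' but neither is actually resolved in your write-up, and the second one is a genuine gap.

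On the first delicate point (matching the pattern hypotheses): the delete-only deque inputs avoid $(2,3,1)$ and $(2,1,3)$, whereas Part~1 of the theorem is stated for inputs avoiding $(1,3,2)$ and $(2,3,1)$. These classes do not coincide literally, but they are related by the horizontal flip $M \mapsto M^{\mathrm{flip}}$, and the paper establishes (in Section~\ref{sec:size3input}, via the observation $(G_T(X))^{\mathrm{flip}} = G_{T^{\mathrm{flip}}}(X^{\mathrm{flip}})$) that Greedy's cost is invariant under this reflection. Your proof should invoke this symmetry explicitly; without it, the application of Part~1 to delete-only deque sequences does not type-check.

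On the second delicate point (handling insertions): here there is a real problem. The concentrated-sequence lemma from Section~\ref{sec:deque} only lets you assume without loss of generality that the deque sequence is concentrated; it does not reduce the full insert/delete deque to the delete-only case, and nothing in the paper shows that insertions contribute only $O(m)$ extra cost on top of the delete-only bound. In fact the paper's actual $O(m\alpha(n))$ bound for the full deque (Section~\ref{sec:deque}) is obtained by an entirely separate phase-based decomposition argument, not by reduction to splitting. The theorem supporting the corollary is phrased informally as ${\sf deldeq}(n) \le {\sf split}(n) \le {\sf preorder}(n)$, so the intended meaning of ``deque conjecture'' here is the delete-only deque property. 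Your proof should stop after deriving the $O(n)$ bound in the delete-only regime; the extra step claiming an $O(m)$ absorption of insertions is unsupported and, as far as this paper goes, unprovable by the cited machinery.
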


\subsection{The Split Model} 

\newcommand{\iset}{{\mathcal I}} 
\newcommand{\tset}{{\mathcal T}} 

Let $X = (x_1, x_2, … , x_n) \in [n]^n$ be a permutation input sequence of keys where $x_i$ is split at time $i$. 
Let $\iset_X = \{I_{x_1}, I_{x_2}, … , I_{x_n}\}$ be the set of intervals defined  as follows. 
First, we create a binary search tree $\tset_X$ by inserting the keys of $X$ into an empty initial tree where $x_i$ is inserted at time $i$.\footnote{When inserting $x_i$, we search the current tree $\tset_X$ until a miss occurs, and we insert $x_i$ at the corresponding place.}  
Define $I_{x_i}$ as the minimal open integer interval containing all keys in the subtree rooted at $x_i$ in $\tset_X$. Notice that $I_{x_1} = (0,n+1)$. 




See \Cref{fig:preorder} for illustration. These intervals define the ``active keys'' for each key, that is, $I_a$ is the interval containing active keys when key $a$ is split. 

\begin{figure}[h] 
  \centering
  \includegraphics[scale=0.3]{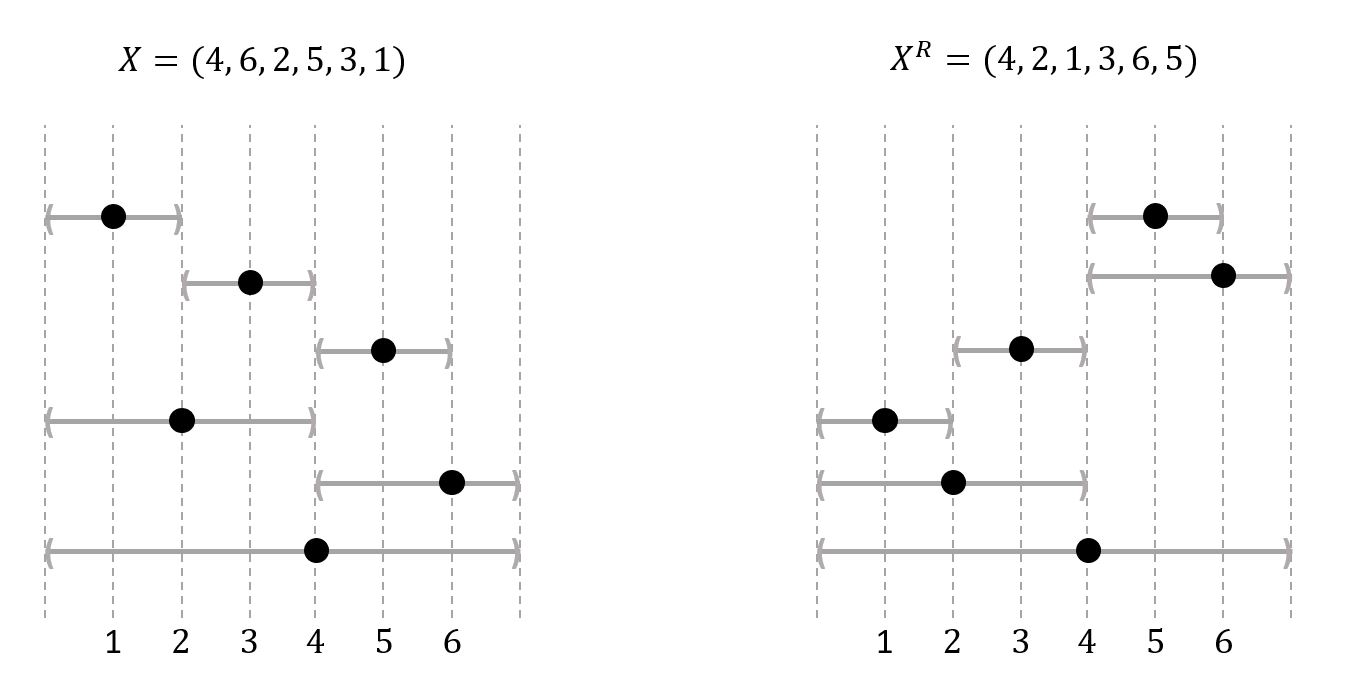}
  \caption{An example of $X$ and $X^R$ obtained from Algorithm~\ref{alg:rearrangement} \label{fig:preorder}}
\end{figure}

\begin{observation} \label{containitself}
For each $i \in [n]$, $x_i \in I_{x_i}$.
\end{observation}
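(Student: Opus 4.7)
The plan is to observe that this is essentially immediate from the definition of $I_{x_i}$, once we account for the ``open'' qualifier on the interval. I would proceed in three short steps.

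First, I would note the trivial fact that $x_i$ belongs to the subtree of $\tset_X$ rooted at $x_i$, since any node is by convention contained in its own subtree. Hence $x_i$ appears in the set of keys whose minimal-containing open integer interval defines $I_{x_i}$.

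Second, I would spell out what ``minimal open integer interval containing a finite set $S \subseteq \mathbb{Z}$'' means: it is the open interval $(\min S - 1, \max S + 1)$, and every element of $S$ lies strictly between these endpoints. Applying this with $S$ equal to the key-set of the subtree rooted at $x_i$, we conclude that every element of that subtree (in particular $x_i$ itself) belongs to the open interval $I_{x_i}$.

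There is no real obstacle here; the only place where one could slip is forgetting that an open interval $(a,b)$ with $a = x_i - 1$ and $b$ larger than the maximum key of the subtree does contain $x_i$, even when $x_i$ is the minimum of its subtree. Since $(\min S - 1, \max S + 1) \supseteq S$ for any nonempty $S \subseteq \mathbb{Z}$, this edge case gives no trouble, and the observation follows.
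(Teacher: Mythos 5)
Your proof is correct and is exactly the straightforward argument the paper treats as self-evident (the paper states this observation without proof). The two steps — $x_i$ lies in its own subtree, and the minimal open integer interval containing a finite set $S$ is $(\min S - 1, \max S + 1) \supseteq S$ (as confirmed by the paper's example $I_{x_1} = (0, n+1)$) — are precisely what is needed, and your handling of the open-endpoint edge case is correct.
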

\begin{observation} \label{notcontainprevious}
For any $i < j$, $x_i \notin I_{x_j}$.
\end{observation}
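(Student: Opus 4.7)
The plan is to use the standard BST-insertion invariant to characterize the subtree rooted at $x_j$ in terms of the chain of ancestors along its search path. Let $A$ denote the set of ancestors of $x_j$ in $\tset_X$; since $x_j$ is inserted after $x_1,\ldots,x_{j-1}$, every element of $A$ is one of these earlier keys. Set $p := \max(\{v \in A : v < x_j\} \cup \{0\})$ and $s := \min(\{v \in A : v > x_j\} \cup \{n+1\})$, i.e.\ the tightest ancestor bounds squeezing $x_j$ from below and above.

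The first step is to show that every key $k$ in the subtree rooted at $x_j$ satisfies $p < k < s$. By the BST invariant, since $k$ is in the subtree of $x_j$, its insertion path from the root must pass through $x_j$, meaning that at every ancestor $v$ of $x_j$ the comparison of $k$ with $v$ agrees with that of $x_j$ with $v$. Hence $k < v$ for each $v \in A$ with $v > x_j$ and $k > v$ for each $v \in A$ with $v < x_j$, giving $p < k < s$. Since $(p,s)$ is thus an open integer interval containing the entire subtree at $x_j$, minimality of $I_{x_j}$ forces $I_{x_j} \subseteq (p,s)$.

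The second step is to verify that every previously inserted key $x_i$ (with $i < j$) lies outside $(p,s)$. If $x_i \in A$, this is immediate from the definitions of $p$ and $s$: either $x_i \leq p$ or $x_i \geq s$. If $x_i \notin A$, consider the lowest ancestor $y$ of $x_j$ that also lies on the search path to $x_i$ in $\tset_X$ (the root always qualifies). At $y$ the two insertion paths diverge, so $x_i$ and $x_j$ sit on opposite sides of $y$; combining this with the fact $y \in A$ and hence $y \leq p$ or $y \geq s$, we conclude $x_i < y \leq p$ or $x_i > y \geq s$. In either case $x_i \notin (p,s) \supseteq I_{x_j}$, which is what we wanted.

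I do not anticipate any real obstacle: the argument is essentially a one-pass analysis of the insertion path, and the only care required is in bookkeeping strict versus non-strict inequalities and handling the boundary defaults $p=0$, $s=n+1$ when $x_j$ has no smaller or no larger ancestor.
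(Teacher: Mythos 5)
Your argument is correct. Since $x_j$ is inserted after $x_1,\ldots,x_{j-1}$, every proper ancestor of $x_j$ is an earlier-inserted key, and the standard BST invariant gives that the subtree rooted at $x_j$ consists exactly of the keys strictly between $p$ (the largest ancestor below $x_j$, default $0$) and $s$ (the smallest ancestor above $x_j$, default $n+1$); hence $I_{x_j}\subseteq(p,s)$ (in fact equality holds). Your case split on $x_i$ — either $x_i$ is an ancestor of $x_j$ and is then $\le p$ or $\ge s$ by definition, or $x_i$ is not an ancestor, in which case the LCA $y$ of $x_i$ and $x_j$ is a proper ancestor of $x_j$ (it cannot be $x_j$ itself because $x_j$ is inserted later and thus cannot be an ancestor of $x_i$), and $x_i$ sits on the far side of $y$ from $x_j$, pushing it out of $(p,s)$ — is exactly right. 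The paper states this observation without proof (it is listed alongside Observations~\ref{containitself} and \ref{laminar} as basic facts about the laminar family $\iset_X$), so there is no alternative paper argument to compare against; your write-up is a clean, correct justification using the standard ancestor-interval characterization of BST subtrees.
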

\begin{observation} \label{laminar}
$\iset_X$ is a laminar family of intervals, i.e., two intervals intersect if an only if one is completely contained in another. Furthermore, for $j>i$, it is either $(I_{x_j} \subset I_{x_i})$ or $(I_{x_j} \cap I_{x_i} = \emptyset)$.
\end{observation}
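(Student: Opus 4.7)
The plan is to exploit the BST structure of $\tset_X$: every interval $I_{x_i}$ is defined from the subtree rooted at $x_i$, and in any BST, two subtrees are either nested or node-disjoint, and in the latter case their key-ranges are separated by a common-ancestor key. I would first state this structural fact and then translate it into statements about the open integer intervals $I_{x_i}$.

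First, fix $i < j$. Since $x_j$ is inserted after $x_i$, it is placed below an already-existing node, so $x_j$ cannot be an ancestor of $x_i$ in $\tset_X$. Hence exactly one of two cases holds: either (a) $x_i$ is an ancestor of $x_j$, or (b) neither is an ancestor of the other. In case (a), the subtree rooted at $x_j$ is contained in the subtree rooted at $x_i$, so $\min(\mathrm{subtree}(x_j)) \ge \min(\mathrm{subtree}(x_i))$ and $\max(\mathrm{subtree}(x_j)) \le \max(\mathrm{subtree}(x_i))$, giving $I_{x_j} \subseteq I_{x_i}$. The inclusion is strict because $x_i \in I_{x_i}$ by \Cref{containitself}, while $x_i \notin I_{x_j}$ by \Cref{notcontainprevious}. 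Thus $I_{x_j} \subsetneq I_{x_i}$.

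In case (b), let $w$ be the lowest common ancestor of $x_i$ and $x_j$ in $\tset_X$; then $x_i$ and $x_j$ lie in different subtrees of $w$. By the BST property, one of the subtrees contains only keys $<w$ and the other only keys $>w$. Say $\mathrm{subtree}(x_i)$ has all keys $<w$ and $\mathrm{subtree}(x_j)$ has all keys $>w$ (the other sub-case is symmetric). Then $\max(\mathrm{subtree}(x_i)) \le w-1$ and $\min(\mathrm{subtree}(x_j)) \ge w+1$, so $I_{x_i} \subseteq (-\infty,\, w)$ and $I_{x_j} \subseteq (w,\, +\infty)$, and these open intervals are disjoint. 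This gives the ``Furthermore'' clause: for $j>i$, either $I_{x_j} \subsetneq I_{x_i}$ or $I_{x_i} \cap I_{x_j} = \emptyset$.

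The laminar statement itself is an immediate consequence: for any two indices, apply the above to the smaller and larger index to conclude that the two intervals are either nested or disjoint. I do not anticipate any serious obstacle; the only point requiring a little care is the open-interval arithmetic in case (b), where one must verify that even if an endpoint of $I_{x_i}$ equals an endpoint of $I_{x_j}$, openness ensures disjointness. The argument is otherwise a direct unpacking of the definitions together with Observations~\ref{containitself} and~\ref{notcontainprevious}.
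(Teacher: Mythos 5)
Your proof is correct. The paper states \Cref{laminar} without proof, treating it as a basic consequence of the construction of $\tset_X$, and your argument is exactly the expected justification: $x_j$ ($j>i$) is inserted later so it cannot be an ancestor of $x_i$, and the two remaining cases — $x_i$ is an ancestor of $x_j$, or they are separated by a strict common ancestor $w$ whose key lies strictly between the two subtree key-ranges — give nesting and disjointness respectively. Your use of \Cref{containitself} and \Cref{notcontainprevious} to get the \emph{strict} inclusion in the ancestor case is a clean touch, and your check that $w$ itself is excluded from both open intervals correctly handles the endpoint case.
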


Let $\tau(a,t)$ denote the last touched time of key $a$ before time $t$. When $t$ is clear from the context, we use $\tau(a)$.  Let $\square_{(a_1,b_1), (a_2,b_2)}$ denote the \textit{closed} rectangular area defined by two points: $(a_1,b_1)$ and $(a_2,b_2)$. We define the Greedy execution on input $X$ in the split model, $G_S(X)$, as in Algorithm~\ref{alg:split}:

\begin{algorithm}[hbt!]
\caption{Greedy execution in the split model $G_S(X)$}\label{alg:split}
  \DontPrintSemicolon
Given $X$ and $\iset_X$\;

\For{$i\gets1$ \KwTo $|X|$}{
$S=\{a \in I_{x_i}\mathrel{}\mid\mathrel{}\square_{(a, \tau(a)), (x_i, i)} \text{ is empty}\}$\;
        
    $\forall a \in S,$ add point $(a, i)$ to $G_S(X)$\; 
}
\end{algorithm}

\subsection{Relation to Preorder Traversals} 

The second part of~\Cref{thm: split main} follows from the following lemmas. 

\begin{lemma}
\label{lem: split-traversal rearrange} 
Given a permutation input sequence $X$, there exists a preorder permutation input sequence $X^R$ such that $|G_S(X^R)|=|G_S(X)|$.
\end{lemma}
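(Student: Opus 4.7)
The plan is to take $X^R$ to be the preorder traversal of the BST $\tset_X$ built by inserting $X$. This is evidently a preorder sequence; moreover, a standard property of preorder traversals on BSTs gives that reinserting $X^R$ into an empty tree reconstructs $\tset_X$, so $\iset_{X^R}$ is the same laminar family of intervals as $\iset_X$, only reindexed by the new time ordering. Hence both executions are over the same intervals; only the order in which they are processed differs. The strategy is then to exhibit a sequence of local rearrangements of $X$ that each preserve $|G_S|$ and eventually reach $X^R$.

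The central step is the following \emph{invariance lemma}: if $X'$ is obtained from $X$ by swapping two consecutive entries $x_i$ and $x_{i+1}$ whose intervals $I_{x_i}, I_{x_{i+1}}$ are \emph{disjoint}, then $|G_S(X)| = |G_S(X')|$. The proof has two parts. First, the two splits themselves touch the same keys after swapping: because the split of $x_i$ only ever touches keys in $I_{x_i}$ (and symmetrically for $x_{i+1}$), and these intervals are disjoint, neither split affects the values of $\tau$ used by the other. The rectangles $\square_{(a,\tau(a)),(x_i,\cdot)}$ checked during the split of $x_i$ have column range inside $I_{x_i}$, so the only change induced by the swap (namely, whether $x_{i+1}$'s touched points sit in row $i$ or row $i+1$) happens outside this column range and cannot affect emptiness; symmetrically for $x_{i+1}$. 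Second, all future splits at times $j>i+1$ also see the same behavior: for a candidate $d$ examined at time $j$, the row range $[\tau(d),j]$ either contains both of rows $i,i+1$ or neither, and since the touched points in rows $i,i+1$ have column supports $I_{x_i}$ and $I_{x_{i+1}}$ respectively (with the two rows merely interchanged between $X$ and $X'$), the set of touched points of any column range is identical. Thus emptiness is preserved throughout, and the two executions produce the same touched set up to the relabeling of rows $i\leftrightarrow i+1$.

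To finish, observe that $X^R$ is reachable from $X$ by a sequence of such disjoint-interval swaps. By \Cref{laminar}, any two consecutive entries of $X$ are either disjoint or nested; in the nested case the ancestor in $\tset_X$ necessarily precedes its descendant both in $X$ (insertion order) and in $X^R$ (preorder), so no swap of nested consecutive entries is ever required. Consequently every inversion of $X$ relative to $X^R$ is witnessed by an adjacent pair of disjoint intervals, and a bubble-sort style argument drives the inversion count to zero while repeatedly invoking the invariance lemma. This yields $|G_S(X)|=|G_S(X^R)|$.

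The main obstacle is the invariance lemma, specifically the careful bookkeeping of $\tau$-values and of the rectangle contents after the swap: one must check that although the swap changes the \emph{row} at which some key is touched, it does not change the emptiness of any rectangle tested thereafter. The disjointness of $I_{x_i}$ and $I_{x_{i+1}}$ is doing all the work, making the two splits mutually non-interacting both at the moment of the swap and for all subsequent splits; the bubble-sort reduction then follows routinely.
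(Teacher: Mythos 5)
Your proposal follows essentially the same route as the paper: there, $X^R$ is produced by iteratively swapping adjacent entries whose intervals are disjoint (Algorithm~\ref{alg:rearrangement}), each swap is shown to preserve the tree/interval family and $|G_S|$ (the paper's swap-invariance lemmas), and the output is shown to avoid $(2,3,1)$ --- which is exactly your swap-invariance lemma plus the bubble-sort reachability argument, run in the opposite direction. One minor imprecision: your claim that the row range $[\tau(d),j]$ ``contains both of rows $i,i+1$ or neither'' fails when $\tau(d)\in\{i,i+1\}$; the paper treats this boundary case separately (its cases 2--3), using laminarity and the disjointness of $I_{x_i},I_{x_{i+1}}$ exactly as you indicate, so the fix is routine.
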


\begin{lemma}
\label{lem: split at most search} 
Let $X$ be a preorder sequence. Then, $|G_S(X)| \leq |G(X)|$.
\end{lemma}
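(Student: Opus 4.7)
The plan is to establish $G_S(X) \subseteq G(X)$ by induction on time, choosing the initial tree for the search model to match the implicit initial configuration of the split model (effectively $T = \tset_X$); the cardinality bound $|G_S(X)| \le |G(X)|$ is then immediate. The inductive invariant I would maintain is: for every row $s \le t$ and every $a \in I_{x_s}$, $(a, s) \in G_S(X) \iff (a, s) \in G(X)$. Note that $G(X)$ may contain additional touches outside the active region $A := \bigcup_s I_{x_s} \times \{s\}$ (for example, ancestors of $x_s$ in $\tset_X$ lying on the search path); the invariant constrains the behaviour of the two models only inside $A$, which is precisely where points of $G_S(X)$ live.

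For the inductive step at time $t+1$, let $(a, t+1) \in G_S(X)$ with $a \in I_{x_{t+1}}$, so the rectangle $R_a := \square_{(a, \tau_S(a, t+1)), (x_{t+1}, t+1)}$ is empty in $G_S$, where $\tau_S$ denotes the split-model last-touched time. The crucial geometric fact is that $I_{x_{t+1}}$ is a contiguous integer interval, because any subtree of $\tset_X$ on keys $[n]$ spans a contiguous range; hence the horizontal extent of $R_a$ is contained in $I_{x_{t+1}}$. For any candidate interfering point $(b, t') \in R_a$ with $t' \le t$, we have $b \in I_{x_{t+1}}$, and the laminar structure of $\iset_X$ for preorder $X$ (Observation \ref{laminar}) gives the dichotomy: either $I_{x_{t'}} \supseteq I_{x_{t+1}}$ so $b \in I_{x_{t'}}$ (and the invariant transfers emptiness from $G_S$ to $G$ at $(b, t')$), or $I_{x_{t'}} \cap I_{x_{t+1}} = \emptyset$ so $b \notin I_{x_{t'}}$ (requiring a separate blocking argument). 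Once $R_a$ is verified empty in $G$, combined with $\tau_G(a, t+1) \ge \tau_S(a, t+1)$, the condition for $G$ to touch $a$ at $t+1$ is met, completing the step.

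The main obstacle is the disjoint case: ruling out $(b, t') \in G$ when $b \in I_{x_{t+1}}$ but $b \notin I_{x_{t'}}$. My plan is to exhibit a blocking point inside the Greedy rectangle $\square_{(b, \tau_G(b, t')), (x_{t'}, t')}$ that certifies $G$ cannot touch $b$ at $t'$. The laminar structure locates a node $v \in \tset_X$ separating $x_{t'}$ from $x_{t+1}$: $v$ is the lowest ancestor of $x_{t'}$ whose subtree does not contain $b$, and lies strictly between $x_{t'}$ and $b$ in the integer order. Tracking when $v$ was last touched --- using its presence in the initial tree together with the inductive invariant applied at the times when $v$ is in the active region --- shows that $v$'s last-touched time falls inside the vertical span of the relevant Greedy rectangle, yielding the blocking point and the desired contradiction. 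Making this blocking argument precise for every candidate $(b, t')$, while keeping the bookkeeping on $v$'s touches consistent with the inductive invariant, is the delicate technical step on which the proof hinges.
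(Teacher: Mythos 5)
There is a genuine gap. Your invariant is exactly the paper's key statement (for every row $s$, $G^{(s)}(X)\cap I_{x_s}=G^{(s)}_S(X)\cap I_{x_s}$), and your easy case (containing intervals, transfer via the invariant) matches the paper's. But the whole difficulty sits in the case you defer: ruling out a point $(b,t')\in G(X)$ with $b\in I_{x_{t+1}}$ and $I_{x_{t'}}\cap I_{x_{t+1}}=\emptyset$. In the paper this is precisely the content of \Cref{east} and \Cref{northwest}, and the mechanism is different in kind from your plan: for the right-disjoint case one takes the earlier-split separator key $p$ with $p.x=\rright(I_{x_{t'}})$, applies the one-sided capture gadget (\Cref{hidden}) to $p$ and the hypothetical touched point, and extracts an input point that together with $p$ and $x_{t'}$ forms $(2,3,1)$, contradicting that $X$ is a preorder sequence; the left-disjoint case is killed purely at the input level (\Cref{northwest}: in a preorder sequence a later split cannot lie in a disjoint interval to the left). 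Your proposed route --- show that the separator $v$'s last touch in $G$ lands inside $\square_{(b,\tau_G(b,t')),(x_{t'},t')}$ --- is not supported by your invariant: once $v$ has been split it lies outside every later active interval (Observation~\ref{notcontainprevious}), so the invariant says nothing about $G$'s touches of $v$ in exactly the time window that matters, and a Greedy touch of $b$ at an intermediate time does not force a touch of $v$ (Greedy touches only the keys whose last-touch times are records between the accessed key and $b$). Tellingly, your sketch never invokes $(2,3,1)$-avoidance --- laminarity of $\iset_X$ holds for every permutation --- so it cannot be complete as stated, since the lemma is false without the preorder hypothesis. You flag this step yourself as ``the delicate technical step on which the proof hinges''; it is the proof.

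Two further points. First, your inductive step only treats the direction $G_S\Rightarrow G$, but you use the full equivalence at earlier rows (to convert a hypothetical $G$-point inside $I_{x_{t'}}$ into a $G_S$-point blocking the split rectangle), so the step must also establish $G\cap I_{x_{t+1}}\Rightarrow G_S$ at time $t+1$; this is handled in the paper automatically because its contradiction argument takes an arbitrary first row of disagreement and an arbitrary discrepancy point $c\in G(X)\setminus G_S(X)$ with $c.x\in I_{x_j}$. Second, fixing the search model's initial tree to be $\tset_X$ is not the right setup: the split execution itself depends on the (adversarial) initial tree, and the comparison $|G_S(X)|\le|G(X)|$ must hold with both executions run from that same initial tree, or the resulting bound does not apply to the split conjecture's ``any initial tree'' setting; the paper's argument is uniform in $T$ and never needs to choose it.
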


\subsubsection*{Proof of Lemma~\ref{lem: split-traversal rearrange}} 

For a permutation input instance $X = (x_1, x_2, …, x_n)$, we denote by $X^R = (x^R_1, x^R_2, …, x^R_n)$ its \textit{rearranged permutation input instance} which we construct in algorithm \ref{alg:rearrangement} (\Cref{fig:preorder}). 

Let $B= (b_1,\ldots, b_k)$. 
Denote by {\sc Swap}($B,i$) the operation that swaps $b_i$ with $b_{i+1}$, that is, it returns $B'$ which is the same as $B$ everywhere except for $b'_i = b_{i+1}$ and $b'_{i+1} = b_i$. Then $X^R$ is obtained by iteratively applying {\sc Swap}. 
We argue below that $X^R$ is a preorder traversal of binary search tree $\tset_X$. 

\begin{algorithm}[hbt!]
\caption{Rearrange $X$ into $X^R$}\label{alg:rearrangement}

  \DontPrintSemicolon
  \SetKwFunction{FMain}{Main}
  \SetKwFunction{FPre}{Preorder}
  
    \SetKwProg{Pn}{Function}{:}{\KwRet}
  \Pn{\FPre{$B=(b_1,...,b_k)$}}{
        \While{$\exists i: (b_{i+1} < b_{i}) \land (I_{b_i} \cap I_{b_{i+1}}=\emptyset)$}{
            $B' \gets \text{\sc Swap}(B,i)$ 
        }
        \KwRet $B'$\;
  }
  \;
  \SetKwProg{Fn}{Function}{:}{}
  \Fn{\FMain{$X$}}{
        $X^R \gets  \FPre{X}$ \;
        \KwRet $X^R$\;
  }

\end{algorithm}



\begin{lemma}
\label{lem: invariant under swap} 
Let $B= (b_1,\ldots,b_k)$. If $I_{b_i} \cap I_{b_{i+1}}  = \emptyset$, and $B'$ is obtained by {\sc Swap}($B,i$). Then $\tset_{B} = \tset_{B'}$ and hence $\iset_B = \iset_{B'}$. 
\end{lemma}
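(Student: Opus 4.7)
The plan is to show that the two consecutive insertions of $b_i$ and $b_{i+1}$ commute under the hypothesis $I_{b_i} \cap I_{b_{i+1}} = \emptyset$, so that swapping them leaves the resulting BST (and therefore its interval family) unchanged. Let $T_{i-1}$ denote the BST obtained by inserting $b_1, \ldots, b_{i-1}$ in order; then $\tset_B$ is obtained from $T_{i-1}$ by subsequently inserting $b_i, b_{i+1}, b_{i+2}, \ldots, b_k$. It suffices to prove that inserting $b_{i+1}$ before $b_i$ into $T_{i-1}$ produces the same intermediate tree $T_{i+1}$, since all later insertions then follow identical search paths and produce the same final tree. Once the trees agree, the interval sets agree by definition.

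First I would invoke the laminar property (Observation on laminar family) to deduce that in $\tset_B$ neither $b_i$ lies in the subtree rooted at $b_{i+1}$ nor conversely, and hence they lie in two different subtrees of their lowest common ancestor $v$ in $\tset_B$. Since $v$ is an ancestor of both $b_i$ and $b_{i+1}$, it is inserted strictly earlier than either of them, so $v \in \{b_1,\ldots,b_{i-1}\}$, and $v$ is already present in $T_{i-1}$. Moreover, every strict ancestor of $b_i$ (resp.\ $b_{i+1}$) in $\tset_B$ is also inserted before $b_i$ (resp.\ $b_{i+1}$), hence is present in $T_{i-1}$; so the insertion path of $b_i$ in $T_{i-1}$ is exactly the ancestor chain of $b_i$ in $\tset_B$, and similarly for $b_{i+1}$.

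Next I would observe that these two ancestor chains share a common prefix from the root down to $v$ and then branch into disjoint subtrees of $v$, one going toward the eventual position of $b_i$ and the other toward that of $b_{i+1}$. In particular $b_i$ does not appear on the insertion path of $b_{i+1}$, and vice versa. Therefore inserting $b_{i+1}$ into $T_{i-1}$ (without first inserting $b_i$) produces exactly the same search path and lands at the same leaf slot it would have occupied in $\tset_B$. Having placed $b_{i+1}$ first, the insertion of $b_i$ then traverses the identical ancestor chain, which still does not meet $b_{i+1}$ (their paths diverge at $v$), and it lands at the same slot. Consequently $T_{i+1}$ after the swap equals $T_{i+1}$ before the swap, and the remaining insertions $b_{i+2}, \ldots, b_k$ produce identical trees, giving $\tset_{B'} = \tset_B$ and $\iset_{B'} = \iset_B$.

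The main obstacle to watch for is the verification that the two insertion paths truly diverge inside the partial tree $T_{i-1}$ rather than only in the final tree $\tset_B$; this is the reason I emphasize that all ancestors of $b_i$ and $b_{i+1}$ in $\tset_B$ are inserted before step $i$, which anchors the argument that the relevant portion of $\tset_B$ already appears in $T_{i-1}$. Everything else is bookkeeping about BST insertion and a direct appeal to Observations on $\iset_X$ being laminar.
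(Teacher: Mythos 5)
Your proof is correct and follows essentially the same approach as the paper's: both identify a common ancestor $v$ (the paper calls it $c$) already present in $T_{i-1}$ below which $b_i$ and $b_{i+1}$ descend into disjoint subtrees, and conclude the two insertions commute. Your version spells out more of the bookkeeping (in particular, that laminarity forbids $b_i$ from being an ancestor of $b_{i+1}$, which is the key step justifying that the whole ancestor chain of $b_{i+1}$ already lives in $T_{i-1}$ and not merely in $T_i$); the paper's version is terser but relies on the same observation.
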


In other words, this lemma proves that the BST is invariant under the swap operation. 

\begin{proof}
Since the intervals $I_{b_i}$ and $I_{b_{i+1}}$ are disjoint, let $c$ be the LCA of $\tset_B$ at the moment before time $i$ (i.e. before inserting $b_i$). 
Notice that $b_{i+1}$ is inserted into the right subtree of $c$, while $b_i$ is inserted into the left subtree of $c$, and the order of their insertions do not matter. Therefore, $\tset_B$ and $\tset_{B'}$ would be the same after time $i+1$. 
\end{proof}

\begin{claim}
$X^R$ is preorder sequence. In particular, it is a preorder traversal of $\tset_X$. 
\end{claim}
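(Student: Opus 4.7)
The plan is to first show that Algorithm~\ref{alg:rearrangement} terminates and forces $X^R$ into a certain sorted form, and then prove by structural induction that any sorted sequence is a preorder traversal of its BST.

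First I would argue termination of the while loop by tracking the potential $\Phi(B) = |\{(i,j) : i < j,\ b_i > b_j,\ I_{b_i} \cap I_{b_j} = \emptyset\}|$. A single swap of an adjacent pair $(b_i, b_{i+1})$ occurs exactly when this pair is inverted and has disjoint intervals; the swap removes that contribution from $\Phi$, and since no other pair changes its relative order, $\Phi$ drops by exactly one. Hence the loop halts after at most $\binom{n}{2}$ swaps. Iterating \Cref{lem: invariant under swap} along the way yields $\tset_{X^R} = \tset_X$ and $\iset_{X^R} = \iset_X$. At termination, $X^R$ enjoys the \emph{sortedness} property: for every adjacent pair $(x^R_i, x^R_{i+1})$, if $I_{x^R_i} \cap I_{x^R_{i+1}} = \emptyset$ then $x^R_i < x^R_{i+1}$.

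Next I would prove by induction on $k = |B|$ that any sequence $B = (b_1, \ldots, b_k)$ satisfying sortedness with respect to its own interval family $\iset_B$ is a preorder traversal of $\tset_B$. By construction $b_1$ is the root of $\tset_B$. Writing $B_L$ and $B_R$ for the subsequences of entries smaller, respectively larger, than $b_1$, the key substep is that all of $B_L$ must precede all of $B_R$ in $B$: otherwise, some adjacent pair $(b_i, b_{i+1})$ with $i \ge 2$ would satisfy $b_i > b_1 > b_{i+1}$, and by \Cref{laminar} the intervals $I_{b_i}$ and $I_{b_{i+1}}$ would sit in opposite halves of $I_{b_1}$, hence be disjoint, contradicting sortedness.

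Once the decomposition $B = (b_1, B_L, B_R)$ is established, $B_L$ and $B_R$ occupy contiguous blocks, so adjacency within either subsequence matches adjacency in $B$; moreover, the interval $I_a$ attached to $a \in B_L$ coincides with the interval computed inside $\tset_{B_L}$, since $I_a$ depends only on the subtree rooted at $a$. Thus both $B_L$ and $B_R$ inherit sortedness, and the induction hypothesis gives that each is a preorder traversal of the corresponding subtree of $b_1$, so $B$ is a preorder traversal of $\tset_B$. Setting $B = X^R$ and invoking $\tset_{X^R} = \tset_X$ yields the claim. The main obstacle is the substep separating $B_L$ from $B_R$, together with the verification that sortedness descends to subtree subsequences; both rely crucially on the laminarity guaranteed by \Cref{laminar} and on the intrinsic nature of $I_a$ with respect to the subtree rooted at $a$.
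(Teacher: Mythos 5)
Your proof is correct, but it takes a genuinely different route from the paper's. The paper proves that $X^R$ avoids the pattern $(2,3,1)$ by a direct contradiction argument on the while-loop termination condition: given an occurrence of $(2,3,1)$ in $X^R$, it locates an adjacent descent between the ``3'' and the ``1'' positions whose intervals must be disjoint by laminarity, so the swap should still apply. It then invokes, essentially as a known fact, that a $(2,3,1)$-avoiding permutation is the preorder traversal of the BST built by inserting its keys in sequence order, i.e.\ of $\tset_{X^R}$, and finishes via $\tset_{X^R} = \tset_X$ from \Cref{lem: invariant under swap}. You instead bypass the pattern-avoidance characterization entirely: you isolate the termination condition as a local ``sortedness'' property, prove (via the inversion potential) that the loop halts, and then establish the preorder property by a direct structural induction on $\tset_B$, using laminarity (\Cref{laminar}) to force the left/right block decomposition and the intrinsic dependence of $I_a$ on the subtree rooted at $a$ to descend sortedness to subsequences. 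Your argument is more self-contained --- it does not presuppose the equivalence between $(2,3,1)$-avoidance and preorder traversals --- and it makes termination explicit, which the paper leaves implicit; the paper's argument is shorter because it can lean on that standard equivalence once avoidance is established. Both are valid proofs of the claim.
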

\begin{proof}
First, we argue that $X^R$ avoids $(2,3,1)$. Assume otherwise that it contains $i < j <k$ such that $x^R_i, x^R_j, x^R_k$ induce $(2,3,1)$, so we must have that $b_j > b_i > b_k$. Let $j': j \leq j' < k$ be the minimum integer such that $b_{j'} > b_{j'+1}$ (notice that such $j'$ must exist). Notice that $I_{j'}$ ends before $b_i$ while $I_{j'+1}$ starts after $b_i$ so they are disjoint. 
This implies that the swap would have been applied at $j'$, a contradiction. 
Since $X^R$ is a preorder permutation, it must be a preorder permutation of $\tset_{X^R}$. 
From~\Cref{lem: invariant under swap}, we have $\tset_{X} = \tset_{X^R}$. 
\end{proof}

\begin{lemma}
$|G_S(B')| = |G_S(B)|$.
\end{lemma}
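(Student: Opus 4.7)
The plan is to prove the stronger statement that $G_S(B')$ is exactly $G_S(B)$ with rows $i$ and $i+1$ swapped; the equality $|G_S(B')| = |G_S(B)|$ follows at once. The guiding intuition is that, since $I_{b_i} \cap I_{b_{i+1}} = \emptyset$ and (by \Cref{lem: invariant under swap}) $\iset_B = \iset_{B'}$, the two splits at rows $i$ and $i+1$ act on disjoint column strips of the plane, so reordering them cannot influence the emptiness of any rectangle tested by Greedy.

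I would proceed by induction on the time index $t$, maintaining the invariant that after processing row $t$: (a) the point set of $G_S(B')$ restricted to rows $\leq t$ equals that of $G_S(B)$ restricted to rows $\leq t$ after swapping rows $i$ and $i+1$, and (b) for every key $a$, the last-touch times $\tau_B(a,t{+}1)$ and $\tau_{B'}(a,t{+}1)$ either agree or form a swap pair in $\{i,i+1\}$. The base case $t<i$ is trivial since $B$ and $B'$ agree on their first $i-1$ entries. For $t=i$ in $B'$ (splitting $b_{i+1}$), I compare the test rectangle $\Box_{(a,\tau_{B'}(a)),(b_{i+1},i)}$ with the rectangle at time $i+1$ in $B$: the lower corners coincide because no key of $I_{b_{i+1}}$ is touched at time $i$ in $B$ (row $i$ of $B$ touches only keys of $I_{b_i}$, disjoint from $I_{b_{i+1}}$), and the two rectangles differ only by the extra row $i$ of $B$, whose touched entries all lie in $I_{b_i}$, hence outside the column strip $[\min(a,b_{i+1}),\max(a,b_{i+1})] \subseteq I_{b_{i+1}}$. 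Emptiness is therefore equivalent, so the same keys are touched. The case $t=i+1$ in $B'$ against $t=i$ in $B$ is fully symmetric.

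For $t>i+1$, when the last touch of $a$ before time $t$ lies outside rows $\{i,i+1\}$ the $\tau$-values agree, the two test rectangles span identical row ranges, and the contributions of rows $i$ and $i+1$ are swapped but their union is unchanged, so emptiness coincides. The main obstacle, and the heart of the argument, is the case where the last touch of $a$ occurs at row $i$ in one matrix and row $i+1$ in the other: here $a \in I_{b_j}$ for a unique $j \in \{i,i+1\}$, and by \Cref{notcontainprevious} we have $b_i, b_{i+1} \notin I_{x_t}$, so the laminar property (\Cref{laminar}) forces $I_{x_t} \subseteq I_{b_j}$. A direct comparison then shows that the ``extra'' row in the taller rectangle contributes exactly the touched points that the ``absent'' row of the shorter rectangle would have contributed (because $B$'s row $i$ equals $B'$'s row $i+1$ on keys of $I_{b_i}$, and symmetrically for $I_{b_{i+1}}$), while the other row in $\{i,i+1\}$ touches only keys in the disjoint interval and therefore contributes nothing inside the column strip. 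Combining these observations closes the induction and yields $|G_S(B')|=|G_S(B)|$.
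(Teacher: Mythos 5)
Your proposal is correct and follows essentially the same route as the paper: both prove the stronger fact that $G_S(B')$ equals $G_S(B)$ with rows $i$ and $i+1$ interchanged, arguing row by row (first differing time) with a case analysis on whether $\tau(a)$ lies above, below, or inside $\{i,i+1\}$, and using that rows $i$ and $i+1$ touch only keys in the disjoint intervals $I_{b_i}$ and $I_{b_{i+1}}$. Your extra justification via \Cref{notcontainprevious} and \Cref{laminar} that the test rectangle's column strip lies inside $I_{b_j}$ is a welcome elaboration of a step the paper leaves implicit, but it is not a different argument.
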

\begin{proof}
Let $M[t]$ denote the row $t$ of matrix $M$ (recall that this paper start indexing from the bottom most row). It is easy to see that $G_S(B')[t] = G_S(B)[t]$ when $t<i$ because both sequences are similar up to time $i-1$. Next, we claim that $G_S(B')[i] = G_S(B)[i+1]$. This is because $I_{b'_{i}} = I_{b_{i+1}}$ and $G_S(B)(p,i)=0$ for all $p \in I_{b_{i}}$. So, for any key $a \in [n]$, $\square_{(a, \tau(a)), (b'_i, i)}$ is empty in $G_S(B')$ if and only if $\square_{(a, \tau(a)), (b_{i+1}, i+1)}$ is empty in $G_S(B)$. Similar argument holds for $G_S(B')[i+1] = G_S(B)[i]$.

Lastly, we claim that $G_S(B')[t] = G_S(B)[t]$ when $t>i+1$. Let $r$ be the first time after $i+1$ such that $G_S(B')[r] \ne G_S(B)[r]$. We claim that, for any key $a \in [n]$, $\square_{(a, \tau(a)), (b'_r, r)}$ is empty in $G_S(B')$ if and only if $\square_{(a, \tau(a)), (b_{r}, r)}$ is empty in $G_S(B)$. There are 4 cases:
\begin{enumerate}
    \item if $\tau(a)>i+1$ in $G_S(B)$, this is trivial by our assumption.
    
    \item if $\tau(a)=i+1$ in $G_S(B)$, $\square_{(a, i+1), (b_{r}, r)}$ is empty in $G_S(B)$ if and only if $\square_{(a, i), (b'_r, r)}$ is empty in $G_S(B')$. This is because $G_S(B)(p,i)=0$ for all $p \in I_{b'_{i+1}}$.
    
    \item when $\tau(a)=i$ in $G_S(B)$, this is symmetric to the above case.
    
    \item when $\tau(a)<i$ in $G_S(B)$. Notice that the only difference between $G_S(B)$ and $G_S(B')$ before time $r$ are in rows $i$ and $i+1$. One can view $\square_{(a, \tau(a)), (b_r, r)}$ as a set of consecutive columns. Since $G_S(B')[i] = G_S(B)[i+1]$ and $G_S(B')[i+1] = G_S(B)[i]$ , this means $\square_{(a, \tau(a)), (b'_r, r)}$ is empty in $G_S(B')$ if and only if $\square_{(a, \tau(a)), (b_{r}, r)}$ is empty in $G_S(B)$.
\end{enumerate}



\end{proof}

\subsubsection*{Proof of~\Cref{lem: split at most search}} 


\begin{figure}[ht]
  \centering
  \includegraphics[scale=0.40]{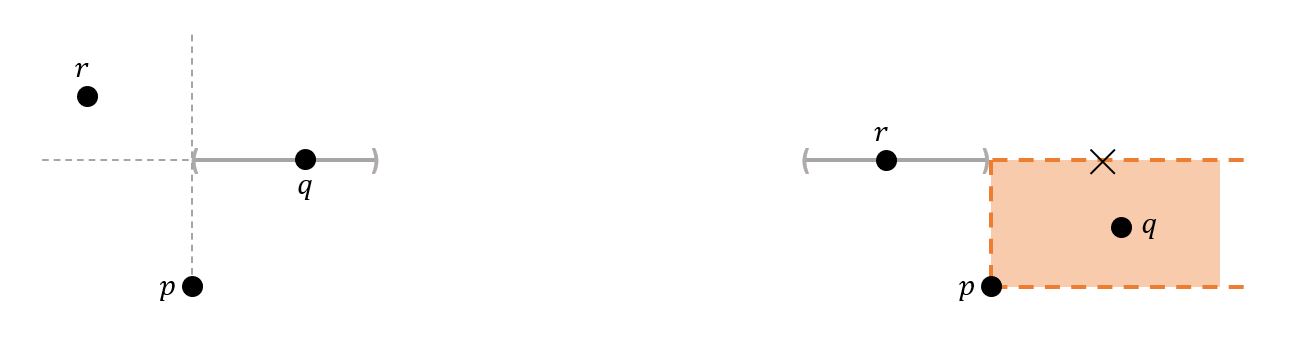}
  \caption{Illustrations of the proofs of \Cref{northwest} (left) and \Cref{east} (right)}
\end{figure}

\begin{lemma}\label{northwest}
Let $X$ be a preorder sequence. For each $q \in X$, there is no point $r \in X$ such that $r.x<\lleft(I_q)$ and $r.y>q.y$. 
\end{lemma}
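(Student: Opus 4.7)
The plan is to reason directly from the semantics of preorder traversal. Since $X$ is a preorder sequence, the time order of $X$ coincides with the preorder order of $\tset_X$, so the hypothesis $r.y > q.y$ is equivalent to saying that $r$ is visited after $q$ in the preorder traversal of $\tset_X$. The goal is then to deduce $r.x > \lleft(I_q)$ (and hence $r.x \not< \lleft(I_q)$).

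I would first unpack $\lleft(I_q)$. Since $I_q$ is by definition the minimal open integer interval containing every key in the subtree of $q$ in $\tset_X$, we have $\lleft(I_q) = v_{\min} - 1$, where $v_{\min}$ denotes the smallest key in the subtree rooted at $q$. Observe that $v_{\min} \le q.x$ since $q$ lies in its own subtree.

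Next, I would split into two cases based on whether $r$ is a descendant of $q$ in $\tset_X$. If $r$ lies in the subtree of $q$, then $r.x \in I_q$, so in particular $r.x \ge v_{\min} > \lleft(I_q)$, as desired. Otherwise, $r$ lies outside the subtree of $q$. By the recursive structure of preorder traversal, once $q$ and all of its descendants have been visited, the next nodes visited must lie in the right subtree of the nearest ancestor of $q$ whose right subtree has not yet been explored. Hence there exists a proper ancestor $a$ of $q$ in $\tset_X$ such that $q$ lies in the left subtree of $a$ and $r$ lies in the right subtree of $a$. The BST ordering then gives $r.x > a$, and since the entire subtree of $q$ is contained in the left subtree of $a$, we also have $v_{\min} < a$. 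Chaining these, $r.x > a > v_{\min} > \lleft(I_q)$, which completes the argument.

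I do not expect any real obstacle here. The only step requiring care is the existence of the ancestor $a$ in the second case, and this is immediate from how preorder traversal returns up the ancestor chain after finishing a subtree. The lemma itself is essentially a direct consequence of the BST property combined with the preorder traversal structure of $\tset_X$.
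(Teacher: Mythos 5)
Your proof is correct, but it follows a genuinely different route from the paper's. The paper's argument is a one-liner in the pattern-avoidance language: since $\lleft(I_q)$ (when $>1$) is the key of a proper ancestor of $q$ in $\tset_X$, it is accessed at some time $p.y < q.y$, and the three points $p=(\lleft(I_q),p.y)$, $q$, $r$ satisfy $r.x < p.x < q.x$ with $p.y < q.y < r.y$, i.e.\ they form the pattern $(2,3,1)$, contradicting the fact that a preorder sequence avoids $(2,3,1)$. You instead reason directly about the tree and the preorder traversal: identifying $\tset_X$ with the tree whose preorder is $X$, you split on whether $r$ descends from $q$ and, in the non-descendant case, pass through the LCA $a$ of $q$ and $r$, noting $q$ must sit in $a$'s left subtree and $r$ in $a$'s right subtree (because preorder visits left before right and $q.y < r.y$), which gives $r.x > a > v_{\min} > \lleft(I_q)$. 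Both are sound; the paper's version is shorter and stays within the paper's $(2,3,1)$-avoidance vocabulary, whereas yours is self-contained and does not need to invoke pattern avoidance at all. One small stylistic point: the justification for the ancestor $a$ in your second case would be crisper if you simply took $a$ to be the LCA of $q$ and $r$ and argued $a \neq q$ (else $r$ is a descendant) and $a \neq r$ (else $r$ is an ancestor of $q$, forcing $r.y < q.y$), rather than appealing informally to ``the nearest ancestor whose right subtree has not yet been explored.''
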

\begin{proof}
If $\lleft(I_q) \leq 1$, the lemma trivially holds. Consider the case where $\lleft(I_q) > 1$. Assuming such $r$ exists. By interval construction, there is an input point $p$ such that $p.x = \lleft(I_q)$ and $p.y < q.y$. This means points $p,q$ and $r$ form $(2,3,1)$ in $X$. Contradiction.
\end{proof}

\begin{lemma}\label{east}
Let $X$ be a preorder sequence. For each $r \in X$, there is no point $c \in G(X)$ such that $c.x>\rright(I_r)$ and $c.y = r.y$.
\end{lemma}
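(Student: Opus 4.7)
The plan is to prove Lemma~\ref{east} by strong induction on the access time $t = r.y$. If $\rright(I_r) = n+1$ the claim is vacuous, so assume $s := \rright(I_r) \le n$, which makes $s$ itself a key. Because for a preorder sequence $X$ the tree $\tset_X$ coincides with the BST whose preorder is $X$, I would first identify $s$ as the smallest ancestor of $r$ in $\tset_X$ for which $r$ lies in the left subtree of $s$; in particular the input point carrying key $s$ has access time $s.y < t$. The base case $t=1$ is immediate, since then $r$ is the root of $\tset_X$ and $\rright(I_r) = n+1$.

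For the inductive step, assume for contradiction that $c \in G(X)$ satisfies $c.x > s$ and $c.y = t$. Since the input at time $t$ is $r$ with $r.x \neq c.x$, the point $c$ is a Greedy (non-input) touch, so there exists $\tau < t$ with $(c.x,\tau) \in G(X)$ and the closed rectangle $\Box_{(c.x,\tau),(r.x,t)}$ free of every touched point other than its two defining corners. The input point $(s,s.y)$ satisfies $r.x < s < c.x$ and $s.y < t$, and therefore lies inside this rectangle unless $s.y < \tau$; emptiness consequently forces the strict inequality $\tau > s.y$.

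Next I would inspect the input $r'$ at time $\tau \in (s.y, t)$. A brief case analysis using the preorder structure rules out $r'.x = c.x$: any key strictly greater than $s$ is either an ancestor of $r$ above $s$ in $\tset_X$ (hence accessed before $s.y$) or not an ancestor of $r$ at all (hence accessed only after the entire subtree of $r$ has been traversed, i.e.\ strictly after $t$); neither alternative places $c.x$ at time $\tau$. Thus $(c.x,\tau)$ is a Greedy touch generated by the input $r'$, and the inductive hypothesis applied to $r'$ at time $\tau < t$ yields $c.x \le \rright(I_{r'})$.

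Finally I would bound $\rright(I_{r'})$ from above by $s$: since $r'$ is accessed strictly between $s$ and $r$ in preorder, the node $r'$ lies in the left subtree of $s$ in $\tset_X$ and so does the whole subtree rooted at $r'$, so every key in that subtree is strictly smaller than $s$ and $\rright(I_{r'}) \le s$. Combined with the previous step this gives $c.x \le s$, contradicting $c.x > s$. I expect the most delicate point to be securing the strict inequality $\tau > s.y$ (equivalently, ruling out the case $r' = s$, in which $\rright(I_{r'})$ could exceed $s$ and the induction would break down); this hinges on the convention that Greedy's touch at $(c.x,t)$ requires $\Box_{(c.x,\tau),(r.x,t)}$ to contain no touched point other than its two defining corners, so the touched input $(s,s.y)$, which is strictly between $r.x$ and $c.x$ in the $x$-coordinate, must lie strictly below the rectangle.
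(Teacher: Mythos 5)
Your proof is correct, but it follows a genuinely different route from the paper. The paper argues directly, with no induction: it takes the input point $p$ with $p.x=\rright(I_r)$ (accessed before $r$, by the interval construction), disposes of the case $p.y=r.y-1$ by noting $p$ would sit inside Greedy's rectangle $\square_{(c.x,\tau(c.x)),(r.x,r.y)}$, and otherwise invokes the One-sided Capture Gadget (\Cref{hidden}) on $p$ and $c$ to extract an intermediate input $q$ with $q.x>p.x$ and $p.y<q.y<r.y$, so that $p,q,r$ form a $(2,3,1)$ in $X$, contradicting preorder-ness. You instead run a strong induction on the access time: the emptiness of Greedy's rectangle forces the previous touch time $\tau$ of $c.x$ to lie strictly above the access time of $s=\rright(I_r)$ (this also quietly handles the initial tree, since $\tau>s.y\ge 1$), the preorder structure of $\tset_X$ places the input $r'$ at time $\tau$ inside the left subtree of $s$, and the inductive hypothesis at time $\tau$ gives $c.x\le\rright(I_{r'})\le s$, a contradiction. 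The trade-off: the paper's argument is shorter and purely pattern-theoretic, reusing the capture-gadget machinery already set up in the preliminaries and needing nothing about $\tset_X$ beyond the interval construction; yours avoids \Cref{hidden} and the $(2,3,1)$-formation step altogether, at the cost of leaning on the structural facts that the access order of a preorder sequence is the preorder of its insertion tree $\tset_X$ and that $\rright(I_r)$ is the nearest ancestor having $r$ in its left subtree (facts the paper also asserts in this section, so this is a legitimate dependency, though you should state them as explicit observations if you write this up).
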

\begin{proof}
If $\rright(I_r) \geq n$, the lemma trivially holds. Consider the case where $\rright(I_r) < n$. By interval construction, there is an input point $p$ such that $p.x = \rright(I_r)$ and $p.y < r.y$. If $p.y=r.y-1$, there is no such touch point $c$ because $\square_{(c.x, \tau(c.x)), (r.x, r.y)}$ must contain $p$. If $p.y<r.y-1$, assuming there is such touch point $c$. Using \Cref{hidden} with $p$ and $c$, we have that the rectangle $[p.x+1, \infty)\times[p.y+1,r.y-1]$ must contain some input $q$ ($q.y \ne r.y$ because $X$ is permutation). This means points $p,q$ and $r$ form $(2,3,1)$ in $X$. Contradiction.
\end{proof}

For $i \in [n]$, let $G^{(i)}(X)$ and $G^{(i)}_S(X)$ denote a set of points in row $i$ of $G(X)$ and $G_S(X)$, respectively.

\begin{lemma}
Let $X$ be a preorder sequence. For $i \in [n], G^{(i)}(X) \cap I_{x_i} = G^{(i)}_S(X) \cap I_{x_i}$.
\end{lemma}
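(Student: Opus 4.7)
I plan to prove this by strong induction on $i$, the key point being that the search-model history $G(X)$ and the split-model history $G_S(X)$ must agree on every column $b \in I_{x_i}$ throughout all rows $<i$. Once this is established, the row-$i$ equality drops out of the rectangle emptiness tests that define the two models. The initial tree is common to both models, so the non-positive rows agree automatically; in the base case $i=1$ we have $I_{x_1}=(0,n+1)$, so the row-$1$ tests run over the same column set with identical prior histories, and the equality is immediate.

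For the inductive step, fix $i$ and assume $G^{(j)}(X)\cap I_{x_j}=G^{(j)}_S(X)\cap I_{x_j}$ for every $j<i$. I first prove the sub-claim that for every $j<i$ and every $b\in I_{x_i}$, $(b,j)\in G(X)$ if and only if $(b,j)\in G_S(X)$. If $b\in I_{x_j}$, the inductive hypothesis at row $j$ gives the equivalence directly. Otherwise $b\in I_{x_i}\setminus I_{x_j}$; combining $x_j\notin I_{x_i}$ (Observation~\ref{notcontainprevious}) with the laminar Observation~\ref{laminar} rules out both containments and forces $I_{x_i}\cap I_{x_j}=\emptyset$. Since $X$ is the preorder traversal of $\tset_X$ and $j<i$, $x_j$ must lie in the left subtree of the lowest common ancestor of $x_j$ and $x_i$ in $\tset_X$, so every key of $I_{x_j}$ is strictly smaller than every key of $I_{x_i}$; in particular $\rright(I_{x_j})\leq \lleft(I_{x_i})<b$. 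Then \Cref{east} applied with $r=x_j$ forbids any touched point of $G(X)$ at column $b$, row $j$, giving $(b,j)\notin G(X)$, while Algorithm~\ref{alg:split} never considers $(b,j)$ for inclusion because $b\notin I_{x_j}$, so $(b,j)\notin G_S(X)$ either.

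Given the sub-claim, fix any $a\in I_{x_i}$. Since column $a$'s history matches in both models below row $i$, the last-touched time $\tau(a,i)$ is the same in both. The $x$-range of the rectangle $\square_{(a,\tau(a,i)),(x_i,i)}$ is $[\min(a,x_i),\max(a,x_i)]\subseteq I_{x_i}$ (as $I_{x_i}$ is an integer interval containing both endpoints), and its non-corner interior lies at rows strictly below $i$. Thus the sub-claim forces this rectangle to contain the same touched points in $G(X)$ as in $G_S(X)$, hence it is empty in one exactly when empty in the other, and so $(a,i)\in G(X)\Leftrightarrow(a,i)\in G_S(X)$, completing the induction.

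The main obstacle is the case $b\notin I_{x_j}$ of the sub-claim: in the absence of the observation that preorder traversal forces disjoint subtree intervals to be ordered with the earlier-visited one on the left, one would need a ``west'' analogue of \Cref{east}, which does not in general hold. Leveraging this asymmetry of preorder is precisely what makes \Cref{east} alone suffice for the argument.
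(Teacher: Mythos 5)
Your proof is correct and takes essentially the same approach as the paper's, which argues by locating the first row $j$ at which the two matrices disagree rather than by explicit induction, but relies on the same ingredients: the match-up-to-row-$j$ hypothesis, laminarity of the intervals (\Cref{laminar}), and \Cref{east}. Your direct appeal to the preorder structure of $\tset_X$ to show that disjoint subtree intervals are time-ordered with earlier ones strictly to the left is precisely the content of \Cref{northwest} stated geometrically (which the paper invokes for its second case), so the two arguments differ only cosmetically.
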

\begin{proof}
Let $j$ be the first time that $G^{(j)}(X) \cap I_{x_j} \ne G^{(j)}_S(X) \cap I_{x_j}$. This means there exists a point $c \in G(X)\setminus G_S(X)$ such that $c.x \in I_{x_j}$ and $c.y < j$. Let $x_t$ be an input point at time $c.y$. From \Cref{laminar} and the fact that $c \notin I_{x_t}$, we have that $I_{x_t} \cap I_{x_j} = \emptyset$. We divide into two cases: 1) $\rright(I_{x_t}) \leq \lleft(I_{x_j})$ and 2) $\rright(I_{x_j}) \leq \lleft(I_{x_t})$. In the first case, $x_t$ and $c$ contradict \Cref{east}. In the second case, $x_t$ and $x_j$ contradict \Cref{northwest}.
\end{proof}

\section{$(k-1)$-Decreasing Sequences} \label{section: k-dec}

This section is devoted to proving \Cref{thm:main k-dec}. We focus on $(k-1)$-decreasing sequences; the argument for $(k-1)$-increasing sequences is symmetric. The $O(nk^2)$ bound follows from \cite{chalermsook2015pattern} and Theorem 6.1 of \cite{cibulka2013extremal}.  We focus on proving the new $O(kn\alpha (n))$ bound, which is smaller than $nk^2$ whenever $k > \alpha(n)$.




\subsection{An $O(kn\alpha(n))$ bound} 
\label{sec: step1} 
Let $X$ be $(k-1)$-decreasing sequence, i.e.,  a sequence that avoids $I_k = (1,2, \ldots, k)$. For any two points $p, q$, we say that $p$ \textit{dominates} $q$ (denoted by $p \succ q$) if $p.x > q.x$ and $p.y > q.y$. 
Let $q \in G(X) \setminus X$ be a touched, non-input point. 
We define $\textsf{chain}(q)$ to be zero if $q$ is not dominated by any input points in $X$. 
Otherwise, $\textsf{chain}(q)$ is the maximum length $j$ such that there exists input points $p_1,\ldots, p_j \in X$ such that $p_1 \succ \ldots \succ p_j \succ q$; we call $\{p_1,\ldots, p_i\}$ a witness of the fact that ${\sf chain}(q) \geq i$. 
Since $X$ avoids $(1,\ldots,k)$, we have $\textsf{chain}(q) \leq k-1$ for all $q \in Y \setminus X$. 
For $i \in \{0,1,\ldots, k-1\}$, we define $G_i(X) = \{ q \in G(X) \setminus X  \colon \textsf{chain}(q) = i\}$. 
By definitions, we can partition $G(X)$ into $k$ parts. That is, $G(X) \setminus X  = \bigcup_{0 \le i \le k-1}G_i(X)$, and thus $|G(X)| = |X| + \sum_{0 \leq i \leq k-1}|G_i(X)|$. 
So, it suffices to bound each matrix $G_i(X)$ separately.


\begin{proposition} \label{pro:gi avoids 123}
For all $i$, $G_i(X)$ avoids $\kbordermatrix{
     &  &  & \\
      &  & & \times  \\ 
      &  & \bullet&  \\ 
    &  \times & & }$.
\end{proposition}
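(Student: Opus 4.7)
The plan is to argue by contradiction via chain extension. Suppose $G_i(X)$ contains the forbidden pattern, witnessed by two non-input touched points $q_1 = (a,t_1)$ and $q_2 = (c,t_3)$ together with an input point $p = (b,t_2)$, where $a < b < c$ and $t_1 < t_2 < t_3$. Both $q_1$ and $q_2$ belong to $G_i(X)$, so $\textsf{chain}(q_1) = \textsf{chain}(q_2) = i$.

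The key observation is that any chain of input points witnessing $\textsf{chain}(q_2) \ge i$ can be extended by $p$ to yield a longer chain dominating $q_1$. Concretely, I would pick input points $p_1 \succ p_2 \succ \cdots \succ p_i \succ q_2$. Since $q_2 \succ p$ (because $c > b$ and $t_3 > t_2$), we get $p_i \succ p$ by transitivity. Furthermore, $p \succ q_1$ because $b > a$ and $t_2 > t_1$. Chaining these together yields
\[
p_1 \succ p_2 \succ \cdots \succ p_i \succ p \succ q_1,
\]
which is a chain of $i+1$ input points dominating $q_1$.

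To make this genuinely a chain of length $i+1$, I need to check that $p$ does not coincide with any of $p_1,\ldots,p_i$; this is the one small bookkeeping step. The point $p$ satisfies $p.x = b < c \le p_i.x$, whereas every $p_j$ for $j \le i$ has $p_j.x > c$ by the strict dominance $p_j \succ p_i \succ q_2$, so $p \ne p_j$ for all $j$. Hence the extended chain is legitimate and $\textsf{chain}(q_1) \ge i+1$, contradicting $q_1 \in G_i(X)$.

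The only potentially subtle step is verifying that the extension stays within the class of input points (so that it is a valid chain under the definition of $\textsf{chain}(\cdot)$); this is immediate because $p$ is precisely the $\bullet$-point of the pattern, which the notation reserves for input points in $X$. Everything else is just checking the strict inequalities in the dominance relation. I therefore expect the full proof to be short.
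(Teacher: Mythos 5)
Your proof is correct and follows essentially the same route as the paper's: you take a maximum chain witnessing $\textsf{chain}((c,t_3)) = i$, append the input point $(b,t_2)$, and conclude $\textsf{chain}((a,t_1)) \geq i+1$, contradicting membership in $G_i(X)$. The distinctness bookkeeping you add is harmless but not needed, since strict dominance along the extended chain already forces the points to be distinct.
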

\begin{proof}
Suppose that $G_i(X)$ contains $\kbordermatrix{
     &  a & b & c\\
      t_3 &  & & \times  \\ 
      t_2&  & \bullet&  \\ 
    t_1&  \times & & }$  for some keys $a < b <c$ and time indices $t_1 < t_2 < t_3$. Denote $p = (c,t_3), q = (a,t_1)$ and $r = (b,t_2) \in X$. Since $p\in G_i(X)$, $\textsf{chain}(p)= i$, and so there are input points $p_1 \succ \ldots \succ p_i$ that dominate $p$, which means they dominate $r$.  The set $\{p_1,\ldots, p_i, r\}$ is a witness that ${\sf chain}(q) \geq i+1$, a contradiction.  
\end{proof}


\begin{corollary} \label{cor:nalpha}
For all $i$, $G_i(X)$ avoids $\kbordermatrix{
     &  &  & & \\
      &  & & &\times  \\ 
      &  & \times&  &\\ 
    &  \times & & \times&}$. Therefore, $|G_i(X)| = O(n \alpha(n))$. 
\end{corollary}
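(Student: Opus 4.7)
The plan is to derive this corollary by combining \Cref{pro:gi avoids 123} with the Generic Capture Gadget (\Cref{claim: inputRevealing}) and then appealing to the extremal bound in \Cref{thm:exbound}. Concretely, I would show that the displayed $4$-point pattern, were it to appear in $G_i(X)$, would force an instance of the $3$-point pattern already forbidden in $G_i(X)$ by \Cref{pro:gi avoids 123}. Once avoidance is established, $|G_i(X)| = O(n\alpha(n))$ is immediate from the third item of \Cref{thm:exbound}, and summing over $i \in \{0, \ldots, k-1\}$ together with $|X| = n$ gives the overall $O(kn\alpha(n))$ bound required for \Cref{thm:main k-dec}.

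The key step is the following reduction. Suppose, for contradiction, that $G_i(X)$ contains the $4$-point pattern witnessed by touched non-input points $(a,t_1)$, $(c,t_1)$, $(b,t_2)$, $(d,t_3)$ with $a<b<c<d$ and $t_1<t_2<t_3$. Restricting attention to the three lower points $(a,t_1)$, $(b,t_2)$, $(c,t_1)$, one observes that they form exactly the hypothesis of \Cref{claim: inputRevealing}, so the input matrix $X$ must contain a point $p$ inside the rectangle $[a+1,c-1]\times[t_1+1,t_2]$. This $p$ is then meant to serve as the bullet in a copy of the pattern forbidden by \Cref{pro:gi avoids 123}, together with the corners $(a,t_1)$ and $(d,t_3)$: the inequalities $a<p.x<c<d$ and $t_1<p.y\leq t_2<t_3$ realize precisely the column and row ordering of that $3$-point configuration, contradicting the proposition.

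The main (modest) obstacle is just to keep the geometry aligned so that the extracted input point $p$ slots into the correct row–column positions of the forbidden $3$-pattern of \Cref{pro:gi avoids 123}; the rectangle provided by the capture gadget makes this automatic, and no further appeal to properties of Greedy is needed. After the avoidance is in hand, the extremal bound $O(n\alpha(n))$ is supplied directly by \Cref{thm:exbound}, and the final summation $|G(X)| \leq |X| + \sum_{i=0}^{k-1}|G_i(X)|$ yields $O(kn\alpha(n))$ as claimed, completing the proof of the corollary and of the $O(nk\alpha(n))$ branch of \Cref{thm:main k-dec}.
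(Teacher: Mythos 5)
Your proposal is correct and follows essentially the same route as the paper: apply the Generic Capture Gadget (\Cref{claim: inputRevealing}) to the three lower $\times$-points to extract an input point in $[a+1,c-1]\times[t_1+1,t_2]$, then combine it with $(a,t_1)$ and $(d,t_3)$ to produce the increasing $\times\,\bullet\,\times$ configuration forbidden by \Cref{pro:gi avoids 123}, and finish with the third bound of \Cref{thm:exbound}. The concluding summation over $i$ matches the surrounding text of the paper as well.
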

\begin{proof}
Suppose that $G_i(X)$ contains $\kbordermatrix{
     & a &b  &c &d \\
      t_3&  & & &\times  \\ 
     t_2 &  & \times&  &\\ 
    t_1&  \times & & \times&}$. By Claim~\ref{claim: inputRevealing}, there is an input $(b',t_2') \in X$ in the rectangle  $[a+1,c-1]\times [t_1+1,t_2]$.   Therefore, $G_i(X)$  contains $\kbordermatrix{
     &  a & b' & d\\
      t_3 &  & & \times  \\ 
      t_2'&  & \bullet&  \\ 
    t_1&  \times & & }$, contradicting to Proposition \ref{pro:gi avoids 123}. 
\end{proof}

     

\section{Extremal Combinatorics}
\label{sec: step2} 

This section is devoted to proving \Cref{thm:permutation reduction}. Let $P$ be the length-$k$ permutation in the statement. For the rest of this section, we assume that $n$ is a power of $4k^2$. This will imply the theorem (removing the assumption incurs a multiplicative factor of $O(k^2)$).

\subsection{Marcus-Tardos Recurrence (Rephrased)}

We explain Marcus-Tardos  approach \cite{marcus2004excluded} in our language that would allow us to prove our bounds.  
We first introduce another extremal function $f$ that roughly captures the maximum number of rows in a matrix,  avoiding a specified pattern, that contains sufficiently many number of $1$s per row. 

\begin{definition} \label{def:long vector pattern}
For any permutation  $\pi$ and integer $c$, we define $f(c,\pi)$  to be the maximum number of rows $r$ such that there exists a matrix $M$ with $r$ rows and $c$ columns such that (i) each row has at least $2 |\pi|$ many $1$'s and (ii) $M$ avoids $\pi$. 
\end{definition}

Notice that this definition enforces $c \geq 2 |\pi|$ with a trivial base case: 

\begin{observation}
For any permutation $\pi$, we have $f(2 |\pi|,\pi) = |\pi|-1$. 
\end{observation}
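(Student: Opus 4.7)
The observation is essentially a consequence of a counting/pigeonhole argument forced by the constraint $c = 2|\pi|$. My plan is to use the fact that when the number of columns equals the lower bound on the number of $1$'s per row, the matrix is forced to be completely full of $1$'s, and then reason about when such an all-ones block can avoid the pattern $\pi$.

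More concretely, I would proceed in three short steps. \textbf{Step 1 (structure of $M$):} Let $M$ be any matrix with $r$ rows and $c = 2|\pi|$ columns such that each row contains at least $2|\pi|$ ones. Since a row has only $2|\pi|$ entries in total, every row must be entirely $1$'s. Hence $M$ is the all-ones matrix of dimension $r \times 2|\pi|$. \textbf{Step 2 (upper bound $r \leq |\pi|-1$):} Suppose toward contradiction that $r \geq |\pi|$. Since $\pi$ is a length-$k$ permutation with $k = |\pi|$, and $M$ now has at least $|\pi|$ rows and exactly $2|\pi| \geq |\pi|$ columns, I can select any $|\pi|$ of the rows and any $|\pi|$ of the columns to get an all-ones $|\pi| \times |\pi|$ submatrix. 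Since the definition of pattern containment allows us to turn ones into zeros, this all-ones submatrix contains every length-$|\pi|$ pattern, in particular $\pi$, contradicting (ii). Therefore $r \leq |\pi|-1$. \textbf{Step 3 (matching lower bound):} For $r = |\pi|-1$, take $M$ to be the all-ones matrix with $|\pi|-1$ rows and $2|\pi|$ columns. Each row has $2|\pi|$ ones, and $M$ trivially avoids $\pi$ since containing $\pi$ would require selecting $|\pi|$ distinct rows but only $|\pi|-1$ are available. This witnesses $f(2|\pi|,\pi) \geq |\pi|-1$.

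Combining Steps 2 and 3 yields $f(2|\pi|, \pi) = |\pi|-1$. I do not expect any real obstacle; the only ``subtle'' point is noting that pattern containment permits deleting both rows/columns and ones, so an all-ones rectangle of sufficient size automatically contains every permutation of that size. Once that is acknowledged, the claim is immediate.
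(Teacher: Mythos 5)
Your proof is correct, and it is exactly the (trivial) argument the paper has in mind: the paper states this observation without proof as a ``trivial base case,'' and your three steps (rows are forced to be all-ones, an all-ones $|\pi|\times|\pi|$ submatrix contains every length-$|\pi|$ permutation, and $|\pi|-1$ all-ones rows cannot contain $\pi$ since containment needs $|\pi|$ distinct rows) are the standard way to fill it in.
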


Let $\pi$ be any permutation. Let  $\pi'$ be the permutation matrix after rotating the matrix induced by $\pi$ by 90 degree counterclockwise. Denote $|\pi| = k$.
Marcus and Tardos relate the upper bound on ${\sf ex}(n,\pi)$ to the extremal property of $f$ and prove that $f(k^2,\pi) = O(k {k^2 \choose k})$ for any permutation $\pi$. We rederive the bound in terms of $f$. 

\begin{theorem}
\label{thm:ex and f} 
$\textsf{ex}(n,\pi) = O\left(nk^3(f(4k^2,\pi)+f(4k^2,\pi'))\right)$
\end{theorem}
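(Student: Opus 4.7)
The plan is to adapt the Marcus--Tardos block decomposition, substituting the newly introduced function $f$ for the binomial factor used in the original proof. Set $s = 4k^2$; since we may assume $n$ is a power of $s$, I would partition $M$ into an $(n/s)\times(n/s)$ grid of $s\times s$ blocks, indexed by pairs $(R_i, C_j)$ of row strips and column strips. Each block is then classified as \emph{wide} (at least $2k$ distinct columns of the block contain a $1$), \emph{tall} (at least $2k$ distinct rows of the block contain a $1$), or \emph{small} (neither, so at most $(2k-1)^2$ ones in the block).

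The first step is to bound the number of wide blocks per column strip by $f$. For each strip $C_j$ I would define an auxiliary $(n/s) \times s$ matrix $N_j$ with $N_j(i, c) = 1$ iff column $c$ of $C_j$ contains a $1$ inside block $(R_i, C_j)$. A direct argument shows that $N_j$ avoids $\pi$: any copy of $\pi$ in $N_j$ lifts to a copy of $\pi$ in $M$ by picking, for each pattern entry, any $1$ of $M$ witnessing the column's non-emptiness in the corresponding block, with rows inheriting the correct relative order because row strips are consecutive intervals, and columns inheriting their order directly. Wide blocks of $C_j$ are exactly in bijection with rows of $N_j$ having at least $2k$ ones, so by definition of $f$ there are at most $f(s, \pi) = f$ of them. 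A $90^\circ$ counterclockwise rotation of the whole picture turns $M$ into a matrix avoiding $\pi'$ and tall blocks into wide blocks in the rotated coordinates, yielding a symmetric bound of $f(s, \pi') = f'$ tall blocks per row strip.

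These counts combine into a recurrence. Each wide block has at most $s^2$ ones, so the total contribution from wide blocks is at most $(n/s) \cdot f \cdot s^2 = snf$; by symmetry the tall contribution is $\leq snf'$. For the small blocks, each has at most $(2k-1)^2$ ones, and since the coarse matrix recording non-empty blocks is itself an $(n/s) \times (n/s)$ matrix avoiding $\pi$, the number of non-empty (hence, in particular, small) blocks is at most $\textsf{ex}(n/s, \pi)$. Altogether,
\[
\textsf{ex}(n, \pi) \;\leq\; sn(f + f') + (2k-1)^2 \cdot \textsf{ex}(n/s, \pi).
\]
Letting $g(n) = \textsf{ex}(n, \pi)/n$, this becomes $g(n) \leq s(f+f') + \tfrac{(2k-1)^2}{s}\,g(n/s)$. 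The key numeric fact is $(2k-1)^2/s = 1 - \Theta(1/k)$: the geometric factor is bounded below $1$ by $\Omega(1/k)$, so iterating gives $g(n) \leq s(f+f')\cdot O(k) = O(k^3(f+f'))$, and multiplying by $n$ yields the stated bound.

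The main obstacle is verifying carefully that $N_j$ (and its rotated analogue) avoids $\pi$ (resp.~$\pi'$), which hinges on the order-preservation argument sketched above; once that is in place, the rest is a clean recurrence. The quantitative slack $(2k-1)^2 < s$ with gap $\Omega(1/k)$ is precisely what eliminates any $\log n$ factor and contributes the final $k^3$ in the bound.
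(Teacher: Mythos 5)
Your proposal is correct and follows essentially the same route as the paper: the same $4k^2\times 4k^2$ block decomposition, the same wide/tall/small trichotomy, the same flattening argument bounding wide blocks per column strip by $f(4k^2,\pi)$ (the paper's matrix $Q_j$ is your $N_j$), and the same recurrence $\textsf{ex}(n,\pi)\le (2k-1)^2\,\textsf{ex}(n/4k^2,\pi)+4nk^2(f+f')$ solved by noting $(2k-1)^2/(4k^2)=1-\Theta(1/k)$. No substantive differences.
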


The rest of this section is devoted to proving \Cref{thm:ex and f}. 

Their result can be rephrased in the following way: 
\begin{lemma} \label{lem:recurrence permutation}
$\textsf{ex}(n, \pi) \leq (2k-1)^2 \textsf{ex}(n/4k^2, \pi) + 4nk^2 \cdot f(4k^2,\pi)  +  4nk^2 \cdot f(4k^2,\pi')$.
\end{lemma}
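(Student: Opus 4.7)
The plan is to imitate the Marcus--Tardos blocking argument, replacing the crude per-block bound of $s^2$ by one that exploits the function $f$. Let $M$ be any $n\times n$ binary matrix avoiding $\pi$. Set $s=4k^2$ and $N=n/s$, and partition $M$ into an $N\times N$ grid of $s\times s$ blocks $B_{ij}$; write $V_j$ for the $j$-th vertical strip (all $n$ rows, $s$ consecutive columns) and $H_i$ for the $i$-th horizontal strip. For the ``heavy'' contribution, call a row \emph{$V_j$-heavy} if it contains at least $2k$ ones in $V_j$. The submatrix of $V_j$ induced on its heavy rows has $s$ columns, at least $2k$ ones per row, and still avoids $\pi$; by the very definition of $f$ there are at most $f(4k^2,\pi)$ such rows per strip, each contributing at most $s$ ones inside $V_j$, so summing over the $N$ vertical strips bounds the total heavy-row ones by $n\cdot f(4k^2,\pi)\leq 4nk^2 f(4k^2,\pi)$. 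A symmetric argument on horizontal strips, using the $90^\circ$ rotation $\pi'$ to swap the roles of rows and columns, accounts for the $H_i$-heavy columns by $4nk^2 f(4k^2,\pi')$.

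The remaining 1-entries---those lying simultaneously in a non-heavy row of their vertical strip and a non-heavy column of their horizontal strip---I call \emph{light}, and handle by a signature-based block-matrix argument. To each light 1 at $(r,c)\in B_{ij}$ I assign a signature $(u,v)\in [2k-1]^2$, where $u$ is the rank of $c$ among the ones of row $r$ inside $V_j$ and $v$ is the rank of $r$ among the ones of column $c$ inside $H_i$; both ranks lie in $[2k-1]$ precisely because the containing row/column is non-heavy. For each signature $(u,v)$ define the $N\times N$ binary matrix $B^{uv}$ by $B^{uv}(i,j)=1$ iff $B_{ij}$ contains a light 1 of signature $(u,v)$. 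A pullback argument shows that $B^{uv}$ avoids $\pi$: from any $\pi$-pattern of blocks in $B^{uv}$ one lifts a $\pi$-pattern into $M$ by choosing one light 1 of the required signature per chosen block, and because those blocks sit in distinct block-rows and block-columns the chosen points automatically respect the required row/column order, contradicting that $M$ avoids $\pi$. Hence $|B^{uv}|\leq \textsf{ex}(n/4k^2,\pi)$, and summing over the $(2k-1)^2$ signature classes contributes at most $(2k-1)^2\,\textsf{ex}(n/4k^2,\pi)$ to the light count.

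Combining the bounds yields the claimed recurrence
\[
\textsf{ex}(n,\pi)\leq (2k-1)^2\,\textsf{ex}\!\bigl(n/4k^2,\pi\bigr)+4nk^2 f(4k^2,\pi)+4nk^2 f(4k^2,\pi').
\]
The main obstacle is the last step---passing from signature incidences (counted by $\sum_{u,v}|B^{uv}|$) to actual light 1-entries. Within a single block, two lights sharing a row automatically receive distinct $u$-coordinates and two sharing a column receive distinct $v$-coordinates, so the only potential collision is between two lights at different rows and different columns of the same block. Ruling this out cleanly---either by a tie-broken refinement of the signature that stays inside the alphabet $[2k-1]^2$, or by a direct block-local packing argument that uses the block's own $\pi$-avoidance to cap the number of light 1-entries per block by $(2k-1)^2$ outright---is what ultimately delivers the decisive polynomial-rather-than-exponential coefficient $(2k-1)^2$ in front of $\textsf{ex}(n/4k^2,\pi)$.
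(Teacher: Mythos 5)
Your heavy-row/heavy-column accounting and the claim that each signature matrix $B^{uv}$ avoids $\pi$ are both correct (and essentially match the spirit of the paper's treatment of wide and tall blocks via the function $f$). The genuine gap is precisely the one you flag at the end, and unfortunately neither of your proposed remedies can close it. The number of \emph{light} 1-entries in a single $4k^2\times 4k^2$ block is \emph{not} bounded by $(2k-1)^2$: consider a block containing one 1 per row (so every row is trivially non-heavy in its vertical strip), arranged so that each column receives at most $2k-1$ of them (so every column is non-heavy in its horizontal strip). Such a block carries $4k^2$ light 1-entries, all with first coordinate $u=1$, and there are only $2k-1$ available values of $v$ — so at most $2k-1$ of the $(2k-1)^2$ signature pairs ever occur in the block. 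No tie-breaking scheme confined to the alphabet $[2k-1]^2$ can separate $4k^2$ entries, and the ``block-local packing'' bound of $(2k-1)^2$ per block is simply false. More generally, the bipartite incidence graph of light 1s in a block has both sides of degree less than $2k$ but up to $4k^2$ vertices on each side, so a block can hold $\Theta(k^3)$ light entries.

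The structural reason your decomposition cannot be patched this way is that you classify individual \emph{entries} (heavy if its row or column is strip-heavy, light otherwise), whereas the paper classifies \emph{blocks}: a block is wide if it has at least $2k$ non-empty columns, tall if at least $2k$ non-empty rows, and the key point is that a block which is neither wide nor tall has at most $(2k-1)^2$ ones altogether — so the indicator matrix $M'$ of non-empty, non-wide, non-tall blocks inherits $\pi$-avoidance and gives the $(2k-1)^2\,\textsf{ex}(n/4k^2,\pi)$ term cleanly. Under your entry-level scheme, a wide block whose ones all sit in rows and columns of strip-degree below $2k$ is invisible to your heavy accounting and overflows your light accounting. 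The fix is to switch to the block-level trichotomy: bound the ones of wide blocks by $16k^4\,n'\,f(4k^2,\pi)$ (there are at most $f(4k^2,\pi)$ wide blocks per column-group, by flattening each block to a row of length $4k^2$, and each block has at most $16k^4$ ones), symmetrically for tall blocks with $\pi'$, and feed the remaining blocks into the recursion via $M'$. Your $f$-based reasoning already contains the essential ingredient (flatten a strip and invoke the definition of $f$); it just needs to be applied to block-indicator rows rather than to $M$'s rows directly.
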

\begin{proof}
The proof closely follows~\cite{marcus2004excluded}. 
Let $M$ be a matrix that avoids $\pi$ and $|M| = \textsf{ex}(n,\pi)$. Let $n' = n/(4k^2)$. We divide the columns of $M$ into $n'$ groups of consecutive columns of size $4k^2$, and similarly we divide the rows of $M$ into $n'$ groups of  consecutive rows of size $4k^2$. Let $B_{ij}$ be the submatrix of $M$ formed by $i$-th group of columns and $j$-th group of rows respectively. We can view $M$ as a block matrix $(B_{ij})_{i,j \in [n']}$. Each block has size $(4k^2) \times (4k^2)$. We say that a row (or a column) is \textit{empty} if all entries are all zero. A matrix is \textit{empty} if all rows and columns are empty.

For each block $B_{ij}$,  we say that $B_{ij}$ is \textit{wide} if it contains at least $2k$ non-empty columns. Also, we say that $B_{ij}$ is \textit{tall} if it contains  at least $2k$ non-empty rows. Let $M'$ be the $n' \times n'$ matrix where $M'(i,j) = 1$ if and only if $B_{ij}$ is non-empty, not wide and not tall. 
Let $T$ be the $n' \times n'$ binary matrix where $T(i,j) = 1$ if and only if $B_{ij}$ is tall. Let $W$ be the $n' \times n'$ binary matrix where $W(i,j) = 1$ if and only if $B_{ij}$ is wide. Observe that each non-empty block can be tall or wide or neither, the three matrices $M'$, $T$, and $W$ covers all blocks $B_{ij}$ from $M$. 
More precisely, we have $M'(i,j) = T(i,j) = W(i,j) = 0$ if and only if $B_{ij}$ is empty. Therefore, 
\begin{align}
  |M| \leq (2k-1)^2|M'| + 16k^4|T|+16k^4|W|.  
\end{align}
The coefficient of the term $|M'|$ is $(2k-1)^2$  because the number of ones in a non-wide and non-tall block is at most $(2k-1)^2$.  The coefficient for both $T$ and $W$ is $16k^4$ because every block is $4k^2 \times 4k^2$. It remains to bound the number of 1's in $M',W$ and $T$. 

\paragraph*{The number of 1's in $M'$.}
\begin{claim}
$M'$ avoids $\pi$. Therefore, $|M'| \leq \textsf{ex}(n',\pi)$. 
\end{claim}
\begin{proof}
Suppose $M'$ contains $\pi$.  This means that there is a set of non-empty blocks $\{ B_{ij} \}$ in $M$ such that we can form the pattern $\pi$ by taking one 1's per block in the set. Therefore, $M$ contains $\pi$, a contradiction.
\end{proof}

\paragraph*{The number of 1's in $W$ and in $T$.}
\begin{claim}
$|W| \leq n'f(4k^2, \pi)$, and  $|T| \leq n'f(4k^2,\pi')$.
\end{claim}
\begin{proof}
 Since $W$ has $n'$ columns, it is enough to show that  the number of 1's in each column of $W$ is at most $f(4k^2,\pi)$. 
 We fix an arbitrary $j$-th column of $W$. Let $\ell$ be the number of 1's in the $j$-th column of $W$ (This means there are $\ell$ wide blocks in the $j$-th group of column in $M$.) 
Assume for the sake of contradiction  that $\ell > f(4k^2,\pi)$.

For any matrix $P$, we define the flattening of $P$ (denoted by ${\sf flat}(P)$) as a binary row-vector $v$, where $v(k) = 1$ if and only if $k$-th column of $P$ is non-empty. 
Let $Q_j$ be the binary matrix obtained by flattening of all the blocks in the $j$-th column-group. More formally, matrix $Q_j$ has $(4k^2)$ columns and $n'$ rows where each row $i \in [n']$ is a flattening of $B_{ij}$. 

\begin{observation}
If $Q_j$ contains permutation $\pi$, then $M$ contains $\pi$. 
\end{observation}

Observe that the number of 1's in each row of $Q_j$ is at least $2k$ because $B_{ij}$ is wide.  Since $\ell > f(4k^2,\pi)$, and each row of $Q_j$ has at least $2k$ many 1's,  Definition \ref{def:long vector pattern} implies that $Q_j$ contains $\pi$, which implies that $M$ contains $\pi$, a contradiction.  
\end{proof}
\end{proof}

We are ready to prove \Cref{thm:ex and f}.
\begin{proof} [Proof of \Cref{thm:ex and f}]
Let $T(n) = \textsf{ex}(n,\pi)$, $q = 4k^2$, and $g(q) = f_{2}(q,\pi)+f_{2}(q,\pi')$. By \Cref{lem:recurrence permutation}, we have 
\begin{align*}
    T(n) &\leq (2k-1)^2T(n/q) + nq(g(q)) \\
    & \leq nq(g(q)) + \frac{(2k-1)^2}{q}nq(g(q)) + (\frac{(2k-1)^2}{q})^2 nq(g(q)) + \ldots  \\ 
    & \leq nq(g(q)) (\sum_{i\geq 0}(\frac{(2k-1)^2}{q})^i) \\ 
    &\leq nq(g(q))k.
\end{align*}
The last inequality follows since $(2k-1)^2/q = (2k-1)^2/(4k^2) < 1$. 
\end{proof}

\subsection{An Upper bound for Function $f$} 



For any permutation  $\pi$, denote by ${\sf dleft}(\pi)$ (abbreviation for ``delete from the left'') the permutation  obtained by removing the point (in the matrix form of $\pi$) on the leftmost column as well as its corresponding row and column; for instance, ${\sf dleft}(1,3,4,2) = {\sf dleft}(2,3,4,1) = (2,3,1)$. 
Similarly, we can define ${\sf dright}(\pi)$. 

\begin{theorem}[Reduction rules]
\label{lem: reduction} 
Let $\pi$ be a length-$k$ permutation whose corresponding permutation matrix contains a point on one of the two corners of the first column (i.e. at coordinate $(1,1)$ or $(1,k)$). Then  $f(c,\pi) = O(c) \cdot f(c,\widehat{\pi})$ where $\widehat{\pi}$ is any permutation such that ${\sf dleft}(\pi) = {\sf dleft}(\widehat{\pi})$. Similarly, if $\pi$ contains a point on two corners of the last column (i.e. coordinate $(k,1)$ or $(k,k)$), then   $f(c,\pi) = O(c) \cdot f(c,\widehat{\pi})$ where ${\sf dright}(\pi) = {\sf dright}(\widehat{\pi})$. 
\end{theorem}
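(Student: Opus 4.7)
The plan is to partition the rows of an extremal $\pi$-avoiding matrix $M$ by the position of their leftmost $1$, argue that each resulting group (after removing one ``pivot'' row) must avoid $\pi' := {\sf dleft}(\pi)$, and then convert this $\pi'$-avoidance into a $\widehat{\pi}$-avoidance bound by a small padding trick.

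Assume first $\pi(1) = 1$, so that the point in the first column of $\pi$ lies at the bottom. Let $M$ be an $r \times c$ matrix with $\geq 2k$ ones per row that avoids $\pi$; the goal is to show $r \leq O(c) \cdot f(c, \widehat{\pi})$. For each row $i \in [r]$ let $\ell(i)$ denote the column of its leftmost $1$, and partition the rows into $R_j := \{i : \ell(i) = j\}$ for $j \in [c]$. For each non-empty $R_j$ let $r_j^*$ be its bottom-most row, and set $N_j := M|_{R_j \setminus \{r_j^*\}, (j,c]}$. This submatrix has $|R_j|-1$ rows, $c-j$ columns, and at least $2k-1$ ones per row (since each row of $R_j$ has all of its $\geq 2k$ ones in columns $\geq j$, exactly one of them sitting at column $j$).

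The first key claim is that $N_j$ avoids $\pi' := {\sf dleft}(\pi) = {\sf dleft}(\widehat{\pi})$. Indeed, any $\pi'$-occurrence in $N_j$ uses $k-1$ rows strictly above $r_j^*$ and $k-1$ columns strictly right of $j$; combined with the point $(r_j^*, j) \in M$ placed at the bottom-left of the resulting $k$-point configuration, this realises $\pi$ in $M$ (since $\pi(1) = 1$ and ${\sf dleft}(\pi) = \pi'$), contradicting $\pi$-avoidance. The second step upgrades this into an avoidance of $\widehat{\pi}$ with the correct row-density. Let $N_j'$ be $N_j$ with an all-ones column prepended: it has $c-j+1$ columns, $|R_j|-1$ rows, and $\geq 2k$ ones per row. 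Any $\widehat{\pi}$-occurrence in $N_j'$ either uses the prepended column -- so the remaining $k-1$ points realise $\pi' = {\sf dleft}(\widehat{\pi})$ inside $N_j$, contradicting the first claim -- or avoids it -- so $N_j$ itself contains $\widehat{\pi}$ and hence $\pi'$ as a sub-pattern, again a contradiction. Therefore $N_j'$ avoids $\widehat{\pi}$, and the definition of $f$ gives $|R_j| - 1 \leq f(c-j+1, \widehat{\pi}) \leq f(c, \widehat{\pi})$ by monotonicity of $f$ in its first argument (one may append all-zero columns). Summing,
\[
r = \sum_{j=1}^{c} |R_j| \leq c\,\bigl(f(c,\widehat{\pi}) + 1\bigr) = O(c) \cdot f(c, \widehat{\pi}).
\]

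The case $\pi(1) = k$ is entirely symmetric: choose $r_j^*$ to be the top-most row of $R_j$, so that $(r_j^*, j)$ plays the role of the top-left corner of $\pi$ when combined with a $\pi'$-occurrence in $N_j$ now lying strictly below $r_j^*$. The two right-reducible corner cases follow by reflecting $M$ horizontally and partitioning by the right-most $1$ instead. The main subtlety of the proof is the choice of pivot row -- a naive choice would create a row-collision when combining the putative $\pi'$-occurrence with the column-$j$ point to form $\pi$, and taking the bottom-most (resp.\ top-most) row of $R_j$ is precisely what rules this out. The prepended all-ones column is a minor but essential bookkeeping device to reconcile the two row-density thresholds $2|\pi'|$ and $2|\widehat{\pi}|$ appearing in the definition of $f$.
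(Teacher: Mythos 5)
Your proof is correct, and while it relies on the same core move as the paper's, the decomposition is genuinely different. The paper proves the single-step recurrence $f(c,\pi) \le f(c-1,\pi) + f(c,\widehat{\pi}) + 2$: it finds a pivot point $p$ in the first column among the bottom $f(c-1,\pi)+1$ rows (which must exist, since otherwise dropping column $1$ from those rows contradicts the definition of $f(c-1,\pi)$), extracts a ${\sf dleft}(\widehat{\pi})$-occurrence in columns $\ge 2$ among the top $f(c,\widehat{\pi})+1$ rows, combines the two into a $\pi$-occurrence, and then unrolls the recurrence $O(c)$ times using monotonicity. You instead partition the rows of the extremal matrix globally by the column index of their leftmost $1$, pivot on the extremal row of each group $R_j$, and bound $|R_j|$ directly by $f(c,\widehat{\pi})+1$, replacing the recursion by a single summation over the $c$ groups. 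The shared insight is that a leftmost-column point combined with a ${\sf dleft}(\pi)$-occurrence lying strictly to its upper-right (or lower-right for $\pi(1)=k$) recreates $\pi$; but your argument locates $c$ disjoint row groups in one pass rather than iteratively peeling off the first column, and consequently needs a bookkeeping device the paper avoids: because $N_j$ restricts to columns $>j$, its rows only have $\ge 2k-1$ ones, one short of the threshold $2|\widehat{\pi}|=2k$, so you prepend an all-ones column to form $N_j'$ and show the resulting $\widehat{\pi}$-avoidance. The paper sidesteps this by always feeding full $c$-column row blocks (with $\ge 2k$ ones per row intact) into $f$. Both routes yield the same $O(c)\cdot f(c,\widehat{\pi})$ bound; yours is arguably more transparent since it exhibits the $c$-fold partition explicitly rather than hiding it in an induction.
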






The rest of this section is devoted to proving the proof of~\Cref{lem: reduction}. We prove the case when the permutation matrix of $\pi$ contains $(1,1)$; other cases are symmetric.

\begin{lemma}
$f(c, \pi) \leq f(c-1, \pi) + f(c, \widehat{\pi}) + 2$ 
\end{lemma}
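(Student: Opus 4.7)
The plan is to fix an extremal matrix $M$ realizing $f(c,\pi)$: an $r \times c$ binary matrix avoiding $\pi$ with $r = f(c,\pi)$, each row carrying at least $2k = 2|\pi|$ ones. I would partition the rows of $M$ according to their entry in column $1$: let $R_0$ be the rows with a $0$ there and $R_1$ those with a $1$. Deleting column $1$ from $R_0$ yields an $|R_0| \times (c-1)$ matrix that still avoids $\pi$ and still has $\geq 2k$ ones per row, hence $|R_0| \leq f(c-1,\pi)$ by the definition of $f$.

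The substantive claim is that after removing the bottom-most row $r^*$ of $R_1$, the submatrix on $R_1 \setminus \{r^*\}$ (with all $c$ columns and still $\geq 2k$ ones per row) avoids $\widehat{\pi}$, which would give $|R_1| \leq f(c,\widehat{\pi}) + 1$ and hence, combined with the bound on $|R_0|$, $r \leq f(c-1,\pi) + f(c,\widehat{\pi}) + 1 \leq f(c-1,\pi) + f(c,\widehat{\pi}) + 2$. To prove the claim I would suppose for contradiction that $R_1 \setminus \{r^*\}$ contains a copy of $\widehat{\pi}$ on rows $r_1 < \cdots < r_k$ (all strictly above $r^*$) and columns $c_1 < \cdots < c_k$. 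Let $y^*$ be the row index within $\widehat{\pi}$ that houses its first-column point, so that the other $k-1$ points of the copy lie on columns $c_2,\ldots,c_k$ and rows $\{r_1,\ldots,r_k\} \setminus \{r_{y^*}\}$, realizing a copy of ${\sf dleft}(\widehat{\pi}) = {\sf dleft}(\pi)$.

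Since $M(1,r^*) = 1$, $r^*$ lies strictly below every $r_i$, and $1 < c_2 < \cdots < c_k$, I can then assemble a $\pi$-copy in $M$ by placing $\pi$'s bottom-left-corner first-column point at $(1, r^*)$ and reusing the $k-1$ residual points as the ${\sf dleft}(\pi)$ portion on columns $c_2,\ldots,c_k$. This contradicts $M$'s avoidance of $\pi$, proving the claim. The argument is insensitive to whether $c_1 = 1$ or $c_1 \geq 2$, since column $c_1$ is simply discarded when reassembling $\pi$, and to the value of $y^* \in [k]$, which is exactly the flexibility afforded by ${\sf dleft}(\pi) = {\sf dleft}(\widehat{\pi})$; the symmetric case where $\pi$ has its distinguished corner at $(1,k)$ rather than $(1,1)$ is handled by the mirror argument using the top-most row of $R_1$. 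The main subtlety I expect is verifying that the $k-1$ residual points realize ${\sf dleft}(\pi)$ regardless of which $y^*$ is excised, which follows from the fact that deleting a single row preserves the relative order of the remaining rows, so the submatrix-induced permutation pattern is exactly ${\sf dleft}(\widehat{\pi})$.
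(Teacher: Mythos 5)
Your proof is correct, and it takes a genuinely different (and arguably cleaner) row decomposition than the paper's. The paper works by contradiction on the assumption $r > f(c-1,\pi) + f(c,\widehat{\pi}) + 2$: it splits the rows positionally into the bottom $f(c-1,\pi)+1$ rows and the top $f(c,\widehat{\pi})+1$ rows (disjoint precisely because of the assumed bound on $r$), finds a column-$1$ point $p$ in the bottom block by pigeonhole (otherwise deleting column $1$ contradicts $f(c-1,\pi)$), finds a $\widehat{\pi}$-copy in the top block by definition of $f(c,\widehat{\pi})$, strips off its leftmost point to get a $\mathsf{dleft}(\pi)$-copy supported on columns $\geq 2$, and combines it with $p$. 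You instead partition rows structurally by their column-$1$ entry into $R_0$ and $R_1$: the $R_0$ bound $|R_0|\le f(c-1,\pi)$ is immediate (no contradiction, no disjointness hypothesis), and after peeling the bottom row $r^*$ of $R_1$ you show $R_1\setminus\{r^*\}$ avoids $\widehat{\pi}$ by exactly the same assemble-$\pi$ argument, yielding $|R_1|\le f(c,\widehat{\pi})+1$. The core combinatorial step --- pairing a lower column-$1$ point with a $\mathsf{dleft}(\pi)$-copy on columns $\geq 2$ --- is identical; what your decomposition buys is that the two pieces are disjoint by definition rather than by the size assumption, which makes the accounting additive and direct, and incidentally gives the slightly sharper bound $f(c,\pi)\le f(c-1,\pi)+f(c,\widehat{\pi})+1$ (the paper's $+2$ is a little loose). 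One small caveat worth stating explicitly, which applies to the paper's proof as well: the step $|R_1\setminus\{r^*\}|\le f(c,\widehat{\pi})$ invokes the definition of $f$ for $\widehat{\pi}$ and therefore needs $\geq 2|\widehat{\pi}|$ ones per row; this is fine because $|\widehat{\pi}|=|\pi|=k$ (forced by $\mathsf{dleft}(\pi)=\mathsf{dleft}(\widehat{\pi})$), but it is worth saying.
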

\begin{proof}
Let $M$ be the matrix with $r$ rows and $c$ columns that achieve the bound $f(c,\pi)$, and assume for contradiction that $r > f(c-1, \pi) + f(c,\widehat{\pi}) + 2$. 

Let $M_1$ be the submatrix containing the bottom $f(c-1,\pi) +1$ rows of $M$. This implies that there must be a point $p$ in the first column of $M_1$; otherwise, the submatrix of $M_1$ without its first column would give a contradiction: It has $(c-1)$ columns, $\alpha$ points per row and avoids $\pi$.  

Now let $M_2$ be the $f(c,\widehat{\pi})+1$ top rows of $M$. We know that this submatrix must contain submatrix $\widehat{P}$; moreover, the submatrix of $M_2$ without its first column must contain pattern ${\sf dleft}(\widehat{\pi})$; let us say that the points $Q \subseteq M_2$ induce this pattern. 
This implies that $\{p\} \cup Q$ induces pattern $\pi$, a contradiction. 
\end{proof}

Now \Cref{lem: reduction} follows by simply applying the recurrence at most $c$ times to unfold the term $f(c,\pi)$: 
\[f(c,\pi) \leq f(2k, \pi) + c \cdot f(c,\hat{\pi}) + 2c \] 
Since $f(2k, \pi) = k$, we have that $f(c,\pi) \leq k + c f(c,\widehat{\pi})+ 2c \leq 4c \cdot f(c, \widehat{\pi})$ as desired.  

\subsection{Proof of \Cref{thm:permutation reduction}}

 We first state an important lemma that will be useful for proving \Cref{thm:permutation reduction}. Recall that $Q$ (in the statement of \Cref{thm:permutation reduction}) is a $k$-linear permutation. Note that $Q'$  denotes the rotation of $Q$ by 90 degrees. 
\begin{lemma} \label{lem: f Q small}
$f(c,Q)$ and $f(c,Q') \leq c$. 
\end{lemma}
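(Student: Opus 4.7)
The plan is to prove both bounds by a single direct double-counting argument that combines the definition of $f$ with the $k$-linearity hypothesis.

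For the bound on $f(c,Q)$: fix any matrix $M$ witnessing $f(c,Q) \geq r$. By Definition~\ref{def:long vector pattern}, $M$ has $r$ rows and $c$ columns, $M$ avoids $Q$, and each of its rows carries at least $2|Q| = 2k$ ones. Counting the ones in $M$ in two ways yields
\[
2kr \;\leq\; |M| \;\leq\; \textsf{ex}(r,c,Q) \;\leq\; k(r+c),
\]
where the last inequality is exactly the $k$-linearity of $Q$ (with the hidden constant absorbed into the definition). Rearranging gives $r \leq c$, which is the claimed bound on $f(c,Q)$.

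For $f(c,Q')$, I would first observe that $Q'$ is itself $k$-linear. A 90-degree rotation is a bijection from $r$-by-$c$ binary matrices to $c$-by-$r$ binary matrices that preserves the number of ones and maps $Q$-avoiders bijectively to $Q'$-avoiders, since the rotation applied simultaneously to a matrix and a pattern is an isomorphism of the submatrix-containment relation. Hence $\textsf{ex}(r,c,Q') = \textsf{ex}(c,r,Q) \leq k(c+r)$, so $Q'$ inherits $k$-linearity from $Q$. Repeating the counting argument verbatim with $Q'$ in place of $Q$ then gives $f(c,Q') \leq c$.

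The argument is essentially mechanical once both definitions are unpacked, so I do not anticipate any genuine obstacle; the only subtlety is book-keeping of the constant hidden in the $O(\cdot)$ of the $k$-linearity definition. If that constant is larger than $1$, the argument formally yields $f(c,Q), f(c,Q') = O(c)$ rather than the literal $\leq c$, but this is harmless because \Cref{thm:ex and f} only uses these quantities up to multiplicative constants to derive the polynomial-in-$k$ bound on $\textsf{ex}(n,P)$.
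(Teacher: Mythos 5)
Your proof is correct and follows essentially the same double-counting argument as the paper: bound the number of ones in a witness matrix from below by $2kr$ and from above by $\textsf{ex}(r,c,Q) \leq k(r+c)$ using $k$-linearity, then rearrange. You are in fact slightly more careful than the paper about the hidden constant in the $O(\cdot)$ of the $k$-linearity definition and about spelling out why $Q'$ inherits $k$-linearity via rotation; the paper just says ``the proof for $f(c,Q')$ is similar,'' so this extra detail is a harmless improvement.
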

\begin{proof}
 Let $r = f(c,Q)$. Let $M$ be a matrix with $c$ columns and $r$ rows and avoids $Q$.  Observe that $$  2 k r \leq |M| \leq \textsf{ex}(r,c, Q).$$ 
The LHS holds since we have $2k$ ones per row in $M$. The RHS follows because $M$ avoids $Q$. Since $Q$ is $k$-linear, $\textsf{ex}(r,c, Q) \leq k(r+c)$. So, $2 k r  \leq k(r+c)$, and we have $r \leq c$. The proof for $f(c,Q')$ is similar. 
\end{proof}

We are now ready to prove \Cref{thm:permutation reduction}. 
\begin{proof} [Proof of \Cref{thm:permutation reduction}]

 we have the following. 
\begin{align*}
    \textsf{ex}(n,P) &= O(nk^3 (f(4k^2,P)+f(4k^2,P'))) \\ 
     &= O(nk^3(f(4k^2,Q)\cdot (4k^2)^t + f(4k^2,Q')\cdot (4k^2)^t)) \\
     &= O(nk^3 (4k^2)^{t+1})\\
     &= O(n4^{t}k^{2t+5}).
\end{align*}
The first inequality follows from \Cref{thm:ex and f}. The second inequality follows from \Cref{lem: reduction} and the fact that $P$ is reducible to $Q$ in $t$ steps. The third inequality follows from \Cref{lem: f Q small}. 
\end{proof}

\section*{Acknowledgement}
This project has received funding from the European Research Council (ERC) under the European Union's Horizon 2020 research and innovation programme under grant agreement No 759557.  
   
\bibliographystyle{alpha}
\bibliography{references}

\newcommand{\etalchar}[1]{$^{#1}$}
\begin{thebibliography}{CGK{\etalchar{+}}15b}

\bibitem[BC21]{BrualdiC21}
Richard~A. Brualdi and Lei Cao.
\newblock Pattern-avoiding (0, 1)-matrices and bases of permutation matrices.
\newblock {\em Discret. Appl. Math.}, 304:196--211, 2021.

\bibitem[BCDI07]{buadoiu2007unified}
Mihai B{\u{a}}doiu, Richard Cole, Erik~D Demaine, and John Iacono.
\newblock A unified access bound on comparison-based dynamic dictionaries.
\newblock {\em Theoretical Computer Science}, 382(2):86--96, 2007.

\bibitem[BDIL16]{bose2016power}
Prosenjit Bose, Karim Dou{\"\i}eb, John Iacono, and Stefan Langerman.
\newblock The power and limitations of static binary search trees with lazy
  finger.
\newblock {\em Algorithmica}, 76(4):1264--1275, 2016.

\bibitem[CGK{\etalchar{+}}15a]{DBLP:journals/corr/ChalermsookG0MS15a}
Parinya Chalermsook, Mayank Goswami, L{\'{a}}szl{\'{o}} Kozma, Kurt Mehlhorn,
  and Thatchaphol Saranurak.
\newblock Greedy is an almost optimal deque.
\newblock {\em CoRR}, abs/1506.08319, 2015.

\bibitem[CGK{\etalchar{+}}15b]{chalermsook2015pattern}
Parinya Chalermsook, Mayank Goswami, L{\'a}szl{\'o} Kozma, Kurt Mehlhorn, and
  Thatchaphol Saranurak.
\newblock Pattern-avoiding access in binary search trees.
\newblock In {\em 2015 IEEE 56th Annual Symposium on Foundations of Computer
  Science}, pages 410--423. IEEE, 2015.

\bibitem[CGK{\etalchar{+}}16]{chalermsook2016landscape}
Parinya Chalermsook, Mayank Goswami, L{\'a}szl{\'o} Kozma, Kurt Mehlhorn, and
  Thatchaphol Saranurak.
\newblock The landscape of bounds for binary search trees.
\newblock {\em arXiv preprint arXiv:1603.04892}, 2016.

\bibitem[CGK{\etalchar{+}}18]{chalermsook2018multi}
Parinya Chalermsook, Mayank Goswami, L{\'a}szl{\'o} Kozma, Kurt Mehlhorn, and
  Thatchaphol Saranurak.
\newblock Multi-finger binary search trees.
\newblock {\em arXiv preprint arXiv:1809.01759}, 2018.

\bibitem[CH93]{ChaudhuriH93}
Ranjan Chaudhuri and Hartmut H{\"{o}}ft.
\newblock Splaying a search tree in preorder takes linear time.
\newblock {\em {SIGACT} News}, 24(2):88--93, 1993.

\bibitem[Cib13]{cibulka2013extremal}
Josef Cibulka.
\newblock Extremal combinatorics of matrices, sequences and sets of
  permutations.
\newblock 2013.

\bibitem[CK17]{CibulkaK17}
Josef Cibulka and Jan Kyncl.
\newblock Better upper bounds on the f{\"{u}}redi-hajnal limits of
  permutations.
\newblock In {\em {SODA}}, pages 2280--2293. {SIAM}, 2017.

\bibitem[CMSS00]{cole2000dynamic1}
Richard Cole, Bud Mishra, Jeanette Schmidt, and Alan Siegel.
\newblock On the dynamic finger conjecture for splay trees. part i: Splay
  sorting log n-block sequences.
\newblock {\em SIAM Journal on Computing}, 30(1):1--43, 2000.

\bibitem[Col00]{cole2000dynamic2}
Richard Cole.
\newblock On the dynamic finger conjecture for splay trees. part ii: The proof.
\newblock {\em SIAM Journal on Computing}, 30(1):44--85, 2000.

\bibitem[DHI{\etalchar{+}}09]{demaine2009geometry}
Erik~D Demaine, Dion Harmon, John Iacono, Daniel Kane, and Mihai
  Patra{\c{s}}cu.
\newblock The geometry of binary search trees.
\newblock In {\em Proceedings of the twentieth annual ACM-SIAM symposium on
  Discrete algorithms}, pages 496--505. SIAM, 2009.

\bibitem[DHIP07]{demaine2007dynamic}
Erik~D Demaine, Dion Harmon, John Iacono, and Mihai Patra{\c{s}}cu.
\newblock Dynamic optimality—almost.
\newblock {\em SIAM Journal on Computing}, 37(1):240--251, 2007.

\bibitem[DIL{\"O}13]{demaine2013combining}
Erik~D Demaine, John Iacono, Stefan Langerman, and {\"O}zg{\"u}r {\"O}zkan.
\newblock Combining binary search trees.
\newblock In {\em International Colloquium on Automata, Languages, and
  Programming}, pages 388--399. Springer, 2013.

\bibitem[DS09]{derryberry2009skip}
Jonathan~C Derryberry and Daniel~D Sleator.
\newblock Skip-splay: Toward achieving the unified bound in the bst model.
\newblock In {\em Workshop on Algorithms and Data Structures}, pages 194--205.
  Springer, 2009.

\bibitem[Elm04]{elmasry2004sequential}
Amr Elmasry.
\newblock On the sequential access theorem and deque conjecture for splay
  trees.
\newblock {\em Theoretical Computer Science}, 314(3):459--466, 2004.

\bibitem[FH92]{FurediH92}
Zolt{\'{a}}n F{\"{u}}redi and P{\'{e}}ter Hajnal.
\newblock Davenport-schinzel theory of matrices.
\newblock {\em Discret. Math.}, 103(3):233--251, 1992.

\bibitem[Fox11]{fox2011upper}
Kyle Fox.
\newblock Upper bounds for maximally greedy binary search trees.
\newblock In {\em Workshop on Algorithms and Data Structures}, pages 411--422.
  Springer, 2011.

\bibitem[Fox13]{fox2013stanley}
Jacob Fox.
\newblock Stanley-wilf limits are typically exponential.
\newblock {\em arXiv preprint arXiv:1310.8378}, 2013.

\bibitem[GG19]{goyal2019better}
Navin Goyal and Manoj Gupta.
\newblock Better analysis of binary search tree on decomposable sequences.
\newblock {\em Theoretical Computer Science}, 776:19--42, 2019.

\bibitem[HIM13]{howat2013fresh}
John Howat, John Iacono, and Pat Morin.
\newblock The fresh-finger property.
\newblock {\em arXiv preprint arXiv:1302.6914}, 2013.

\bibitem[Iac01]{iacono2001alternatives}
John Iacono.
\newblock Alternatives to splay trees with o (log n) worst-case access times.
\newblock In {\em Proceedings of the twelfth annual ACM-SIAM symposium on
  Discrete algorithms}, pages 516--522, 2001.

\bibitem[Iac13]{iacono2013pursuit}
John Iacono.
\newblock In pursuit of the dynamic optimality conjecture.
\newblock In {\em Space-Efficient Data Structures, Streams, and Algorithms},
  pages 236--250. Springer, 2013.

\bibitem[IL16]{iacono2016weighted}
John Iacono and Stefan Langerman.
\newblock Weighted dynamic finger in binary search trees.
\newblock In {\em Proceedings of the twenty-seventh annual ACM-SIAM symposium
  on Discrete algorithms}, pages 672--691. SIAM, 2016.

\bibitem[Koz16]{kozma2016binary}
L{\'a}szl{\'o} Kozma.
\newblock Binary search trees, rectangles and patterns.
\newblock 2016.

\bibitem[KS19]{kozma2019smooth}
L{\'a}szl{\'o} Kozma and Thatchaphol Saranurak.
\newblock Smooth heaps and a dual view of self-adjusting data structures.
\newblock {\em SIAM Journal on Computing}, 49(5):STOC18--45, 2019.

\bibitem[LT19]{levy2019splaying}
Caleb~C Levy and Robert~E Tarjan.
\newblock Splaying preorders and postorders.
\newblock In {\em Workshop on Algorithms and Data Structures}, pages 510--522.
  Springer, 2019.

\bibitem[Luc88]{lucas1988canonical}
Joan~Marie Lucas.
\newblock {\em Canonical forms for competitive binary search tree algorithms}.
\newblock Rutgers University, Department of Computer Science, Laboratory for
  Computer~…, 1988.

\bibitem[Luc92]{lucas1992competitiveness}
Joan~M Lucas.
\newblock On the competitiveness of splay trees: Relations to the union-find
  problem.
\newblock {\em On-line Algorithms}, 7:95--124, 1992.

\bibitem[MT04]{marcus2004excluded}
Adam Marcus and G{\'a}bor Tardos.
\newblock Excluded permutation matrices and the stanley--wilf conjecture.
\newblock {\em Journal of Combinatorial Theory, Series A}, 107(1):153--160,
  2004.

\bibitem[Mun00]{munro2000competitiveness}
J~Ian Munro.
\newblock On the competitiveness of linear search.
\newblock In {\em European Symposium on Algorithms}, pages 338--345. Springer,
  2000.

\bibitem[Pet08]{pettie2007splay}
Seth Pettie.
\newblock Splay trees, davenport-schinzel sequences, and the deque conjecture.
\newblock In {\em Proceedings of the nineteenth annual ACM-SIAM symposium on
  Discrete algorithms}, pages 1115--1124, 2008.

\bibitem[Pet10]{pettie2010applications}
Seth Pettie.
\newblock Applications of forbidden 0--1 matrices to search tree and path
  compression-based data structures.
\newblock In {\em Proceedings of the twenty-first annual ACM-SIAM symposium on
  Discrete algorithms}, pages 1457--1467. SIAM, 2010.

\bibitem[Pet11]{DBLP:journals/jct/Pettie11}
Seth Pettie.
\newblock Generalized davenport-schinzel sequences and their 0-1 matrix
  counterparts.
\newblock {\em J. Comb. Theory, Ser. {A}}, 118(6):1863--1895, 2011.

\bibitem[ST85]{sleator1985self}
Daniel~Dominic Sleator and Robert~Endre Tarjan.
\newblock Self-adjusting binary search trees.
\newblock {\em Journal of the ACM (JACM)}, 32(3):652--686, 1985.

\bibitem[Sun92]{sundar1992deque}
Rajamani Sundar.
\newblock On the deque conjecture for the splay algorithm.
\newblock {\em Combinatorica}, 12(1):95--124, 1992.

\bibitem[Tar85]{tarjan1985sequential}
Robert~Endre Tarjan.
\newblock Sequential access in splay trees takes linear time.
\newblock {\em Combinatorica}, 5(4):367--378, 1985.

\bibitem[Wan06]{multisplay}
Chengwen~Chris Wang.
\newblock Multi-splay trees.
\newblock 2006.

\end{thebibliography}

\appendix

\section{Counterexamples} \label{sec: counter examples}

\begin{figure}[ht]
  \centering
  \includegraphics[scale=0.6]{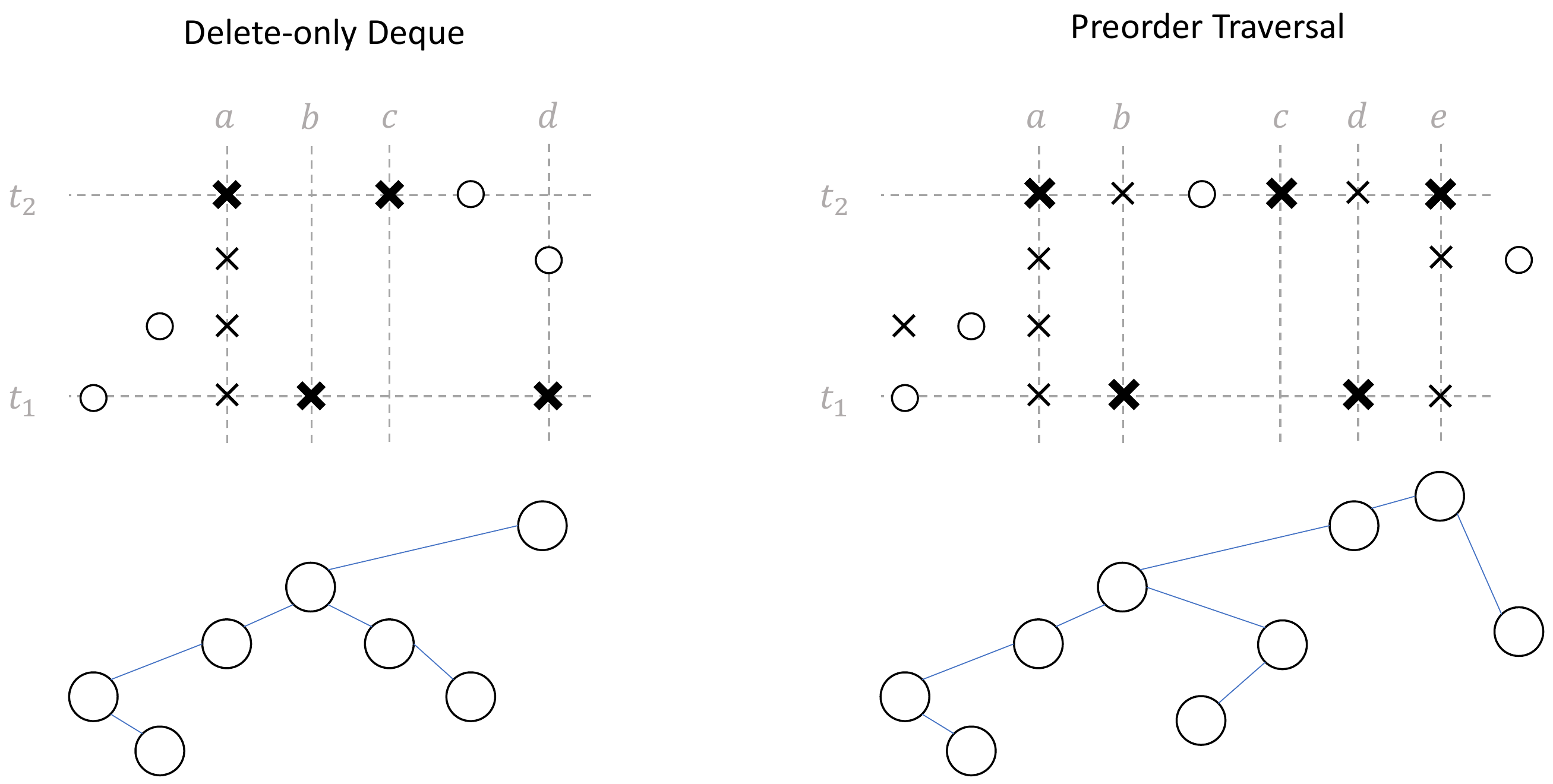}
  \caption{(Left) Greedy on Delete-only Deque Sequences with initial tree below time $t_1$. (Right) Greedy on Preorder Traversal with initial tree below time $t_1$  .}
\label{fig:ce}
\end{figure}

 The counter examples are shown in  \Cref{fig:ce}. The definition of Greedy is discussed in \Cref{section: short-prelims}. In  \Cref{fig:ce} (left), the points $(a,t_2),(b,t_1), (c,t_2)$ and $, (d,t_1)$ form the pattern   $\kbordermatrix{ &  &  & &   \\ & 1 & &1 & \\  &   & 1& & 1}$. In  \Cref{fig:ce} (right), the points $(a,t_2),(b,t_1),(c,t_2),(d,t_1)$ and $(e,t_2)$ form the pattern $\kbordermatrix{ &  &  & & &  \\ & 1 & &1 && 1 \\  &   & 1& & 1 &}$. 

\end{document}